\newcommand{\E}{\mathbb{E}}    
\renewcommand{\P}{\mathbb{P}}  
\newcommand{\R}{\mathbb{R}}    
\newcommand{\I}{\mathbb{I}} 
\newcommand{\uarg}{\,\cdot\,}  
\newcommand{\M}{\mathbf{M}}    
\newcommand{\osc}{\operatorname{osc}} 
\newcommand{\ud}{\mathrm{d}}   
\renewcommand{\vec}[1]{\mathbf{#1}} 
\newcommand{\given}{\,:\,}    
\newcommand{\G}{\mathcal{G}} 
\newcommand{\X}{\mathsf{X}} 
\newcommand{\tv}[1]{\left \| #1 \right \|_{\rm TV}}
\newcommand{\HD}{\mathcal{H}} 
\newcommand{\fsp}{\zeta} 
\newcommand{\proj}{\mathrm{proj}}
\newdimen\slantmathcorr
\def\oversl#1{
\setbox0=\hbox{$#1$}
\slantmathcorr=\wd0
\hskip 0.2\slantmathcorr \overline{\hbox to 0.8\wd0{%
\vphantom{\hbox{$#1$}}}}
\hskip-\wd0\hbox{$#1$}
}
\def\undersl#1{
\setbox0=\hbox{$#1$}
\slantmathcorr=\wd0
\underline{\hbox to 0.8\wd0{%
\vphantom{\hbox{$#1$}}}}
\hskip-0.8\wd0\hbox{$#1$}
}
\newcommand{\Mlo}{\undersl{M}}
\newcommand{\Glo}{\undersl{G}}
\newcommand{\Mhi}{\oversl{M}}
\newcommand{\Ghi}{\oversl{G}}
\newtheorem{theorem}{Theorem}
\newtheorem{corollary}[theorem]{Corollary}
\newtheorem{proposition}[theorem]{Proposition}
\newtheorem{lemma}[theorem]{Lemma}
\newtheorem{assumption}{Assumption}
\newtheorem*{assumption*}{Assumption}
\newtheorem*{definition*}{Definition}
\theoremstyle{definition}
\theoremstyle{remark}
\newtheorem{remark}[theorem]{Remark}
\title{Mixing time of the conditional backward sampling particle filter}
\author{Joona Karjalainen}
\address{Joona Karjalainen, Department of Mathematics and Statistics, University of
Jyväskylä P.O.Box 35, FI-40014 University of Jyväskylä, Finland}
\author{Anthony Lee}
\address{Anthony Lee, School of Mathematics, University of Bristol,
  Woodland Rd, Bristol BS8 1UG, United Kingdom}
\author{Sumeetpal S. Singh}
  \address{Sumeetpal S.~Singh, 
NIASRA, School of Mathematics and Applied Statistics, University of Wollongong, NSW 2522, Australia}
\author{Matti Vihola}
\address{Matti Vihola, Department of Mathematics and Statistics, University of
Jyväskylä P.O.Box 35, FI-40014 University of Jyväskylä, Finland}
\subjclass{Primary 60J22; secondary 65C05, 65C40, 65C35, 62M05}
\keywords{ancestor sampling, coupling, Markov chain Monte Carlo, maximum likelihood, mixing time, particle Gibbs, smoothing, state space model, unbiased estimation}
\begin{document}

\begin{abstract}
  The conditional backward sampling particle filter (CBPF) is a powerful Markov chain Monte Carlo sampler for general state space hidden Markov model (HMM) smoothing. It was proposed as an improvement over the conditional particle filter (CPF), which has an $O(T^2)$ complexity under a general `strong' mixing assumption, where $T$ is the time horizon. Empirical evidence of the superiority of the CBPF over the CPF has never been theoretically quantified. We show that the CBPF has $O(T \log T)$ time complexity under strong mixing: its mixing time is upper bounded by $O(\log T)$, for any sufficiently large number of particles $N$ independent of $T$. This $O(\log T)$ mixing time is optimal. To prove our main result, we introduce a novel coupling of two CBPFs, which employs a maximal coupling of two particle systems at each time instant.  The coupling is implementable and we use it to construct unbiased, finite variance, estimates of functionals which have arbitrary dependence on the latent state's path, with a total expected cost of $O(T \log T)$. We use this to construct unbiased estimates of the HMM's score function, 
  and also investigate other couplings which can exhibit improved behaviour. We demonstrate our methods on financial and calcium imaging applications. 
\end{abstract}

\maketitle

\section{Introduction}

General hidden Markov model (HMM; also known as non-linear/non-Gaussian state space model) smoothing is a common inference task in time-series analysis and in various engineering and machine learning applications \citep[e.g.][]{durbin-koopman,chopin-papaspiliopoulos,cappe-moulines-ryden}. The conditional particle filter (CPF) introduced in the seminal paper of \cite{andrieu-doucet-holenstein} defines a Markov chain Monte Carlo (MCMC) update, or transition, that targets the smoothing distribution. The CPF algorithm is similar to the particle filter \citep{gordon-salmond-smith}, but in the CPF, the `particles' are auxiliary variables within one MCMC update. The \emph{mixing time} of this MCMC transition, that is, the number of updates needed to produce a sample that is $\epsilon$-close in distribution to the target, can be improved arbitrarily by increasing the number of particles $N$ \citep{andrieu-lee-vihola,lindsten-douc-moulines}. However, this increases the cost: one iteration of the CPF has $O(NT)$ computational complexity, where $T$ is the time horizon or length of the data record.

In a discussion of the original paper, \cite{whiteley-backwards-note} suggested that backward sampling can be used within the CPF while keeping the transition in detailed balance with the smoothing distribution. This relatively small algorithmic modification of the CPF, which we call the conditional backward sampling particle filter (CBPF, Algorithm \ref{alg:cbpf}), has a dramatically improved performance over the CPF. In particular, the CBPF update is known to remain efficient with an increasing time horizon $T$ with a fixed number of particles $N$ \citep{lee-singh-vihola}, in contrast with the CPF which generally requires $N = O(T)$ particles \citep{andrieu-lee-vihola,lindsten-douc-moulines} to remain efficient, ensuring $O(1)$ mixing time. The particle Gibbs with ancestor sampling algorithm \citep{lindsten-jordan-schon} is probabilistically equivalent to the CBPF for models of the form \eqref{eq:feynman-kac-pi}, and therefore enjoys the same performance benefits as the CBPF does.

The CBPF mixing time upper bound established in \citep{lee-singh-vihola} is $O(T)$, which implies an overall computational cost upper bound $O(T^2)$ to produce a sample that is $\epsilon$-close in distribution to the target since the cost of running the CBPF is $O(NT)$. This is the same order of cost as the CPF using $N=O(T)$ particles, which is the number of particles needed for the CPF to remain effective as $T$ increases \citep{andrieu-lee-vihola,lindsten-douc-moulines}.  Thus, the result in  \citep{lee-singh-vihola} does not fully capture the numerous empirical findings that the CBPF can be far more efficient than the CPF \citep[e.g.][]{lindsten-schon,lindsten-jordan-schon,chopin-papaspiliopoulos}. The main result of this paper is a substantially improved $O(\log T)$ mixing time upper bound for the CBPF (Theorem \ref{thm:cbpf-mixing} in Section \ref{sec:cbpf}), under a (strong mixing) condition similar to \citep{lee-singh-vihola}. Hence, we find that the computational complexity is $O(T \log T)$, a significant improvement over $O(T^2)$. Furthermore, the $O(\log T)$ mixing time is tight, as illustrated by Proposition \ref{prop:logTexample}. 

We prove the mixing time result by analysing a novel and implementable coupling of two CBPFs (Algorithm \ref{alg:mc_cbpf} in Section \ref{sec:coupling-cbpf}), whose meeting time has an $O(\log T)$ upper bound (Corollary \ref{cor:coupling-time} of Theorem \ref{thm:main-contraction}). Because the coupling (Algorithm \ref{alg:mc_cbpf}) can be implemented in practice, it can be used for unbiased estimation of smoothing expectations, which we discuss in Section \ref{sec:unbiased}, along with other coupling variants. The methods we investigate complement and improve upon earlier couplings proposed for this same estimation task \citep{jacob-lindsten-schon,lee-singh-vihola}. 
Our $O(\log T)$ coupling time bound (Corollary \ref{cor:coupling-time}) ensures that unbiased, finite variance estimators can also be produced with $O(T\log T)$ cost. 
We illustrate the behaviour of the coupling algorithms with four HMMs in Section \ref{sec:experiments}. In Section \ref{sec:mle}, we construct unbiased estimates of the HMM's score function, leading to stochastic gradient maximum likelihood estimation for the stochastic volatility model with leverage. Another real-data application in Section \ref{sec:fluorescence} is about calcium fluorescence imaging for inferring neural spike trains. The paper then concludes with a discussion in Section \ref{sec:discussion}. 

Finally, Appendix \ref{app:cpf-theory} and Appendix \ref{app:contraction} are devoted to intermediate results which lead to the proof of Theorem \ref{thm:main-contraction} on the meeting time of our novel coupling of CBPFs. Appendix \ref{app:cpf-theory} contains important (but technical) properties of the CPF. Appendix \ref{app:contraction} contains the main argument of the proof, which captures the `progressive' nature of the coupling: the expected number of uncoupled states decreases geometrically over iterations, as long as the number of particles is large enough (with respect to the strong mixing constants).

\section{Conditional backward sampling particle filter}
\label{sec:cbpf}

Throughout the paper, we use the notation $a{:}b$ for integers from $a$ to $b$ (inclusive), and denote in short $x_{a:b} = (x_a,\ldots,x_b)$ and $x^{a:b} = (x^a,\ldots,x^b)$. A hidden Markov model (HMM) consists of a latent (unobserved) Markov chain $X_{1:T}$ where each $X_t$ takes values on a general state space $\X$, and observable random variables $Y_{1:T}$ which are conditionally independent given $X_{1:T}$. The model is defined in terms of the initial density\footnote{While our results are stated for densities, they hold also for the discrete case with probability masses, and more generally: the densities can be with respect to any $\sigma$-finite dominating measure.} $X_1\sim M_1(\uarg)$ and the Markov transition densities $X_t \sim M_t(X_{t-1},\uarg)$ for $t=2,\ldots,T$. The observations $Y_t\mid (X_t=x_t)$ have conditional densities $g_t(x_t,y_t)$.
From hereon, we assume the observations $y_{1:T}$ are fixed, and define $G_t(x_t) = g_t(x_t, y_t)$. This is the Feynman--Kac representation of the HMM \citep{del2004feynman}, which abstracts $(G_t)$ to be any sequence of non-negative functions as long as the probability distribution $\pi$ with density (also denoted by $\pi$):
\begin{equation}
   \pi(x_{1:T}) \propto M_1(x_1) G_1(x_1)\prod_{t=2}^T M_t(x_{t-1},x_t) G_t(x_t)
   \label{eq:feynman-kac-pi}
\end{equation}
is well-defined. In the HMM setting, $\pi(x_{1:T}) = p(x_{1:T}\mid y_{1:T})$, which is commonly known as the smoothing distribution associated with the HMM. However, models of the same form \eqref{eq:feynman-kac-pi} also emerge beyond the HMM setting, for instance, in the context of time-discretised path integral models \citep[cf.][]{karppinen-singh-vihola}.

The CBPF defines a Markov transition $\X^T \to \X^T$ which has $\pi$ as its stationary distribution.
\begin{algorithm}
  \caption{\textsc{CBPF}($x_{1:T}^*,N$).}
  \label{alg:cbpf}
  \begin{algorithmic}[1]
      \State $X_{1:T}^{0} \gets x_{1:T}^*$ \label{line:cbpf-set-ref}
      \State $X_1^{i}\sim M_1(\uarg)$ \label{line:cbpf-forward-start} \Comment{for $i\in\{1{:}N\}$}
      \State $W_1^{i} \gets G_1(X_1^{i})$ \Comment{for $i\in\{0{:}N\}$}
      \For{$t=2,\ldots,T$}
      \State 
      $X_t^{i} \sim 
      \sum_{k=0}^N \frac{W_{t-1}^k}{\sum_{j=0}^N W_{t-1}^j} M_t(X_{t-1}^k, \uarg)$ \label{line:mixture-transition} \Comment{for $i\in\{1{:}N\}$}
      \State $W_t^{i} \gets G_t(X_t^{i})$  
      \label{line:cbpf-forward-end}
      \Comment{for $i\in\{0{:}N\}$} 
      \EndFor
      \State $J_T  \sim \mathrm{Categorical}(W_T^{0:N})$
      \For{$t=T-1,T-2,\ldots,1$}
      \State $B_t^{i} \gets W_t^{i} M_{t+1}(X_t^{i}, X_{t+1}^{J_{t+1}})$ \label{line:cbpf-backward-sample1} \Comment{for $i\in\{0{:}N\}$}
      \State $J_t\sim \mathrm{Categorical}(B_t^{0:N})$ \label{line:cbpf-backward-sample2} 
      \EndFor
      \State \textbf{output}
      $(X_1^{J_1},\ldots,X_T^{J_T})$
  \end{algorithmic}
\end{algorithm}
In Algorithm \ref{alg:cbpf}, and throughout this work, $\mathrm{Categorical}(\omega^{0:N})$ stands for the categorical distribution with probabilities \emph{proportional to} the weights $\omega^{0:N}\ge 0$, that is, $I \sim \mathrm{Categorical}(\omega^{0:N})$ satisfy $\P(I=i) = \omega^i/\sum_{j=0}^N \omega^j$ for $i\in\{0{:}N\}$. In algorithmic descriptions, the symbol $\sim$ means that the random variable on the left follows the conditional distribution defined on the right, and is conditionally independent of other random variables generated earlier.

Our theoretical results for Algorithm \ref{alg:cbpf} are stated under the following `strong mixing' assumption.
\begin{assumption}
  \label{a:strong-mixing}
  There exist constants $0 < \Mlo \le \Mhi <\infty$ and
  $0 < \Glo \le \Ghi <\infty$ such that for all $x,x' \in \X$:
  \begin{itemize}
    \item $(M)$: $\Mlo \le M_1(x)\le \Mhi$ and $\Mlo \le M_t(x,x') \le \Mhi$ for $t=2,\ldots,T$;
    \item $(G)$: $\Glo \le G_{t}(x) \le \Ghi$ for $t=1,\ldots,T$.
  \end{itemize}
\end{assumption}
This assumption, or closely related variants, is common in the theory of particle filters \citep[e.g.][]{del2004feynman} as well as earlier results for the CBPF \citep{chopin-singh,lee-singh-vihola}. In many applications, it can be satisfied when we restrict the model to a compact $\X\subset\R^d$ without materially affecting the inference.

Hereafter, we denote the total variation distance between two probability measures $P$ and $Q$ as $\tv{P-Q} = \sup_A | P(A) - Q(A) |$, where the supremum is over all events $A$.
For the remainder of this section, let $N\ge 1$ be fixed, let $\mathbf{S}_0 = x_{1:T}^* \in \X^T$ be an arbitrary initial path, and define the iterates of the CBPF Markov transition as
$\mathbf{S}_k \gets \textsc{CBPF}(\mathbf{S}_{k-1}, N)$ for $k=1,2,\ldots$, and denote by $\mathrm{Law}_N(\mathbf{S}_k)$ the distribution of $\mathbf{S}_k$. 

\begin{theorem}
\label{thm:cbpf-mixing}
There exist finite constants $N_{\rm min}$ and $c_r$ which depend only on the constants in \textup{\ref{a:strong-mixing}}, such that for all $N\ge N_{\rm min}$, any $T\ge 1$, any initial state $x_{1:T}^*\in\X^T$ and $k\ge 1$, the following upper bound for the total variation distance holds:
\begin{equation}
\| \mathrm{Law}_N(\mathbf{S}_k) - \pi \|_\mathrm{TV} \le T r_N^{-k},
\qquad\text{where}\qquad
   r_N \ge c_r \frac{\sqrt{N}}{\log^2(N)}.
   \label{eq:r-rate}
\end{equation}
\end{theorem}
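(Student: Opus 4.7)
The plan is to derive Theorem \ref{thm:cbpf-mixing} as a direct consequence of the coupling constructed in Section \ref{sec:coupling-cbpf} and the contraction estimate announced in Theorem \ref{thm:main-contraction}. The overall strategy has three short steps: pass from total variation to a non-coincidence probability via the coupling inequality, dominate that probability by the expected Hamming distance between the two coupled paths, and then iterate the one-step geometric contraction given by Theorem \ref{thm:main-contraction}.

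Concretely, I would run two CBPF chains in parallel using the maximal coupling of Algorithm \ref{alg:mc_cbpf}, with $\mathbf{S}_0 = x_{1:T}^*$ and $\mathbf{S}_0' \sim \pi$. By construction both marginals are CBPF chains, and since $\pi$ is stationary for the CBPF, $\mathbf{S}_k' \sim \pi$ for every $k$, so the standard coupling inequality gives $\tv{\mathrm{Law}_N(\mathbf{S}_k) - \pi} \le \P(\mathbf{S}_k \neq \mathbf{S}_k')$. Writing $D_k := \#\{t \in \{1{:}T\} : S_{k,t} \neq S_{k,t}'\}$ for the number of coordinates at which the paths disagree, we have $\{\mathbf{S}_k \neq \mathbf{S}_k'\} = \{D_k \ge 1\}$, so Markov's inequality yields $\P(\mathbf{S}_k \neq \mathbf{S}_k') \le \E[D_k]$. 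Theorem \ref{thm:main-contraction} asserts that, for $N \ge N_{\rm min}$ (with $N_{\rm min}$ depending only on the strong-mixing constants of \ref{a:strong-mixing}), the expected number of uncoupled coordinates contracts by a factor of at least $r_N$ per iteration; iterating this one-step estimate and using the trivial bound $D_0 \le T$ produces $\E[D_k] \le T r_N^{-k}$, completing the proof.

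The main obstacle is of course Theorem \ref{thm:main-contraction} itself, whose proof occupies Appendix \ref{app:contraction}; the statement of Theorem \ref{thm:cbpf-mixing} is essentially a packaging of that contraction through the coupling inequality. The contraction argument must show that, conditionally on the inputs to a single CBPF step (the reference paths, the common auxiliary randomness, and the configuration of the coupled particle cloud), the expected number of disagreements in the output path is smaller by a factor $r_N$ than the number of current disagreements. The $\sqrt{N}/\log^2(N)$ rate suggests that the heart of that estimate is a concentration bound for the normalised categorical weights in the forward sweep and the backward-sampling step, with the logarithmic factors absorbing uniform-in-$t$ corrections across the $T$ time instants. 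Granted that contraction, however, Theorem \ref{thm:cbpf-mixing} itself is the short three-line chain of inequalities sketched above.
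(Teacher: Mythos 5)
Your proposal is correct and follows essentially the same route as the paper: the paper proves Theorem \ref{thm:cbpf-mixing} by initialising the second chain from $\pi$, applying the coupling inequality, and bounding $\P(\tau>k)$ via Corollary \ref{cor:coupling-time}, which is exactly your chain "coupling inequality $\to$ Markov's inequality on the Hamming distance $D_k$ $\to$ iterate the one-step contraction of Theorem \ref{thm:main-contraction} with $D_0\le T$" (the paper merely packages the middle step through the meeting time $\tau$, using that $B_{k-1}=0$ implies $B_k=0$, whereas you bypass $\tau$ and bound $\P(D_k\ge 1)\le\E[D_k]$ directly). You also correctly identify that all the real work lives in Theorem \ref{thm:main-contraction}, whose proof is deferred to Appendix \ref{app:contraction}.
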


Theorem \ref{thm:cbpf-mixing}, whose proof is given at the end of Section \ref{sec:coupling-cbpf}, guarantees that whenever $N$ is large enough so that 
$r_N>1$, the following $O(\log T)$ mixing time upper bound holds for the CBPF Markov transition:
$$
  t_\mathrm{mix}(\epsilon) = \inf\{k\ge 0\given \sup_{x_{1:T}^*}\| \mathrm{Law}_N(\mathbf{S}_k) - \pi \|_\mathrm{TV} \le \epsilon \}
  \le  \Big\lceil \frac{\log(T) - \log(\epsilon)}{\log(r_N)}\Big\rceil.
$$

The Markov chain $(\mathbf{S}_k)$ is known to be reversible for any $N$, following \cite{chopin-singh}, and so it can be natural to consider convergence in $\mathrm{L}^2(\pi)$ and the spectrum of its corresponding Markov operator. For a signed measure $\nu \ll \pi$, we denote
$
\| \nu \|_{\mathrm{L}^2(\pi)}^2 = \int \left |\mathrm{d} \nu/\mathrm{d} \pi \right |^2 \mathrm{d} \pi.
$
We note that for a $\pi$-reversible Markov kernel $P$, the convergence of $\| \nu P^k \|_{\mathrm{L}^2(\pi)}$ is equivalent to the convergence of $\| P^k f \|_{\mathrm{L}^2(\pi)}$ where $f = \frac{\mathrm{d} \nu}{\mathrm{d} \pi}$.
\begin{proposition}
Let $\mathrm{Law}_N(\mathbf{S}_0) \ll \pi$. Then the $\mathrm{L}^2$-rate of convergence satisfies
$$
\| \mathrm{Law}_N(\mathbf{S}_k) - \pi \|_{\mathrm{L}^2(\pi)} \leq \| \mathrm{Law}_N(\mathbf{S}_0) - \pi \|_{\mathrm{L}^2(\pi)} r_N^{-k},
$$ 
and the spectrum of the corresponding Markov operator acting on $\mathrm{L}_0^2(\pi) = \{f \in \mathrm{L}^2(\pi) : \pi(f) = 0 \}$ is contained in $[0,r_N^{-1}]$.
\end{proposition}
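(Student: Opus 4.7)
The plan is to derive both assertions from Theorem \ref{thm:cbpf-mixing} together with the $\pi$-reversibility of the CBPF kernel $P$ (established by Chopin--Singh), which makes $P$ a bounded self-adjoint operator on $\mathrm{L}^2(\pi)$. I would first reduce the $\mathrm{L}^2$-rate assertion to an operator-norm inequality on $\mathrm{L}^2_0(\pi)$. Writing $h = \mathrm{d}\mathrm{Law}_N(\mathbf{S}_0)/\mathrm{d}\pi \in \mathrm{L}^2(\pi)$ (assuming the right-hand side is finite, else the bound is trivial), reversibility gives $\mathrm{d}\mathrm{Law}_N(\mathbf{S}_k)/\mathrm{d}\pi = P^k h$, and since $P^k 1 = 1$ we obtain $\|\mathrm{Law}_N(\mathbf{S}_k) - \pi\|_{\mathrm{L}^2(\pi)} = \|P^k(h - 1)\|_{\mathrm{L}^2(\pi)}$ with $h - 1 \in \mathrm{L}^2_0(\pi)$ of the same $\mathrm{L}^2$-norm as $\mathrm{Law}_N(\mathbf{S}_0) - \pi$. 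It thus suffices to prove $\|P^k|_{\mathrm{L}^2_0(\pi)}\| \le r_N^{-k}$.

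For the operator bound, I would first convert the total variation estimate of Theorem \ref{thm:cbpf-mixing} into an $\mathrm{L}^\infty_0$-operator bound: for $f \in \mathrm{L}^\infty(\pi)$ with $\pi(f) = 0$,
\[
|P^k f(x)| = \Bigl| \int \bigl( P^k(x, \mathrm{d}y) - \pi(\mathrm{d}y) \bigr) f(y) \Bigr| \le 2 \|f\|_\infty \|P^k(x, \cdot) - \pi\|_{\mathrm{TV}} \le 2 T r_N^{-k} \|f\|_\infty.
\]
Self-adjointness of $P$, applied via the $\pi$-pairing on $\mathrm{L}^1_0 \times \mathrm{L}^\infty_0$ together with $\pi(P^k f) = 0$, yields a matching $\mathrm{L}^1_0$ bound of the form $C_1 T r_N^{-k}$, and Riesz--Thorin interpolation (applied to the modified operator $\tilde P^k = P^k - \pi(\cdot)\mathbf{1}$, which coincides with $P^k$ on $\mathrm{L}^2_0(\pi)$ and is well-defined on all of $\mathrm{L}^p$) then produces $\|P^k|_{\mathrm{L}^2_0(\pi)}\| \le C_2 T r_N^{-k}$ for some constant $C_2$. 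The $T$ prefactor is finally eliminated by the spectral radius formula: because $P$ is self-adjoint, $\|P^k|_{\mathrm{L}^2_0(\pi)}\| = \rho(P|_{\mathrm{L}^2_0(\pi)})^k$, so
\[
\rho(P|_{\mathrm{L}^2_0(\pi)}) = \lim_{k\to\infty} \|P^k|_{\mathrm{L}^2_0(\pi)}\|^{1/k} \le \lim_{k\to\infty} (C_2 T r_N^{-k})^{1/k} = r_N^{-1},
\]
whence $\|P^k|_{\mathrm{L}^2_0(\pi)}\| \le r_N^{-k}$. This simultaneously proves the $\mathrm{L}^2$-rate bound and establishes the spectral inclusion $\sigma(P|_{\mathrm{L}^2_0(\pi)}) \subset [-r_N^{-1}, r_N^{-1}]$.

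To improve the inclusion to $[0, r_N^{-1}]$, I would establish positivity of $P$ on $\mathrm{L}^2_0(\pi)$ by realising one CBPF iteration as a two-step Gibbs sweep on the extended Andrieu--Doucet--Holenstein/Chopin--Singh space: if $\mathcal{Z}$ denotes the full ensemble of auxiliary variables (particles, weights, ancestor indices, backward-sampling indices), one can arrange that, conditional on $\mathcal{Z}$, the input reference path $\mathbf{S}_{k-1}$ and the output path $\mathbf{S}_k$ are exchangeable draws from the same conditional distribution, giving the identity $\langle P f, f \rangle_{\mathrm{L}^2(\pi)} = \E_{\mathcal{Z}}\bigl[ (\E[f(\mathbf{S}_k) \mid \mathcal{Z}])^2 \bigr] \ge 0$. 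This positivity step is the main obstacle, since backward sampling makes the symmetry between the reference and output paths less transparent than in the plain CPF, where the $N+1$ trajectories are manifestly exchangeable given the particle system; the Chopin--Singh extended reversibility construction nevertheless encodes precisely the required conditional exchangeability, so the proof ultimately reduces to invoking their framework in the appropriate form.
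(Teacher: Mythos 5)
Your proposal is correct and follows essentially the same route as the paper: reversibility together with the uniform total variation bound of Theorem \ref{thm:cbpf-mixing} yields the spectral radius bound $r_N^{-1}$ on $\mathrm{L}^2_0(\pi)$ (hence the $\mathrm{L}^2$ rate), and positivity of the operator then pins the spectrum to $[0,r_N^{-1}]$. The only difference is one of packaging: where the paper cites \citep[Theorem~2.1]{roberts2007geometric} for the total-variation-to-$\mathrm{L}^2$ transfer and \citep[Theorem~10]{chopin-singh} for positivity, you re-derive the former by duality, Riesz--Thorin interpolation and the spectral radius formula for self-adjoint operators (a correct, standard unpacking of that citation), and you sketch the latter via the conditional-exchangeability structure that underlies the Chopin--Singh positivity result, ultimately deferring to their framework just as the paper does.
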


\begin{proof}
The CBPF Markov chain is reversible by \cite[][Proposition~9]{chopin-singh}. The corresponding operator $P_N$ is positive by \cite[][Theorem~10]{chopin-singh}.
From Theorem \ref{thm:cbpf-mixing}, reversibility and \cite[][Theorem~2.1]{roberts2007geometric} we may deduce that the spectral radius of $P_N$ (acting on $\mathrm{L}_0^2(\pi)$) is less than or equal to $r_N^{-1}$ and the bound on the $\mathrm{L}^2(\pi)$ norm. The conclusion about the spectrum follows from the positivity of $P_N$.
\end{proof}

The following simple example shows that the $O(\log T)$ mixing time above cannot be improved under \ref{a:strong-mixing}.

\begin{proposition}
  \label{prop:logTexample}
Assume that $\mathsf{X} = [0,1]$ and $M_1(\uarg) = M_t(x,\uarg)$ are uniform $[0,1]$ distributions, and let $G_t \equiv 1$. Then, for any $x_{1:T}^*\in \mathsf{X}^T$, any fixed $N\ge1$ and any sequence $k = k(T) = o(\log T)$:
$$
\liminf_{T\to\infty}
   \tv{ \mathrm{Law}_N(\mathbf{S}_k) - \pi }
   = 1.
$$
\end{proposition}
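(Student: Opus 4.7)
The plan is to exploit the degeneracy of the setup. Since $G_t \equiv 1$ and the transition densities $M_t$ are constant, every forward weight $W_t^i$ and every backward weight $B_t^i$ in Algorithm \ref{alg:cbpf} equals $1$, so in the backward pass the indices $J_1,\ldots,J_T$ are drawn independently and uniformly on $\{0{:}N\}$. The stationary distribution $\pi$ is the uniform law on $[0,1]^T$, and a single CBPF iteration with reference path $\mathbf{s}^*$ produces an output whose $t$-th coordinate equals $s_t^*$ exactly when $J_t = 0$ (probability $1/(N+1)$) and is an independent $U[0,1]$ draw otherwise, independently across $t$.

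Iterating this observation, the $t$-th coordinate of $\mathbf{S}_k$ equals the initial reference value $x_t^*$ \emph{exactly} if and only if the index $0$ is chosen at coordinate $t$ in every one of the first $k$ iterations, an event of probability $(N+1)^{-k}$ that is independent across $t\in\{1{:}T\}$. Writing $U_k = \{t \given \mathbf{S}_{k,t} = x_t^*\}$, we have $|U_k| \sim \mathrm{Binomial}\bigl(T,(N+1)^{-k}\bigr)$.

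To lower bound the total variation distance, I would use the distinguishing event $A = \{\mathbf{x} \in [0,1]^T \given x_t = x_t^* \text{ for some } t\}$. Since each coordinate of a $\pi$-sample is continuously distributed, $\pi(A) = 0$. On the other hand,
\[
  \P(\mathbf{S}_k \in A) \ge \P(|U_k| \ge 1) = 1 - \bigl(1 - (N+1)^{-k}\bigr)^T \ge 1 - \exp\bigl(-T(N+1)^{-k}\bigr).
\]
For $k = k(T) = o(\log T)$ with $N$ fixed, $(N+1)^k = T^{o(1)}$, so $T(N+1)^{-k} \to \infty$ and hence $\P(\mathbf{S}_k \in A) \to 1$. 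Combined with $\pi(A) = 0$, this yields the claimed $\liminf = 1$.

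I do not expect serious obstacles; the argument is elementary. The only point to verify carefully is that the ``retained'' reference coordinates are literally equal---not merely distributionally identical---to the initial $x_t^*$, so that $A$ genuinely separates the two laws. This follows because backward sampling only ever copies, never perturbs, the reference particle, and the retain/refresh decisions are independent across $t$ since all weights are constant.
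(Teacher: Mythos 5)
Your proposal is correct and is essentially the paper's own argument: both reduce to the observation that with constant weights all particles are i.i.d.\ $U[0,1]$ and the indices $J_t$ are independent uniforms, so coordinate $t$ retains $x_t^*$ through $k$ iterations with probability $(N+1)^{-k}$ independently across $t$. The paper uses the complementary distinguishing event $\{x_t \neq x_t^* \text{ for all } t\}$ (of full $\pi$-measure) and shows its probability under $\mathrm{Law}_N(\mathbf{S}_k)$ tends to $0$, which is the same calculation as yours.
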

\begin{proof}
Note that for this specific choice of $M_t$ and $G_t$, the particles $X_t^{i}$ in Algorithm \ref{alg:cbpf} are all independent $U([0,1])$ random variables, and $J_t$ are independent $U(\{0{:}N\})$ random variables, so $X_t^{J_t}=x_t^\ast$ with probability $1/(N+1)$. 
Denote by $X_t(k) = [\mathbf{S}_k]_t$ the state at time $t$ of the $k$th CBPF iterate, and let 
$\tau_t = \inf\{k\given X_t(k) \neq x_t^* \}$ be the iteration when the state at time $t$ has moved the first time.
Clearly, $\tau_t \sim \mathrm{Geom}(\frac{N}{N+1})$ and $\P(\tau_t \le k) = 1 - (N+1)^{-k}$.

Let $A = \{x_{1:T}\in [0,1]^T \given x_1 \neq x_1^*, \ldots, x_T \neq x_T^*\}$, note that $\pi$ is uniform on the unit cube $[0,1]^T$ so $\pi(A)=1$, and
$$
\P(\mathbf{S}_k \in A)= \P(\tau \le k),
$$
where $\tau = \max\{\tau_1,\ldots,\tau_T\}$. Because $\tau_i$ are independent,
\begin{equation}
\P(\tau \le k) = \P(\tau_1\le k)^T = \big(1 - (N+1)^{-k}\big)^T.
\label{eq:example-tau}
\end{equation}
If $k = o(\log T)$, that is $k = \epsilon(T) \log(T)$ where $\epsilon(T) = o(1)$, 
then it is direct to verify that $(1 - (N+1)^{-k})^T
= (1 - T^{-\epsilon(T)\log(N+1)})^T \to 0$, and consequently,
$
\| \mathrm{Law}_N(\mathbf{S}_k)  - \pi \|_{\mathrm{TV}} 
\ge |\P(\mathbf{S}_k \in A) - 1 | \to 1.
$ as $T\to\infty$.
\end{proof}

\begin{remark}
  \label{rem:logTexample-Nrate}
  In fact, in the proof of Proposition \ref{prop:logTexample}, $\mathbf{S}_\tau \sim \pi$, so the model satisfies
  $$
  \tv{\mathrm{Law}_N(\mathbf{S}_k)  - \pi }
  = \tv{\mathrm{Law}_N(\mathbf{S}_k) -  \mathrm{Law}_N(\mathbf{S}_\tau)  } \le \P(\tau > k) \le T (N+1)^{-k},
  $$ 
  where the first inequality follows from the coupling inequality, and the last inequality follows from \eqref{eq:example-tau} and Bernoulli's inequality. Therefore, this example satisfies the same form of upper bound as in Theorem \ref{thm:cbpf-mixing}, with $r_N = N+1$. This dependence on $N$ is more favourable than in our bound \eqref{eq:r-rate}, and is achieved by taking advantage of the independence in the model, i.e. $M_t(x,x')=M_t(x')$. In general, the particle system's behaviour is more complicated, and may indeed lead to a different rate in terms of $N$. 
  It is therefore unclear how tight the rate for $r_N$ given in \eqref{eq:r-rate} is in terms of $N$.
\end{remark}

We conclude this section by comparing Theorem \ref{thm:cbpf-mixing} with earlier mixing bounds for the CBPF. 
The quantitative bound established for the CPF in \citep{andrieu-lee-vihola} is also valid for the CBPF. Indeed, \citep[][Lemma 16]{chopin-singh-arxiv-v1} together with 
\citep[Corollary 14]{andrieu-lee-vihola} (see Appendix \ref{app:cpf-cbpf} for more details) imply that there exists a constant $C\in(0,\infty)$ depending only on \ref{a:strong-mixing} such that:
\begin{equation}
\tv{\mathrm{Law}_N(\mathbf{S}_k)  - \pi }
\le \big(1 - e^{-TC/N}\big)^k.
\label{eq:andrieu-lee-vihola-bound}
\end{equation}
This bound is appropriate for small $T$ and large $N$; then $1 - e^{-T C/N} = O(N^{-1})$, which coincides with the observation in Remark \ref{rem:logTexample-Nrate}. Note, however, that we must take $N = O(T)$ for \eqref{eq:andrieu-lee-vihola-bound} to be useful; for fixed $N$, a mixing time bound from \eqref{eq:andrieu-lee-vihola-bound} increases exponentially in $T$.

The only earlier mixing time result for the CBPF, which remains useful for fixed $N$ and large $T$, is Theorem 4 of \citep{lee-singh-vihola}: for any $\alpha,r>1$, there exists $N_0<\infty$ depending on 
$\alpha$, $r$ and \ref{a:strong-mixing} such that for all $N\ge N_0$:
\begin{equation}
\tv{\mathrm{Law}_N(\mathbf{S}_k)  - \pi }
\le \alpha^T r^{-k}.
\label{eq:lee-singh-vihola-bound}
\end{equation}
This bound is similar to Theorem \ref{thm:cbpf-mixing}, but with $T$ replaced by a much more pessimistic $\alpha^T$, only guaranteeing $O(T)$ mixing time.

\section{Coupling of CBPF transitions}
\label{sec:coupling-cbpf}

The proof of Theorem \ref{thm:cbpf-mixing} relies on the analysis of a coupled CBPF (Algorithm \ref{alg:mc_cbpf}), which simulates two CBPF transitions with references $x_{1:T}^*$ and $\tilde{x}_{1:T}^*$, respectively.
\begin{algorithm}
    \caption{\textsc{CoupledCBPF}($x_{1:T}^*,\tilde{x}_{1:T}^*,N$).}
    \label{alg:mc_cbpf}
    \begin{algorithmic}[1]
        \State $\big(X_{1:T}^{0}, \tilde{X}_{1:T}^{0}\big)
        \gets \big(x_{1:T}^*, \tilde{x}_{1:T}^*\big)$.
        \State $X_1^{i}\gets \tilde{X}_1^{i} \sim M_1(\uarg)$ \label{line:initial-coupling} \label{line:forward-start}        \Comment{for $i\in\{1{:}N\}$}
        \State $W_1^{i} \gets G_1(X_1^{i})$;
        $\tilde{W}_1^{i} \gets G_1(\tilde{X}_1^{i})$ \Comment{for $i\in\{0{:}N\}$}
        \For{$t=2,\ldots,T$}
        \State  $(X_t^{1:N}, \tilde{X}_t^{1:N}) \sim \textsc{FwdCouple}(X_{t-1}^{0:N}, W_{t-1}^{0:N},
        \tilde{X}_{t-1}^{0:N}, \tilde{W}_{t-1}^{0:N}, M_t, t)$ \label{line:maximalcoupling}
         \State $W_t^{i} \gets G_t(X_t^{i})$; $\tilde{W}_t^{i} \gets        G_t(
\tilde{X}_t^{i})$ \Comment{for $i\in\{0{:}N\}$}
\label{line:coupled-weights}
        \EndFor \label{line:forward-end}
        \State $(J_T,\tilde{J}_T) \sim
        \textsc{MaxCouple}\big(\mathrm{Categorical}(W_T^{0:N}),
        \mathrm{Categorical}(\tilde{W}_T^{0:N}) \big)$ \label{line:backward-start}
        \For{$t=T-1,T-2,\ldots,1$}
        \State $B_t^{i} \gets W_t^{i} M_{t+1}(X_t^{i},
        X_{t+1}^{J_{t+1}})$; 
        $\tilde{B}_t^{i} \gets \tilde{W}_t^{i}
        M_{t+1}(\tilde{X}_t^{i},\tilde{X}_{t+1}^{\tilde{J}_{t+1}})$ \Comment{for $i\in\{0{:}N\}$}
        \label{line:coupled-backward-weights}
        \State $(J_t, \tilde{J}_t) \sim
        \textsc{MaxCouple}\big(\mathrm{Categorical}(B_t^{0:N}),
        \mathrm{Categorical}(\tilde{B}_t^{0:N}) \big)$
        \EndFor \label{line:backward-end}
        \State \textbf{output}
        $\big((X_1^{J_1}, \ldots, X_T^{J_T}), (\tilde{X}_1^{\tilde{J}_1}, \ldots, \tilde{X}_T^{\tilde{J}_T})\big)$
    \end{algorithmic}
\end{algorithm}

In Algorithm \ref{alg:mc_cbpf}, $\textsc{MaxCouple}(\mu,\nu)$ stands for any maximal coupling of distributions $(\mu,\nu)$. If $(X,Y)\sim \textsc{MaxCouple}(\mu,\nu)$, then $\P(X\neq Y) = \| \mu - \nu \|_\mathrm{TV}$ \citep{thorisson}; see Appendix \ref{app:maxcouple} for algorithms that simulate from maximal couplings. 

Algorithm \ref{alg:mc_cbpf} is identical to the coupled CBPF in \citep{lee-singh-vihola}, except for \textsc{FwdCouple}, which accommodates more general coupling strategies for the forward pass of the algorithm.
In particular, let $\fsp_t^N$ and $\tilde{\fsp}_t^N$ denote the one-step predictive distributions of the two CPFs (cf. line \ref{line:mixture-transition} of Algorithm \ref{alg:cbpf}):
\begin{equation}
   \fsp_t^N = \sum_{i=0}^N \frac{W_{t-1}^i}{\sum_{j=0}^N W_{t-1}^j} M_t(X_{t-1}^i, \uarg),
   \qquad
   \tilde{\fsp}_t^N
   = \sum_{i=0}^N \frac{\tilde{W}_{t-1}^i}{\sum_{j=0}^N \tilde{W}_{t-1}^j} 
   M_t(\tilde{X}_{t-1}^i, \uarg).
   \label{eq:filter-state-predictive}
\end{equation}
\textsc{FwdCouple} may be chosen to implement any coupling of $(\fsp_t^N)^{\otimes N}$ and $(\tilde{\fsp}_t^N)^{\otimes N}$ (the $N$-fold product distributions of the above), and this choice can vary with $t$.

Because the forward pass simulates from couplings of the predictive distributions $\fsp_t^N$ and $\tilde{\fsp}_t^N$, it is direct to verify that 
$(X_{1:T}^{1:N})$ and $(\tilde{X}_{1:T}^{1:N})$ have the same law as Algorithm \ref{alg:cbpf}'s forward pass' outputs, after being invoked twice with inputs $x_{1:T}^*$ and $\tilde{x}_{1:T}^*$, respectively.

Finally,  Algorithm \ref{alg:mc_cbpf} uses $\textsc{MaxCouple}$ in the backward pass. Thus, upon completion, it indeed simulates from the coupling of the two CBPFs. That is, if
$$
  (X_{1:T},\tilde{X}_{1:T}) \gets \textsc{CoupledCBPF}(x_{1:T}^*, \tilde{x}_{1:T}^*, N),
$$
then the probability distributions of its marginals are equivalent to:
$$
X_{1:T} \gets \textsc{CBPF}(x_{1:T}^*, N)\qquad \text{and}\qquad  
\tilde{X}_{1:T} \gets \textsc{CBPF}( \tilde{x}_{1:T}^*, N).
$$

In all our theoretical analysis, including Theorem \ref{thm:main-contraction} and Corollary \ref{cor:coupling-time} below, which lead to the proof of Theorem \ref{thm:cbpf-mixing}, the following implementation of the forward pass steps will be assumed:
\begin{assumption}
  \label{a:joint-maximal-coupling}
$\textsc{FwdCouple}(X_{t-1}^{0:N}, W_{t-1}^{0:N}, \tilde{X}_{t-1}^{0:N}, \tilde{W}_{t-1}^{0:N}, M_t, t)$ is equivalent to
\begin{equation*}
\textsc{MaxCouple}\big((\fsp_t^N)^{\otimes N}, (\tilde{\fsp}_t^N)^{\otimes N}\big). 
\end{equation*}
\end{assumption}
Note that if $x_{1:T}^* = \tilde{x}_{1:T}^*$, then under 
Assumption \ref{a:joint-maximal-coupling} all variables in Algorithm \ref{alg:mc_cbpf} are coupled: $X_t^i = \tilde{X}_t^i$ (almost surely), and therefore also the outputs coincide: $X_t^{J_t}=\tilde{X}_t^{\tilde{J}_t}$ for $t\in\{1{:}T\}$.

In spite of the restriction to Assumption \ref{a:joint-maximal-coupling} in the theoretical analysis, we retain the full generality of $\textsc{FwdCouple}$ in Algorithm \ref{alg:mc_cbpf}. This is because other couplings can be interesting when the algorithm is implemented, and used for constructing unbiased estimates of smoothing functionals. 
We return to discuss other coupling strategies and unbiased estimation in Section \ref{sec:unbiased}. 

The CBPF mixing time bound in Theorem \ref{thm:cbpf-mixing} is based on the following main technical result, which establishes a contraction for the expected number of uncoupled states after just one application of the coupled CBPF algorithm. 
This result formalises the fact that for sufficiently large $N$ there are spontaneous partial couplings of particles across the trajectory. This arises because the distributions in \eqref{eq:filter-state-predictive} can be close even if all of the particles $X_{t-1}^i$ and $\tilde{X}_{t-1}^i$ are different, and therefore under \ref{a:joint-maximal-coupling} the particles $X_t^i$ and $\tilde{X}_t^i$ can be equal. This is in contrast with the progressive behaviour of index couplings seen in \citep{lee-singh-vihola}.

\begin{theorem}
\label{thm:main-contraction}
For any $(x_{1:T}^*,\tilde{x}_{1:T}^*)\in \mathsf{X}^{T}\times\mathsf{X}^{T}$, let 
$$
(X_{1:T},\tilde{X}_{1:T}) \gets \textsc{CoupledCBPF}(x_{1:T}^*,\tilde{x}_{1:T}^*, N),
$$
and denote by 
$b^* = \sum_{t=1}^T \I(x_t^* \neq \tilde{x}_t^*)$ and
$B = \sum_{t=1}^T \I(X_t \neq \tilde{X}_t)$ the number of unequal input and output states, respectively. 

Under \textup{\ref{a:strong-mixing}} and \textup{\ref{a:joint-maximal-coupling}}, 
there exist finite constants $N_{\rm min}$ and $c_\lambda$ depending only on the constants in \textup{\ref{a:strong-mixing}} such that for all $N\ge N_{\rm min}$:
\begin{equation}
\E[B] \le \lambda_N b^*, \qquad \text{where}\qquad \lambda_N \le c_\lambda \frac{\log^2 N}{\sqrt{N}}.
\label{eq:lambda-upper-bound}
\end{equation}
\end{theorem}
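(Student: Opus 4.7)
My plan is to bound $B=\sum_{t=1}^T \I(X_t^{J_t}\neq\tilde X_t^{\tilde J_t})$ by tracking, through Algorithm \ref{alg:mc_cbpf}, two related quantities: the \emph{particle discrepancy} $b_t:=\sum_{i=0}^N\I(X_t^i\neq\tilde X_t^i)$ (using the conventions $X_t^0:=x_t^*$, $\tilde X_t^0:=\tilde x_t^*$), which measures the forward-pass coupling quality at time $t$, and the \emph{index discrepancy} $d_t:=\I(X_t^{J_t}\neq\tilde X_t^{\tilde J_t})$ from the backward pass, so that $B=\sum_t d_t$. The target bound $\E[B]\le\lambda_N b^*$ would then follow from combining a forward-pass control of $b_t$ with a backward-pass control of $d_t$ given the particle systems.

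For the forward pass, the key lemma is a bound on $\tv{\fsp_t^N-\tilde\fsp_t^N}$ under \ref{a:strong-mixing} in terms of $b_{t-1}$: the two predictive mixtures in \eqref{eq:filter-state-predictive} share the $M_t$-components corresponding to coupled previous particles, and differ only through the unequal components and the induced reweighting, so strong mixing yields a TV bound controlled by $b_{t-1}$ up to constants depending on $\Mlo,\Mhi,\Glo,\Ghi$. Assumption \ref{a:joint-maximal-coupling} then translates this into a control on $b_t$ via the joint maximal coupling of the product distributions $(\fsp_t^N)^{\otimes N}$ and $(\tilde\fsp_t^N)^{\otimes N}$. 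For the backward pass, the TV distance between $\mathrm{Cat}(B_t^{0:N})$ and $\mathrm{Cat}(\tilde B_t^{0:N})$ is bounded in terms of $b_t$ and the downstream discrepancy $\I(X_{t+1}^{J_{t+1}}\neq\tilde X_{t+1}^{\tilde J_{t+1}})$, and the $\textsc{MaxCouple}$ step converts this into a bound on $\P(d_t=1)$ that sums to $\E[B]$.

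The main obstacle is attaining the sharp rate $\lambda_N=O(\log^2 N/\sqrt N)$. A direct argument combining a $b_{t-1}/N$ bound on $\tv{\fsp_t^N-\tilde\fsp_t^N}$ with the tensor inequality $\tv{\mu^{\otimes N}-\nu^{\otimes N}}\le N\tv{\mu-\nu}$ only yields $O(1)$ decoupling probabilities per forward step, which is far from enough. The $1/\sqrt N$ gain in $\lambda_N$ must instead come from concentration of the empirical particle weights at a $1/\sqrt N$ rate, holding on a high-probability event, which sharpens the TV comparison between the two predictive mixtures well beyond the worst-case bound. The $\log^2 N$ factor presumably arises from the union bounds required to combine this concentration across particles and time indices and to absorb the complementary low-probability event into expectations. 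These refinements are precisely the purpose of the CPF-level estimates in Appendix \ref{app:cpf-theory}, which then feed into the iterative contraction argument of Appendix \ref{app:contraction} to deliver the linear scaling $\E[B]\le\lambda_N b^*$, with constants depending only on the strong mixing constants of \ref{a:strong-mixing}.
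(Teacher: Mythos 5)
Your plan identifies the right objects (the forward particle coupling and the backward index coupling) and correctly diagnoses that naive tensorization of a TV bound is too weak, but it is missing the two ideas that actually drive the proof. First, your forward-pass control is a TV bound on $\fsp_t^N-\tilde\fsp_t^N$ ``controlled by $b_{t-1}$'', which is informative only when $b_{t-1}$ is small; it gives nothing once the particle systems have diverged, and your scheme contains no mechanism for \emph{recovering} coupling from an arbitrary discrepancy. The paper's argument hinges on exactly this recovery: combining the time-uniform $L^2$ stability of the CPF (Theorem \ref{thm:cpfstability}) with the exponential forgetting of the ideal predictor (Lemma \ref{lem:fkcontract}), it shows (Lemmas \ref{lem:tvlemma_new} and \ref{lem:tvuniform_ccpf}) that after $j\ge c\log N$ steps the two one-step predictive distributions are close \emph{regardless} of the starting particle configurations, so the maximal coupling of \ref{a:joint-maximal-coupling} achieves full coupling $X_t^{1:N}=\tilde X_t^{1:N}$ with probability at least $1-\alpha>0$ uniformly in time. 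Concentration alone, which you invoke, is not enough---you also need forgetting of the initial condition, and this is where one factor of $\log N$ enters (not from union bounds over particles; the second $\log N$ factor comes from optimizing the length of the backward window in Lemma \ref{lem:uncoupling-main}).

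Second, and more fundamentally, your proposal gives no route to the linear scaling $\E[B]\le\lambda_N b^*$ as opposed to $\lambda_N T$. The paper obtains it by partitioning $[T]$ into the sets $H^d$ of indices at distance $d$ from the nearest uncoupled reference state (Lemma \ref{lem:number-to-individual-sites}), noting $|H^d|\le 2b^*$ for $d\ge 1$, and proving that $\P(X_t\neq\tilde X_t)$ decays geometrically in $d$ for $t\in H^d$ (Lemmas \ref{lem:fwdcoupling-in-Hd} and \ref{lem:uncoupling-main}); this relies on the fact that full coupling, once achieved, can only be lost at a reference-state mismatch (Lemma \ref{lem:uncoupling}), and that the backward index coupling is preserved with probability one across coupled reference states (Lemma \ref{lem:bs-coupling}), so that the sum over $d$ of a geometric series produces the $b^*$ factor. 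Without this decomposition, or an equivalent one, your forward/backward recursion on $b_t$ and $d_t$ would at best deliver a bound proportional to $T$, which is not the claimed statement. Finally, deferring the quantitative steps to ``the CPF-level estimates in Appendix \ref{app:cpf-theory}'' and ``the iterative contraction argument of Appendix \ref{app:contraction}'' is circular: those appendices \emph{are} the proof you were asked to supply.
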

The proof of Theorem \ref{thm:main-contraction} is given in Appendix \ref{app:proof-main-contraction}.

We conclude this section by showing how Theorem \ref{thm:main-contraction} leads to an upper bound for the meeting time of the iterated coupled CBPF, proving Theorem \ref{thm:cbpf-mixing}.
\begin{corollary}
  \label{cor:coupling-time}
Let $(\vec{S}_0, \tilde{\vec{S}}_0) \in \mathsf{X}^{T}\times\mathsf{X}^{T}$ be arbitrary, and for $k\ge 1$:
  $$
  (\vec{S}_k,\tilde{\vec{S}}_k) \gets \textsc{CoupledCBPF}(\vec{S}_{k-1},\tilde{\vec{S}}_{k-1}, N).
  $$
Under \textup{\ref{a:strong-mixing}} and \textup{\ref{a:joint-maximal-coupling}}, the distribution of the meeting time 
  $$
  \tau = \inf\{ k \given \vec{S}_k = \tilde{\vec{S}}_k \},
  $$
is upper bounded as follows:
$$
\P(\tau > k) \le T \lambda_N^k,
$$
where $\lambda_N$ is given in Theorem \ref{thm:main-contraction}, and satisfies the upper bound \eqref{eq:lambda-upper-bound}.
\end{corollary}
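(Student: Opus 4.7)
The plan is to iterate the one-step contraction estimate in Theorem \ref{thm:main-contraction} and then convert the resulting bound on the expected number of uncoupled coordinates into a tail bound for the meeting time via Markov's inequality.

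For each $k\ge 0$, let $B_k = \sum_{t=1}^T \I\bigl([\vec{S}_k]_t \neq [\tilde{\vec{S}}_k]_t\bigr)$ denote the number of uncoupled coordinates at iteration $k$. Trivially $B_0 \le T$. The iterated coupled CBPF is a Markov chain on $\mathsf{X}^T\times\mathsf{X}^T$, and conditionally on $(\vec{S}_{k-1},\tilde{\vec{S}}_{k-1})$ one step $(\vec{S}_k,\tilde{\vec{S}}_k)$ is exactly distributed as $\textsc{CoupledCBPF}(\vec{S}_{k-1},\tilde{\vec{S}}_{k-1},N)$. Applying Theorem \ref{thm:main-contraction} with the reference pair $(\vec{S}_{k-1},\tilde{\vec{S}}_{k-1})$ therefore gives
\[
\E\bigl[B_k \,\big|\, \vec{S}_{k-1},\tilde{\vec{S}}_{k-1}\bigr] \le \lambda_N B_{k-1}.
\]
Taking expectations and iterating over $k$ yields $\E[B_k] \le \lambda_N^k\, \E[B_0] \le T\lambda_N^k$.

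Next, I would argue that the coupling is \emph{faithful}, i.e.\ once $\vec{S}_j = \tilde{\vec{S}}_j$ we have $\vec{S}_{j'} = \tilde{\vec{S}}_{j'}$ for every $j' \ge j$. This is exactly the content of the final sentence of the paragraph following Assumption \ref{a:joint-maximal-coupling}: if the inputs to \textsc{CoupledCBPF} agree, then under \ref{a:joint-maximal-coupling} every intermediate quantity (including the forward particles via the joint maximal coupling and the backward indices via \textsc{MaxCouple}) coincides almost surely, so the outputs also coincide. Hence $\{\tau > k\} = \{\vec{S}_k \neq \tilde{\vec{S}}_k\} = \{B_k \ge 1\}$.

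Since $B_k$ is integer-valued, Markov's inequality gives
\[
\P(\tau > k) = \P(B_k \ge 1) \le \E[B_k] \le T\lambda_N^k,
\]
which is the desired estimate, and the bound $\lambda_N \le c_\lambda \log^2(N)/\sqrt{N}$ is inherited directly from Theorem \ref{thm:main-contraction}. There is no substantive obstacle here, given Theorem \ref{thm:main-contraction}; the only point requiring care is the faithfulness of the coupling, which is essential for translating a bound on $\E[B_k]$ into a bound on $\P(\tau > k)$ rather than merely on $\P(\vec{S}_k \neq \tilde{\vec{S}}_k)$, and which is ensured by Assumption \ref{a:joint-maximal-coupling}.
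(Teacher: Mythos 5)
Your proposal is correct and follows essentially the same route as the paper's own proof: define $B_k$, apply Theorem \ref{thm:main-contraction} conditionally on the history to get $\E[B_k]\le T\lambda_N^k$, note that coupling is preserved (so $\{B_k=0\}=\{\tau\le k\}$), and conclude by Markov's inequality. The faithfulness point you flag is exactly the observation the paper makes ("if $B_{k-1}=0$, then $B_k=0$"), justified by the remark following Assumption \ref{a:joint-maximal-coupling}.
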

\begin{proof}
  Denote by $B_k = \sum_{t=1}^T \I\big([\vec{S}_k]_t \neq [\tilde{\vec{S}}_k]_t\big)$ the number of uncoupled states at iteration $k$ and by $\mathcal{F}_k = \sigma(\vec{S}_i, \tilde{\vec{S}}_i\given i\le k)$ the history. By Theorem \ref{thm:main-contraction}:
  \begin{align*}
  \E[B_k] 
  &= \E\big[\E[B_k\mid \mathcal{F}_{k-1}]\big] 
  \le \E[\lambda_N B_{k-1}] \le \cdots \le T \lambda_N^k,
  \end{align*}
  because $B_0\le T$. Note also that if $B_{k-1}=0$, then $B_k=0$, so $\{B_k=0\} = \{\tau\le k\}$, so the claim follows from Markov's inequality:  $\P(B_k\ge 1) \le \E[B_k]$.
\end{proof}

\begin{proof}[Proof of Theorem \ref{thm:cbpf-mixing}]
  Let $\vec{S}_0 = x_{1:T}^*\in \X^T$ be arbitrary and let $\tilde{\vec{S}}_0\sim \pi$. Then,
  $$
     \| \mathrm{Law}_N(\vec{S}_k) - \pi \|_\mathrm{TV} 
     = \| \mathrm{Law}_N(\vec{S}_k) - \mathrm{Law}_N(\tilde{\vec{S}}_k) \|_\mathrm{TV}
     \le \P(\tau > k)
     \le T \lambda_N^k,
  $$
where the last inequality follows from Corollary \ref{cor:coupling-time}. The claim follows with $r_N = \lambda_N^{-1}$.
\end{proof}

Note that the upper bound $\P(\tau > k)$ of the total variation distance would be tight if we had analysed a maximal coupling of the marginals $\mathrm{Law}_N(\vec{S}_k)$ and $\mathrm{Law}_N(\tilde{\vec{S}}_k)$. The coupling implemented by \textsc{CoupledCBPF} appears to be sufficiently good to achieve the right mixing rate in $T$, as confirmed by the example in Proposition \ref{prop:logTexample}.

\begin{remark}
  The proof of Theorem \ref{thm:cbpf-mixing} implies also that the laws of individual coordinates $[\vec{S}_k]_t$ converge \emph{on average} uniformly in $T$. Namely, let $\proj_t(\pi)$ stand for law of the time marginal $t$ of $\pi$, then
  $$
  \frac{1}{T} \sum_{t=1}^T \| \mathrm{Law}_N([\vec{S}_k]_t) - \proj_t(\pi) \|_\mathrm{TV}
  \le \frac{1}{T}\E[B_k] \le \lambda_N^k,
  $$
  because $\| \mathrm{Law}_N([\vec{S}_k]_t) - \proj_t(\pi) \|_\mathrm{TV} \le \P\big([\vec{S}_k]_t \neq [\tilde{\vec{S}}_k]_t\big)$.
\end{remark}

\section{Alternative couplings and unbiased estimators}
\label{sec:unbiased}

Implementable couplings of Markov chains have been used recently in unbiased estimation, following the idea suggested by \cite{glynn-rhee}; see \citep{jacob-oleary-atchade} and references therein. Couplings of CPFs were suggested for this purpose in \citep{jacob-lindsten-schon} and a special case of the coupled CBPF (Algorithm \ref{alg:mc_cbpf}) was suggested in \citep{lee-singh-vihola}.

Algorithm \ref{alg:unbiased} describes one way in which Algorithms \ref{alg:cbpf} and \ref{alg:mc_cbpf} can be used to produce `$L$ lagged $k$ offset' unbiased estimators. In line \ref{line:init1}, we suggest to initialise with the particle filter (Algorithm \ref{alg:cbpf} omitting the reference particles at index $i=0$), but any other initialisation is valid as long as $\tilde{\mathbf{S}}_0$ has the same distribution as $\mathbf{S}_{-L}$.
\begin{algorithm}
  \caption{\textsc{Unbiased}($h, \mathbf{s}_0, N, k, L$)}
  \label{alg:unbiased} 
\begin{algorithmic}[1]
  \State  Set $\tilde{\mathbf{S}}_0 = \mathbf{S}_{-L} = \mathbf{s}_0 \gets \textsc{ParticleFilter}(N)$ \Comment{Initialise} \label{line:init1}
  \State  Set $\mathbf{S}_j \gets
  \textsc{CBPF}(\mathbf{S}_{j-1},N)$ for $j=1-L,\ldots,0$ \Comment{Advance $\mathbf{S}_{-L}$ by $L$ steps} \label{line:init2} 
  \For{$n=1,2,\ldots$}
  \State  $(\mathbf{S}_{n},\tilde{\mathbf{S}}_{n}) \gets
  \textsc{CoupledCBPF}(\mathbf{S}_{n-1},\tilde{\mathbf{S}}_{n-1},N)$
  \Comment{Coupled transitions}
  \If{$\mathbf{S}_n = \tilde{\mathbf{S}}_n$ and $n\ge k$}
  \State \textbf{output} $Z_k = h(\mathbf{S}_k) + \sum_{j=1}^{\lfloor(n-k)/L\rfloor} [h(\mathbf{S}_{k+Lj}) -
  h(\tilde{\mathbf{S}}_{k+Lj})]$ \label{line:stopping}
  \EndIf
  \EndFor
\end{algorithmic}
\end{algorithm}

Hereafter, let $\tau_k$ stand for the number of iterations of \textsc{CoupledCBPF} in Algorithm \ref{alg:unbiased}, that is, the value of $n$ at the time it terminates. If $\tau_k$ is almost surely finite and suitably well-behaved, then for a large class of functions $h:\X^T\to\R$ the output is unbiased: $\E[Z_k] = \int h(\vec{s}) \pi(\vec{s}) \ud \vec{s}$ \citep[see, e.g,][]{jacob-oleary-atchade,lee-singh-vihola,douc-jacob-lee-vats}. It is often useful to consider a number of different offset parameters $k$ simultaneously, and use an average $Z_{k:\ell}$ of $Z_k,\ldots,Z_{\ell}$ in estimation \citep{jacob-oleary-atchade,douc-jacob-lee-vats}. A practical rule of thumb for the tuning parameters $L$, $k$, and $\ell$ is to set $L$ to a high quantile of the meeting times, $k=L$ and $\ell = 5k$ \citep{douc-jacob-lee-vats}.

The efficiency of the unbiased estimation in Algorithm \ref{alg:unbiased} depends on the behaviour of the running time $\tau_k$, as well as the computational cost of each iteration of the \textsc{CoupledCBPF}. Different instances of Algorithm \ref{alg:mc_cbpf} arise when \textsc{FwdCouple} is implemented in a different manner. We detail four choices, some of which are new, in the subsections below.

\subsection{Joint maximal coupling}

In the theoretical analysis above, we have assumed that \textsc{FwdCouple} is equivalent to $\textsc{MaxCouple}\big((\fsp_t^N)^{\otimes N},(\tilde{\fsp}_t^N)^{\otimes N}\big)$, where $\fsp_t^N$ and $\tilde{\fsp}_t^N$ are the predictive distributions of the two CPFs, given in \eqref{eq:filter-state-predictive}. Hereafter, we call this strategy `joint maximal coupling' (JMC). A similar coupling strategy was explored in our earlier theoretical work \citep{karjalainen-lee-singh-vihola}, in the context of particle filters.

In order to simulate from JMC, we need to employ Algorithm \ref{alg:maxcouple-generic} in Appendix \ref{app:maxcouple}, which simulates from $\textsc{MaxCouple}(p,q)$, a maximal coupling of any $p$ and $q$ probability distributions (densities). Algorithm \ref{alg:maxcouple-generic} is a rejection sampler, which requires expected $O(1)$ simulations from and/or evaluations of $p$ and $q$. In the context of JMC, $p$ and $q$ are $N$-fold products of predictive densities $\fsp_t^N$ and $\tilde{\fsp}_t^N$ and these mixtures themselves require $O(N)$ operations to evaluate. The overall cost of JMC is therefore $O(N^2)$.

Corollary \ref{cor:coupling-time} guarantees that whenever $N$ is sufficiently large, $\tau_k$ is $O(k\vee \log T)$ and has exponential tails. Therefore, taking suitable $k = O(\log T)$ implies that $Z_k$ has nearly ideal variance: $\mathrm{var}(Z_k) \approx \mathrm{var}_\pi(h)$ (the variance of $h(X)$ where $X\sim \pi$), at least for bounded $h$ \citep[see][Lemma 31]{lee-singh-vihola}.

\subsection{Independent maximal coupling}

We can also consider maximal couplings of individual particles, that is, let \textsc{FwdCouple} stand for $\textsc{MaxCouple}(\fsp_t^N,\tilde{\fsp}_t^N)^{\otimes N}$. This means that each particle pair is simulated \emph{independently} from a maximal coupling of their marginal laws:
$$
(X_t^i,\tilde{X}_t^i) \sim \textsc{MaxCouple}(\fsp_t^N,\tilde{\fsp}_t^N).
$$
This strategy, which we call `independent maximal coupling' (IMC), has been suggested earlier with particle filters applied in the multilevel Monte Carlo context \citep{jasra-yu}, but IMC has not been explored with coupled CPFs before. 
Like JMC, simulating from IMC requires using Algorithm \ref{alg:maxcouple-generic} in Appendix \ref{app:maxcouple}, and has $O(N^2)$ overall cost.

Our empirical findings in Section \ref{sec:experiments} suggest that IMC behaves similarly to JMC, and in fact, can have improved performance.
We suspect that IMC admits similar theoretical properties as JMC, that is, $O(k \vee \log T)$ coupling time for sufficiently large $N$. 
However, despite several attempts, we have not managed to show this theoretically, so we can only conjecture its coupling time to be $O(k \vee \log T)$.

\subsection{Independent index coupling}
\label{sec:ind-ix-coupling}

The first implementation of CPF couplings was suggested in \citep{jacob-lindsten-schon}, based on coupled resampling, where each pair of particles is drawn with two-phase algorithm:
\begin{enumerate}[(i)~]
  \item $(A^i,\tilde{A}^i) \sim \textsc{MaxCouple}(\mathrm{Categorical}(W^{0:N}), \mathrm{Categorical}(\tilde{W}^{0:N}))$
  \item If $X_{t-1}^{A^i} = \tilde{X}_{t-1}^{\tilde{A}^i}$, then $X_t^i = \tilde{X}_t^i \sim M_t(X_{t-1}^{A^i},\uarg)$; \\ 
  otherwise $X_t^i \sim M_t(X_{t-1}^{A^i},\uarg)$ and $\tilde{X}_t^i \sim M_t(\tilde{X}_{t-1}^{\tilde{A}^i},\uarg)$ independently.
\end{enumerate}
The same procedure 
was introduced earlier in \citep{chopin-singh} for theoretical reasons: to prove the uniform ergodicity of the CPF. 
This algorithm was used within \textsc{CoupledCBPF} (Algorithm \ref{alg:mc_cbpf}) in \citep{lee-singh-vihola}.
In the present paper, we refer to this method as `independent index coupling' (IIC).

The indices $A^{1:N}$ and $\tilde{A}^{1:N}$ can be drawn in $O(N)$ time \citep[cf.][]{jacob-lindsten-schon,lee-singh-vihola}.
When IIC is used, it was shown in \citep[Theorem 6]{lee-singh-vihola} that there exists $N_0$ only depending on \ref{a:strong-mixing} such that for all sufficiently large $N$, $\tau_k = O(k \vee T)$, and has exponential tails. This means that when IIC is used, taking $k = O(T)$ guarantees that $\tau_k = O(T)$ and for bounded $h$, unbiased estimator $Z_k$ from Algorithm \ref{alg:unbiased} has $\mathrm{var}(Z_k) \approx \mathrm{var}_\pi(h)$ \citep{lee-singh-vihola}.

\subsection{Joint index coupling}

Analogous to IMC/JMC, we can also consider a `joint index coupling' algorithm (JIC), which is the maximal coupling of the resampling indices:
$$
   (A^{1:N},\tilde{A}^{1:N}) \sim \textsc{MaxCouple}\big(\mathrm{Categorical}(W^{0:N})^{\otimes N}, 
   \mathrm{Categorical}(\tilde{W}^{0:N})^{\otimes N}\big).
$$
The JIC has $O(N)$ complexity like IIC, because sampling from the product is the usual resampling operation, which is $O(N)$. Products of categorical distributions are proportional to the multinomial distribution, and the ratios $q(A^{1:N})/p(A^{1:N})$ are of the form $\prod_{i=1}^N (\tilde{v}^{A^i}/v^{A^i})$, where $\tilde{v}^i,v^i$ are normalised  $\tilde{W}^i, W^i$, respectively.

We believe that JIC has similar theoretical properties as IIC was shown to satisfy in \citep{lee-singh-vihola}. Indeed, it would be relatively straightforward to modify the proof \citep[Theorem 6]{lee-singh-vihola} to accommodate JIC. However, we do not pursue this further, because our experiments (see Section \ref{sec:experiments}) suggest that JIC is generally less efficient than IIC. 

\subsection{Computational considerations and hybrid strategies}

It is not difficult to see that whenever the CPF states coincide, that is, $X_{t-1}^{0:N} = \tilde{X}_{t-1}^{0:N}$, then all IMC/JMC/IIC/JIC produce $X_{t}^{1:N} = \tilde{X}_{t}^{1:N} \sim \zeta_t^N$. The CPF states can be checked in an implementation, and whenever they coincide, one can simulate once from $\zeta_t^N$ and copy, thereby avoiding a costly coupled sampling step. This can provide a substantial speedup in particular for $O(N^2)$ complexity IMC and JMC.

Nevertheless, the $O(N^2)$ complexity of JMC and IMC makes them less efficient for large $N$. The algorithms can still be useful in practice because a moderate $N$ can be sufficient and calculations are easily parallelised. It might also be possible to develop algorithms that implement \emph{nearly} maximal couplings, which enjoy similar properties, but which have substantially reduced computational cost.

The \textsc{CoupledCBPF} remains valid also if the coupling strategy varies from one time instant to another. It can also depend on the previous filter states and reference trajectories. We believe that it is possible to accommodate our theoretical results for certain strategies that force JMC to be used, say, every $\ell$th step and whenever the non-reference particles are all equal. We have not investigated such `hybrid' strategies further.

\subsection{Potentials with pairwise dependencies}

The coupled CBPF (Algorithm \ref{alg:mc_cbpf}) is applicable only with a Feynman--Kac representation of the form \eqref{eq:feynman-kac-pi}. In particular, the potential $G_t(x_t)$ can only depend on the current state, which means in the HMM setting that we use a `bootstrap filter' \citep{gordon-salmond-smith} like strategy: $M_t$ corresponds to the HMM latent state dynamics.

The earlier coupling method \citep{lee-singh-vihola} allowed for dependence on the previous state, too, having $G_t(x_{t-1},x_t)$, which allows for using more flexible (and potentially more efficient) proposal distributions. We describe a generalisation of Algorithm \ref{alg:mc_cbpf} to potentials with such pairwise dependencies in Appendix \ref{app:pairpotcoup}. The method has the same $O(N^2)$ complexity as IMC/JMC.

\section{Experiments}
\label{sec:experiments}

We investigated empirically the behaviour of iterated \textsc{CoupledCBPF}, or more specifically, Algorithms \ref{alg:mc_cbpf} and \ref{alg:unbiased}, with four models: a simple model on a torus with `barriers', a linear-Gaussian model, a stochastic volatility model and a model for inferring neural spike trains from calcium fluorescence imaging data. Two of the models, the first and the last, satisfy the strong mixing condition \ref{a:strong-mixing}. We tested different choices for \textsc{FwdCoupling} (IMC, JMC, IIC and JIC) discussed in Section \ref{sec:unbiased}.
The source code is available at \url{https://github.com/mvihola/CBPFCouplingPaperCodes}.

\subsection{Barriers on a torus}
\label{sec:strongmixing}

Our first model is a simple model on $\X=[0,1]$ which satisfies the strong mixing condition \ref{a:strong-mixing}. The dynamic model is a mixture of uniform distribution $U(0,1)$ with probability $a\in(0,1)$, and a random walk with uniform $U(-w/2,w/2)$ increments with probability $1-a$. The transition density is:
$$
   M_t(x,y) = \begin{cases}
    a + \frac{1-a}{w}, & \text{if }d(x,y)\le \frac{w}{2}, \\
    a, & \text{otherwise,}
   \end{cases}
$$
where $d(x,y)$ is the metric on the torus: $d(x,y) = |x-y|$ if $|x-y|<0.5$ and $1-|x-y|$ if $|x-y|>0.5$. We used the increment distribution width $w=0.2$ in the experiments and varied $a\in\{0.1,0.3,0.5\}$, ranging from nearly a random walk to uniform transitions. The potentials of the model were all set to:
$$
  G_t(x) = \begin{cases}
    b, & x\in [0,\frac{1}{4}] \cup (\frac{1}{2},\frac{3}{4}] \\
    1-b, &\text{otherwise},
  \end{cases}
$$
where the parameter $b\in \{0.1,0.3,0.5\}$ encodes how big the `barriers' the $G_t$ creates are: with $b=0.5$ the potentials $G_t$ are constant and small $b$ introduces disjoint intervals, which can be challenging to bridge when $a$ is small.

We used the model with time horizons $T=2^9=512,\ldots, 2^{14}=16384$ and ran Algorithm \ref{alg:unbiased} 100 times (with $L=1$ and $k=0$) to investigate coupling times. We varied the number of particles $N+1 = 2^1, \ldots, 2^7$, and capped the computation time of each configuration to a pre-defined limit of 4 hours.

Figures \ref{fig:torusIterations} and \ref{fig:torusIterations2} 
\begin{figure}
  \includegraphics[scale=0.9]{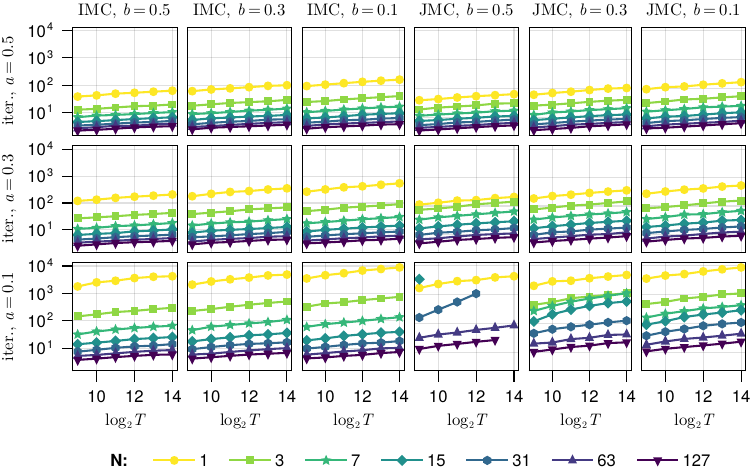}
\caption{Average coupling times in Algorithm \ref{alg:unbiased} (using IMC and JMC) for the barriers model described in Section \ref{sec:strongmixing}. The experiments that failed to complete within the time limit of 4 hours (with $a$ = 0.1, $b$ = 0.5) are omitted from the graphs.}
  \label{fig:torusIterations}
\end{figure}
\begin{figure}
  \includegraphics[scale=0.9]{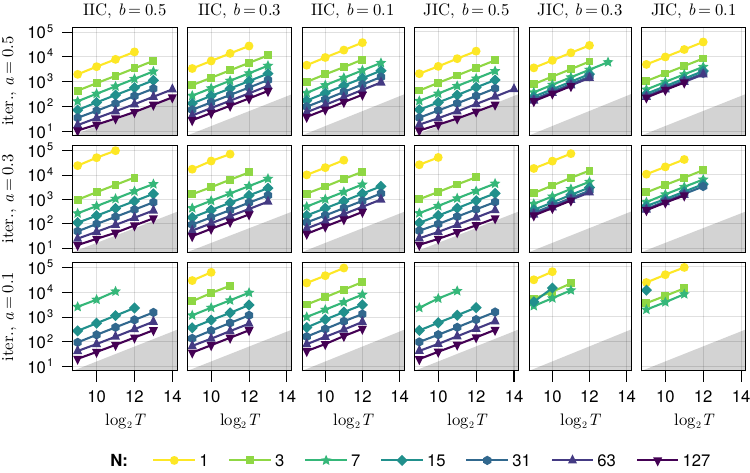}
\caption{Average coupling times in Algorithm \ref{alg:unbiased} (using IIC and JIC) for the barriers model described in Section \ref{sec:strongmixing}. As above, the experiments that failed to complete within the time limit of 4 hours are omitted from the graphs. The grey areas illustrate a linear growth rate.}
  \label{fig:torusIterations2}
\end{figure}
show the number of iterations until coupling for the state couplings (IMC and JMC) and index couplings (IIC and JIC), respectively.
With $a=0.3$ and $a=0.5$, the number of iterations grows slowly with respect to the time horizon $T$ for both IMC and JMC, as anticipated by our theory for the latter. In contrast, with IIC and JIC, the number of iterations appears to increase linearly in $T$, as expected \citep[cf.][]{lee-singh-vihola}. Some experiments with long time horizon $T$ failed to complete within the time limit.

With $a=0.1$ and $b=0.5$, JMC failed to complete with $N=3$ and $N=7$. With $N=31$, JMC completed for the smaller $T$ but the number of iterations appears to increase at a faster (linear) rate. In this scenario, the sufficiently large $N$ for `stable' behaviour appears to be $N \ge 63$. With $N=1$, JMC coincides with IMC, and also completed succesfully. With $a=0.1$ and $b=0.5$, the IIC and JIC behaved similarly. For $a=0.1$ and $b\le 0.3$, JIC struggled to complete with larger $N$. 

Because the index couplings (IIC and JIC) have $O(N)$ computational complexity and state couplings (IMC and JMC) $O(N^2)$, a fair comparison needs to account for this. We multiplied the number of iterations by $N$ or $N^2$, respectively, to get average `cost' factors.
Figure \ref{fig:torusCost}
\begin{figure}
  \includegraphics[scale=0.9]{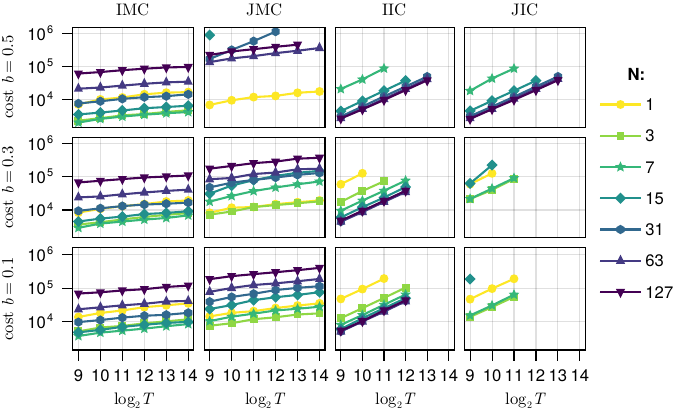}
\caption{Average cost factors for Algorithm \ref{alg:unbiased} in the barriers model described in Section \ref{sec:strongmixing} with $a=0.1$. The experiments that failed to complete within the time limit of 4 hours are omitted from the graphs.}
  \label{fig:torusCost}
\end{figure}
shows the costs for scenarios with $a=0.1$. For shorter time horizons, the index couplings, and in particular the IIC, can be competitive, but for larger $T$, state couplings with suitable $N$ are clearly favourable. When the $O(N^2)$ complexity is accounted for, it becomes evident that $N$ should be chosen carefully for IMC and JMC to achieve the best efficiency.

\subsection{Linear-Gaussian model}
\label{sec:lgmodel}

Our second model is a simple linear-Gaussian model where $(M_t)$ corresponds to stationary AR(1) process $X_t = \rho X_{t-1} + \sigma_X W_t$, with noisy Gaussian observations that are all zero, leading to $G_t \propto \exp(-0.5x^2/\sigma_Y^2)$.
The model has parameters $\theta=(\rho,\sigma_X,\sigma_Y)$ where $\rho\in(-1,1)$ and $\sigma_X,\sigma_Y>0$.

We consider the model with parameters $\theta_1 = (0.9, 1.0, 1.0)$, $\theta_2 = (.99,0.105,1.0)$ and $\theta_3 = (0.99, 0.105, 10)$.
The parameters $\theta_1$ and $\theta_3$ may be viewed as corresponding to time-discretisations of the same continuous-time Ornstein--Uhlenbeck latent process with Gaussian path integral weights, with $\theta_3$ corresponding to ten-fold finer time stepping.
The model with parameter $\theta_3$ is expected to be more challenging, because the (basic) particle filter struggles with finer time-discretisations \citep[cf.][]{chopin-singh-soto-vihola} and $\theta_2$ is an intermediate between $\theta_1$ and $\theta_3$.

We considered the same time horizons $T$ and numbers of particles $N$ as for the barriers model, and investigated averages of 1000 repeated runs of Algorithm \ref{alg:unbiased} capping computation time to 8 hours.

Figure \ref{fig:lgCost}
\begin{figure}
  \includegraphics[scale=0.9]{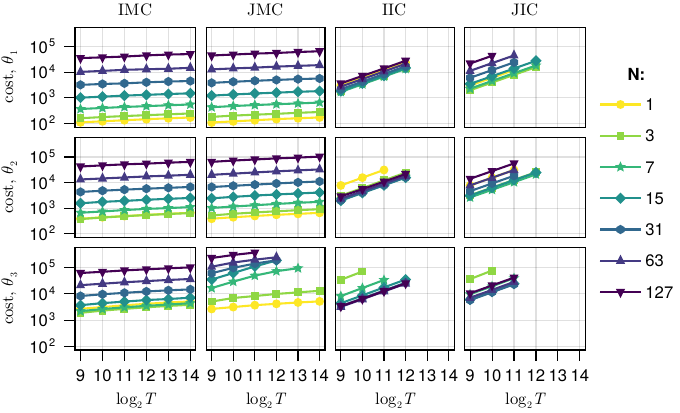}
  \caption{Average cost factors for Algorithm \ref{alg:unbiased} in the linear-Gaussian model described in Section \ref{sec:lgmodel} with parameter configurations $\theta_1$, $\theta_2$ and $\theta_3$. The experiments that failed to complete within the time limit of 8 hours are omitted from the graphs.}
  \label{fig:lgCost}
\end{figure}
shows the average cost factors. The state couplings (IMC and JMC) perform similarly and appear to outperform the index couplings (IIC and JIC) with model parameters $\theta_1$ and $\theta_2$. The number of iterations required by IIC/JIC appears to increase linearly with respect to $T$, in contrast with a very slow increase for IMC/JMC, which appears to be very similar to the results with the barriers model.
The companion of Figure \ref{fig:lgCost} that records iterations instead of cost factors is given in Figure \ref{fig:lgIterations} in Appendix \ref{app:extraexperiments}. For the most challenging model with parameter $\theta_3$, JMC with $N=7$ and $N=15$ led to coupling times which appear to increase faster with respect to $T$. IMC and JMC appear to be generally competitive against IIC/JIC with smaller $T$, and outperform the index couplings when $T$ is large. 

We investigated also in more detail how the couplings develop when Algorithm \ref{alg:mc_cbpf} is iterated. Figure \ref{fig:couplingPlots} shows one realisation of the iterated coupling algorithms for the model with parameter $\theta_3$: black pixel at location $(i,t)$ indicates that the states at iteration $i$ and time index $t$ differ: $X_t(i) \neq \tilde{X}_t(i)$. The IMC with $N=63$ and JMC with $N=255$ appear to be `stable' and couple quickly. This is how we expect JMC to behave if Theorem \ref{thm:main-contraction} were in force and $N$ was large enough.
In contrast, IMC with $N=1$ and JMC with $N=7$ appear to be `critical' coupling systems: they eventually couple, but only after hundreds of iterations. In particular, JMC with $N=7$ leads to branching random walk type pattern, which drifts to the right, while IMC with $N=1$ appears similar to IMC with $N=63$ or JMC with $N=255$, but coupling takes much longer. JMC with $N=7$ has `spontaneous' couplings everywhere, but they tend to break from one iteration to another, except for the  `frontier' of couplings that builds progressively from the start. This progressive behaviour is similar to IIC with $N=15$, where couplings can typically only occur close to the frontier; see the theoretical analysis in \citep{lee-singh-vihola}. 
\begin{figure}
  \includegraphics[width=11.43cm]{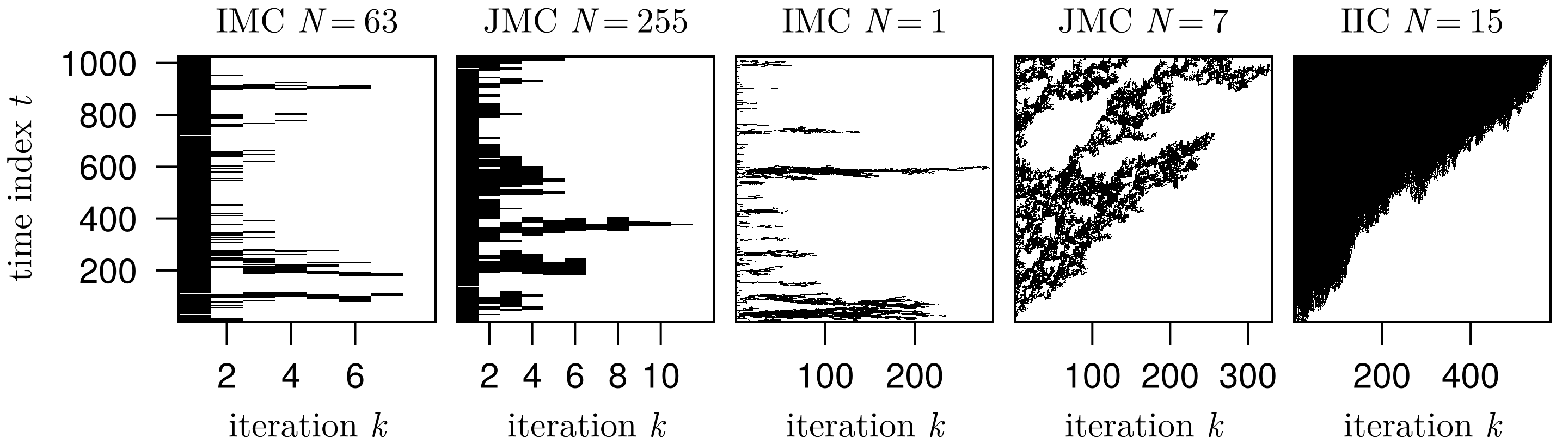}
\caption{Illustration of the couplings as Algorithm \ref{alg:mc_cbpf} is iterated for the linear-Gaussian model (parameter configuration $\theta_3$) described in  Section \ref{sec:lgmodel}. Uncoupled states, i.e. $X_t(i) \neq \tilde{X}_t(i)$, are shown as black pixels.}
  \label{fig:couplingPlots}
\end{figure}

\subsection{Stochastic gradient maximum likelihood}
\label{sec:mle}

We apply the coupling methods to stochastic gradient maximum likelihood estimation. We find parameters of a stochastic volatility (SV) model with leverage, with real financial returns data (log returns corresponding to the MSCI Switzerland Index data from R package AER \citep{kleiber-zeileis}) with a total of $T=4696$ data points.

The SV model, from \citep{omori-chib-shephard-nakajima}, has latent log-volatility $(X_t)$ for observed log-returns $Y_t$:
$$
\begin{aligned}
Y_t &= \epsilon_t \exp(X_t/2)  \\
X_{t+1} &= \mu + \phi (X_t - \mu)  + \eta_t,
\end{aligned}
\quad
\text{where}
\quad
\begin{bmatrix}
  \epsilon_t \\
  \eta_t 
  \end{bmatrix} 
  \sim N\left(\begin{bmatrix}0 \\ 0\end{bmatrix},   
  \begin{bmatrix}1 & \rho\sigma \\
  \rho\sigma & \sigma^2 
\end{bmatrix}\right)
$$
The parameters of the model $\vec{\theta} = (\mu,\phi,\rho,\sigma)$ are the mean volatility $\mu\in\R$, the autoregressive (AR) model coefficient $\phi \in (-1,1)$ and variance $\sigma>0$, and the noise correlation parameter $\rho \in (-1,1)$. Because of the noise correlation, the model is not a HMM, but it can be written in the following Feynman--Kac model form \citep{omori-chib-shephard-nakajima}:
\begin{align*}
  G_t^\theta(x_t) &= p(y_t\mid x_t) = f_N(y_t; 0, e^{x_t}) \\
  M_{t+1}^\theta(x_{t}, x_{t+1}) &= p(x_{t+1}\mid x_{t}, y_t) = f_N(x_{t+1}; \mu_{t+1}(x_{t}), (1-\rho^2)\sigma^2)
\end{align*}
where $f_N(x; \mu,\sigma^2)$ is the density of $N(\mu,\sigma^2)$ at $x$  and $\mu_{t+1}(x_{t}) = \mu + \phi(x_{t}-\mu) + \rho\sigma e^{-x_{t}/2}y_t$ corresponds to the conditional expectation of $x_{t+1}$ given $x_{t}$ and $y_t$.
Furthermore, $M_1(x_1) = f_N(x_1; \mu, \sigma_s^2)$ where $\sigma_s^2 = \sigma^2/(1-\rho^2)$ is the stationary variance of the AR process.

The joint density $p(x_{1:T},y_{1:T})$ under this model is
$$
   \gamma^\theta(x_{1:T}) = M_1^\theta(x_1) G_1^\theta(x_1) \prod_{t=2}^T M_{t}^\theta(x_{t-1},x_t) G_t^\theta(x_t),
$$
and the likelihood of observations $y_{1:T}$ is $L(\theta) = \int \gamma^\theta(x_{1:T}) \ud x_{1:T}$.
We employ a stochastic gradient ascent algorithm in order to iteratively find the maximiser of $L$; see \citep{kantas-etal} for discussion on different (approximate) estimation methods for HMMs (or Feynman--Kac) models.

The conditional density $p(x_{1:T}\mid y_{1:T}) = \pi^\theta(x_{1:T}) = \gamma^\theta(x_{1:T})/L(\theta)$. The score can be written (by Fisher's identity) as an expectation with respect to $\pi^\theta$:
\begin{equation*}
\nabla_{\vec{\theta}} \log L(\theta)
= \int h^\theta(x_{1:T}) \pi^\theta(x_{1:T}) \ud x_{1:T}, \quad \text{with}\quad
h^\theta(x_{1:T}) = \nabla_{\vec{\theta}} \log \gamma^\theta(x_{1:T}).
\end{equation*}
We calculate unbiased estimates of $\nabla_{\vec{\theta}} \log L(\theta)$ using Algorithm \ref{alg:unbiased}, and feed the gradients to the Adam optimiser \citep{kingma-ba} to iteratively maximise the likelihood.

We optimise the log and logit transformed parameters, taking values in $\R^4$, with the initial choices $\mu = \mathrm{var}(y_{1:T})$, $\sigma=1$ and $\rho=\phi=0$. We use $N=16$ particles in all tests, and perform a preliminary run comprising 100 repeated runs of the coupling algorithms to determine a 90\% empirical quantile $q$, which is used to set the parameters of the lagged averaged unbiased estimator $Z^\theta_{k:\ell}$. Specifically, lag $L=q$, $k=q$ and $\ell=5q$, following the recommendations in \citep{douc-jacob-lee-vats} (see Section \ref{sec:unbiased}). The parameters of Adam are set to defaults as advised in \citep{kingma-ba}, except for learning rate which is set to $\alpha=0.01$.

Figure \ref{fig:mle-couplingtimes}  (left column) shows the stochastic gradient algorithm's parameter estimates $\hat{\theta}_t$ with respect to elapsed (wall clock) time. With IMC, convergence happens before 10 minutes have elapsed. JMC also gets near optimum values quickly, but then slows down (and fluctuates). With index couplings (IIC and JIC), convergence is substantially slower, and we did not observe it within the recorded time. However, all methods appear to follow a similar path when plotted with respect to iterations (results not shown), which suggests that all methods can ultimately recover the (approximate) MLE.
\begin{figure}
  \includegraphics[width=\linewidth]{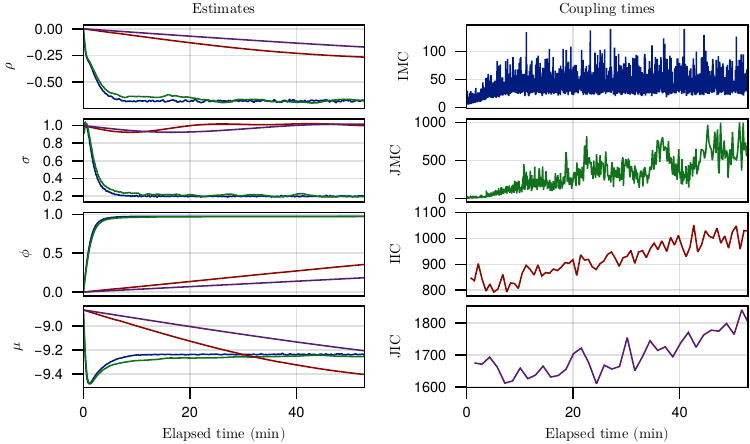}
  \caption{ Iterates of the maximum likelihood estimates (left) for the stochastic volatility model with leverage (Section \ref{sec:mle}) using the MSCI Switzerland Index data, and the observed coupling times (right).}
  \label{fig:mle-couplingtimes}
\end{figure}

The right column of Figure \ref{fig:mle-couplingtimes} shows the observed coupling times, that is, iterations until we observe full coupling to happen. The distribution of coupling times changes with respect to iterations, because of the changing model. With all forward coupling methods, the coupling times are smaller with the initial parameter values and slow down closer to the MLE. While JMC is competitive with IMC initially, the JMC gets much worse, and in fact sometimes hit the pre-defined maximum number of iterations 1000. The changing coupling times suggests that it could be useful to adjust the parameters $L$, $k$ and $\ell$ iteratively.

Figure \ref{fig:msci-fit} shows the MSCI data, the estimated logarithmic volatility corresponding to the model with the MLE $\hat{\theta}\approx(-9.24,0.97,-0.67,0.20)$ (averaged from the last 10\% of IMC estimates), and a standard normal quantile plot of $\hat{\epsilon}_t = y_t e^{-X_t/2}$, where $X_{1:T} \sim \pi^{\hat{\theta}}$, indicating a good fit. The $\hat{\epsilon}_t$ did not have substantial autocorrelations either (results not shown).
\begin{figure}
  \includegraphics[width=\linewidth]{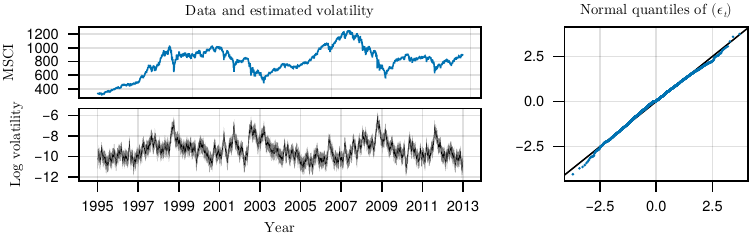}
  \caption{The MSCI data, the estimated log-volatilities with MLEs (median and 95\% confidence band) and normal quantile plot for $\hat{\epsilon}_t$ (see Section \ref{sec:mle}).}
  \label{fig:msci-fit}
\end{figure}

\subsection{Calcium fluorescence imaging}
\label{sec:fluorescence}

Our next example is a model for inferring neuronal spike trains from two-photon fluorescence observations of calcium ($\mathrm{Ca}^{2+}$) concentration \citep{vogelstein-etal}. An inference objective is the spike times, which
is assumed to follow a Poisson process with rate parameter $p$, that is, $N_t \sim \mathrm{Poisson}(p\Delta)$, where $\Delta = 0.01$ (seconds) is the time step. Another inference objective is the calcium concentration $C_t$, which is assumed to follow the model
$$
   C_t = \bigg( 1 - \frac{\Delta}{\tau}\bigg) C_{t-1}  + \frac{\Delta}{\tau} c_{0} + a N_{t-1} + \sigma_C \sqrt{\Delta} \epsilon_{C,t}, \quad \epsilon_{C,t} \sim N(0,1)
$$
where $\tau>0$, $c_0$, $a>0$ and $\sigma_C>0$ are parameters. The fluorescence observations are assumed to be noisy observations of calcium concentrations:
$$
  F_t = \alpha C_t + \beta + \sigma_F \epsilon_{F,t} \quad \epsilon_{F,t}\sim N(0,1),
$$
where $\alpha,\beta\in\mathbb{R}$ and $\sigma_F,\xi>0$ are parameters. The latent state is  $X_t = (C_t, N_t)$,  the calcium concentration and spike count. The units of $C_t$ and $F_t$ are unimportant for the task, so we assume $\alpha=1$ and $\beta=0$, rescale the data to $[0,1]$, and assume the model of $F_t$ and $C_t$ to be truncated on this interval. We also truncate $N_t$ to be at most ten.

We consider real two-photon imaging data \citep{cai3-dataset,theis-etal}. We fix the parameter $p\approx0.38$ based on the reported true spike rate of the data, and the observation noise $\sigma_F \approx 0.028$ based on residuals of the signal with respect to its trend (computed with a moving average) from a period without spikes. We estimate the rest of the model parameters using $10,000$ iterations of a stochastic gradient method similar to Section \ref{sec:mle}. We estimate the gradients at $\theta^{(k)}$ by $h^{\theta^{(k)}}(X_{1:T}^{(k+1)})$, where the updates $X_{1:T}^{(k)}\to X_{1:T}^{(k+1)}$ come from one CBPF iteration with parameters $\theta^{(k)}$ and $N=64$ particles; this is an instance of stochastic approximation with Markovian noise \citep[cf.][]{benveniste-metivier-priouret}. The Feynman-Kac model coincides with the HMM above, except for that we used fixed Poisson distribution with rate $0.03$ to propose jumps for $N_t$. It is straightforward to verify that this model satisfies \ref{a:strong-mixing}.
The parameters were estimated to be $\sigma_C\approx 0.05$, $\tau\approx 3.0$, $a\approx 0.060$ and $c_0 \approx 0.23$.

We then ran CBPF for 9,000 iterations and 1,000 burn-in with the above estimated parameters to  infer both  $C_t$ and $N_t$. For the estimated $N_t$, and its reference value given in the dataset, we show the running sum over a one second window. Figure \ref{fig:cai3} shows the results of the inference: the 90\% credible interval for $C_t$ and the posterior mean of $N_t$. The inferred  $\mathrm{Ca}^{2+}$ follows  the data closely, and the mean of $N_t$ appears to align well with the reference values. We also show the normal quantiles of the errors $\hat{\epsilon}_{F,t} = (F_t-C_t^{(n)})/\sigma_{F}$, where $C_{1:T}^{(n)}$ is the final CBPF sample of the smoothing trajectory, suggesting a good fit.

Figure \ref{fig:cai3} (bottom) shows the per time index `acceptance rates,' that is, the rate when the reference changed at that time index. The rate is high in most places, but indicates some `stickiness' near the jumps ($N_t\ge 1$). Such behaviour could likely be rectified by using a more clever proposal and/or a more flexible model for jumps. 
\begin{figure}
  \includegraphics[width=\linewidth]{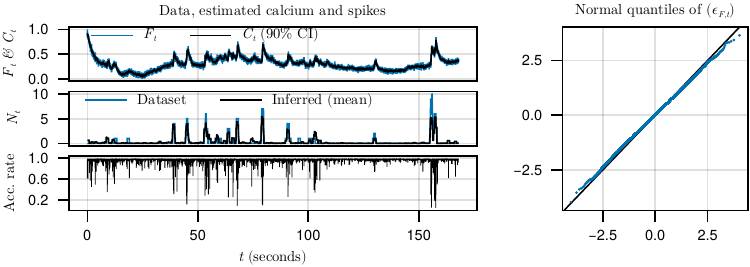}
  \caption{The (rescaled) fluorescence data and estimated $\mathrm{Ca}^{2+}$ concentration (top left); the supplied reference spikes and estimated $N_t$ (mean) within one second window (middle left); the rate of CBPF's reference changes (bottom left); and (right) the normal quantile plot for residuals (see Section \ref{sec:fluorescence}).}
  \label{fig:cai3}
\end{figure}
We also ran the iterated \textsc{CoupledCBPF} with IMC and investigated the coupling times $\tau$. Figure \ref{fig:cai3-couplings} (left) shows the distribution of 1000 independent realisations of $\tau$. Although the overall coupling time $\tau$ matters, we also investigated  the coupling times of each individual time instant, that is, $\tau_t = \inf\{n\ge 1\given [\mathbf{S}_k]_t=[\tilde{\mathbf{S}}_k]_t \;\forall k\ge n\}$. The distributions for $\tau_t$ are illustrated in Figure \ref{fig:cai3-couplings} for three selected segments. Such plots could potentially be useful when tuning the model and the proposals in the CBPF.
\begin{figure}
  \includegraphics[width=\linewidth]{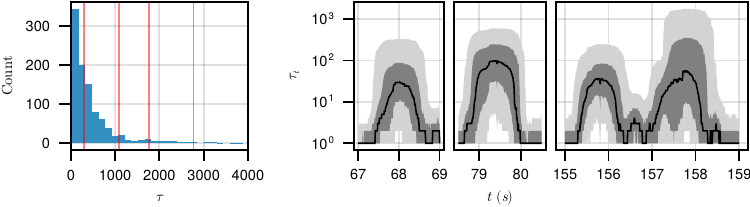}
  \caption{Figure on the left shows histogram of the (overall) coupling time of 1000 repeated runs of IMC with the model of Section \ref{sec:fluorescence}. The vertical red lines indicate the 50\%, 90\%, 95\% and 99\% quantiles. The three
  plots on the right illustrate the median, 50\% and 90\% intervals of per-time-index final coupling time around three `stickiest' segments.}
  \label{fig:cai3-couplings}
\end{figure}

\section{Discussion}
\label{sec:discussion}

We introduced a new coupling of the CBPF, which is guaranteed to admit a coupling time that increases only logarithmically in the time horizon length, if the number of particles is large enough. As a direct corollary, the mixing time of CBPF must behave similarly, which consolidates the empirical evidence that CBPF scales extremely well for long time horizons \citep[e.g.][]{lindsten-jordan-schon,lindsten-schon,chopin-papaspiliopoulos}. The coupling, and a number of modified couplings, are implementable and can be used for unbiased estimation.

Our theoretical results rely on the strong mixing assumption \ref{a:strong-mixing}, and the `sufficient' number of particles $N_{\rm min}$ which guarantees $r_N>1$ in Theorem \ref{thm:cbpf-mixing} depends on the strong mixing constants $0 <\Mlo\le \Mhi <\infty$, $0 < \Glo \le \Ghi<\infty$. Our experiments suggest that $N$ must indeed be `large enough', and $N_{\rm min}$ does need to grow when the strong mixing constants get worse. 

In many applications $\X=\R^d$ and \ref{a:strong-mixing} is not satisfied, like in our linear-Gaussian and stochastic volatility experiments, yet we still see qualitative behaviour consistent with our theory. If the potentials still satisfy the upper bound $\Ghi<\infty$, we know that the CPF kernel, as shown in \citep{andrieu-lee-vihola}, and the CBPF kernel (see \eqref{eq:andrieu-lee-vihola-bound}) remain uniformly ergodic. It is a natural question
whether a similar logarithmic scaling behaviour can hold in such a setting, and whether our theoretical approach could be extended to cover alternative mixing conditions. For instance, an extension might be based on conditions which guarantee a multiplicative drift \citep[cf.][]{douc-fort-moulines-priouret,whiteley-stability}, and/or argue that `typical' observed values of the potentials/transition densities still satisfy \ref{a:strong-mixing}.

Another question, related to the point above, is how mixing time and/or $N_{\rm min}$ evolve in increasing state dimension $d$? Intuitively, the `typical' variability of the values of the potentials/transition densities increases in $d$, and therefore a sufficient $N_{\rm min}$ might quickly become very large. Recently, an extension of the CPF based on Markov chain Monte Carlo transitions within the state updates was investigated in \citep{finke-thiery}. A natural direction of research is to investigate whether our techniques could be extended to cover their algorithm, which can be applied in higher dimensional state spaces.

Our strong mixing assumption \ref{a:strong-mixing} only allows for potentials $G_t(x_t)$ that depend on the current state $x_t$. The earlier result of \citep{lee-singh-vihola} allowed for dependence on the previous state, too, having $G_t(x_{t-1},x_t)$. We suspect that the theoretical results of the present paper could hold also in this setting, that is, with the generalised algorithm discussed in Appendix \ref{app:pairpotcoup}, but the extension is non-trivial.

CBPF (Algorithm \ref{alg:cbpf}) is valid also with resampling strategies other than multinomial \citep{chopin-singh,karppinen-singh-vihola}. When other resamplings are used, the filter's predictive distribution is no longer product form, but still admits the same one-step predictive distributions $\fsp_{t}$ as its marginals. In this case, coupling is still possible using Algorithm \ref{alg:mc_cbpf}, if \textsc{FwdCoupling} implements a coupling of the related laws. It may be possible to extend our theoretical analysis for such a case, and also to the bridge backward sampling generalisation of the CBPF \citep{karppinen-singh-vihola}. 

The blocked CPF introduced in \citep{singh-lindsten-moulines} is a variant of CPF/CBPF, which is applied to blocks of indices at a time, and which enjoys parallelisation benefits. The blocked CPF satisfies a similar mixing behaviour as Theorem \ref{thm:cbpf-mixing} guarantees for the CBPF, as long as the block length, block overlap and number of particles $N$ are chosen appropriately. The CBPF can be simpler to apply, because it only requires $N$ to be chosen, and the explicit coupling construction can also be used for unbiased estimation. The CBPF can also be used as a within-block sampler in the blocked CPF. Our result applied in this context suggest that $O(\log \tau)$ iterations for a block of length $\tau$ leads to a near-perfect Gibbs update, which could potentially be a practically useful strategy.

An alternative method for approximating smoothing expectations is to use variations of the forward-filtering backward-smoothing (FFBS) algorithm, which also involve running a particle filter and then using the transition densities (or a sophisticated coupling strategy) to approximate the smoothing distribution \citep[see, e.g.,][and references therein]{olsson-westerborn,dau_chopin_smoothing}.
The approach here is quite different as those algorithms are consistent as $N \to \infty$, but biased for finite $N$, whereas the CBPF leaves the smoothing distribution invariant and so arbitrarily accurate approximations can be obtained with fixed $N$ by iterating the CBPF more times. It is also possible to use the CBPF state variables
within an FFBS style estimator, which is unbiased in stationarity \citep{cardoso2023state}. Our result is directly relevant to the analysis of such methods.

In our experiments, the `independent maximal coupling' (IMC) algorithm was equally efficient, and sometimes notably more efficient, than the `joint maximal coupling' (JMC), which we analysed theoretically. We believe that IMC enjoys similar guarantees to JMC, but extending our results to cover IMC is not straightforward, and therefore its analysis remains an open question.

\section*{Acknowledgements}

JK and MV were supported by Research Council of Finland grant `FiRST (Finnish Centre of Excellence in Randomness and Structures)' (346311, 364216).
AL was supported by the Engineering and Physical Sciences Research Council (EP/R034710/1, EP/Y028783/1).
SSS holds the Tibra Foundation professorial chair and gratefully acknowledges research funding as
follows: This material is based upon work supported by the Air Force Office of
Scientific Research under award number FA2386-23-1-4100. 
The authors wish to acknowledge CSC --- IT Center for Science, Finland, for computational resources.


\appendix

\section{Properties of the CBPF forward process}
\label{app:cpf-theory}

  This appendix is devoted to properties of the forward process of the CBPF, that is, lines  \ref{line:cbpf-forward-start}--\ref{line:cbpf-forward-end} of Algorithm \ref{alg:cbpf}. The main results that pertain to these lines are lemmas  \ref{lem:fullcoupling-tv} and \ref{lem:tvlemma_new}.
  
  We use the common notation for a probability measure $\mu$, real-valued function $f$ and Markov kernel $M$: we write $\mu(f) = \int f(x) \mu (\ud x)$, and $(\mu M)(A) = \int \mu(\ud x) M(x,A)$. The composition of two Markov kernels $M_1$ and $M_2$ is $(M_1 M_2)(x,A) = \int M_1(x,\ud y) M_2(y,A)$. The oscillation of a function is denoted as $\osc(f) = \sup_x f(x) - \inf_x f(x)$, and $\| X \|_p = (\E|X|^p)^{1/p}$ stands for the $L^p$-norm of a random variable $X$. The Markov kernels corresponding to the HMM/Feynman--Kac transition densities $M_t$ are denoted by the same symbols.
  
  The Hellinger distance $\HD(\mu,\nu)$ between two distributions $\mu$ and $\nu$ is defined via
  $$
     \HD^2(\mu,\nu) = \frac{1}{2}\int \big(\sqrt{p(x)}-\sqrt{q(x)}\big)^2 \lambda(\ud x) = 1 - \int \sqrt{p(x)q(x)} \lambda(\ud x),
  $$
  where $p = \ud \mu/\ud \lambda$ and $q = \ud \nu/\ud \lambda$ are densities of $\mu$ and $\nu$ with respect to some common dominating measure $\lambda$ (which always exists, and the definition is invariant to the choice).
  We may upper bound the expected total variation distance of product measures by their expected Hellinger distance:
  \begin{lemma}
    \label{lem:dimfree-new}
    Let $\mu$ and $\nu$ be random
    probability measures. Then, for any $N \geq 1$:
    \begin{align*}
      \E\big\Vert   \mu^{\otimes N} - \nu^{\otimes N} \big\Vert _{\mathrm{TV}} 
     & \leq\sqrt{1-\left(1-\E [\HD^{2}(\mu, \nu)] \right)^{2N}}.
    \end{align*}
    \end{lemma}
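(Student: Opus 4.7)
The plan is to first establish the inequality for deterministic probability measures and then average using Jensen's inequality. For the deterministic bound, I would combine two classical ingredients: the tensorization of the Hellinger affinity and the total variation--Hellinger comparison. The Hellinger affinity $1 - \HD^2(\mu,\nu) = \int \sqrt{pq}\,\lambda(\ud x)$ factorises over products, so
$$
\HD^2(\mu^{\otimes N}, \nu^{\otimes N}) \;=\; 1 - \bigl(1 - \HD^2(\mu,\nu)\bigr)^N.
$$
The standard TV--Hellinger inequality $\tv{P-Q} \le \HD(P,Q)\sqrt{2 - \HD^2(P,Q)}$, squared, reads $\tv{P-Q}^2 \le 1 - (1-\HD^2(P,Q))^2$; applying it to $P = \mu^{\otimes N}$ and $Q = \nu^{\otimes N}$ gives the deterministic bound
$$
\tv{\mu^{\otimes N} - \nu^{\otimes N}} \;\le\; \sqrt{1 - \bigl(1 - \HD^2(\mu,\nu)\bigr)^{2N}}.
$$

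The second step is to take expectations over the randomness in $(\mu,\nu)$ and pull them inside the square root. Let $\phi(x) = \sqrt{1 - (1-x)^{2N}}$ on $[0,1]$. Since $x \mapsto (1-x)^{2N}$ is convex on $[0,1]$, the inner map $x \mapsto 1 - (1-x)^{2N}$ is concave and nonnegative, and composing it with the concave, nondecreasing function $\sqrt{\uarg}$ preserves concavity. Jensen's inequality therefore yields
$$
\E\bigl[\phi(\HD^2(\mu,\nu))\bigr] \;\le\; \phi\bigl(\E[\HD^2(\mu,\nu)]\bigr),
$$
which, combined with the deterministic bound, is exactly the claim.

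I do not anticipate any real obstacle here: both the Hellinger tensorization and the TV--Hellinger inequality are classical one-liners. The only mild care needed is the concavity verification of $\phi$ to justify Jensen in the random-measure setting, which reduces to the elementary convexity of $x \mapsto (1-x)^{2N}$ and the concavity-preserving composition with $\sqrt{\uarg}$. The whole argument should compile to a few lines.
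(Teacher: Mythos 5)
Your proposal is correct and follows essentially the same route as the paper: the squared TV--Hellinger comparison you use is exactly Le Cam's inequality as cited there, combined with the tensorization $1-\HD^2(\mu^{\otimes N},\nu^{\otimes N})=(1-\HD^2(\mu,\nu))^N$ and Jensen's inequality. Your explicit verification that $x\mapsto\sqrt{1-(1-x)^{2N}}$ is concave is a welcome elaboration of the step the paper leaves implicit.
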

  \begin{proof}
  The proof follows from Le Cam's inequality \citep[e.g.][Lemma 28]{karjalainen-lee-singh-vihola}:
  $$
  \| \mu^{\otimes N} - \nu^{\otimes N} \|_{\mathrm{TV}}^2 \le
  1 - \big(1- \HD^2(\mu^{\otimes N}, \nu^{\otimes N})\big)^2,
  $$
  the fact that $1- \HD^2(\mu^{\otimes N}, \nu^{\otimes N}) = \big(1-\HD^2(\mu,\nu)\big)^N$ and Jensen's inequality.
  \end{proof}
  
  The following result is a restatement of \citep[Lemma 29]{karjalainen-lee-singh-vihola}, and states that expected Hellinger distance may be upper bounded by $L^2$ `errors':
  \begin{lemma}
    \label{lem:h2bound}
    Let $\mu$ and $\nu$ be two random probability measures. If \textup{\ref{a:strong-mixing} (M)} holds for a Markov kernel $M$, then
    \[
    \E \HD^2(\mu M, \nu M) \leq c'  \sup_{\osc(\phi) \leq 1} \E (| \mu(\phi) - \nu(\phi) |^2),
    \]
    where $c' = \frac{1}{8}(\Mhi / \Mlo)^2$.
  \end{lemma}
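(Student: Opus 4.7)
The plan is to reduce $\HD^2(\mu M, \nu M)$ to an $L^2$-type expression in $(\mu - \nu)$ evaluated at a bounded-oscillation test function, exploiting the two-sided bounds on $M$ from \ref{a:strong-mixing}. Throughout, fix a dominating measure $\lambda$ for which $M$ has a density $M(x,\uarg)$; note that \ref{a:strong-mixing}~$(M)$ forces $\lambda(\X) \le 1/\Mlo$, since $1 = \int M(x,y)\,\lambda(\ud y) \ge \Mlo \lambda(\X)$ for any $x$.

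First, I would use the identity $(\sqrt{a} - \sqrt{b})^2 = (a-b)^2/(\sqrt{a}+\sqrt{b})^2$, valid for $a,b>0$. Let $p(y) = (\mu M)(\ud y)/\ud \lambda$ and $q(y) = (\nu M)(\ud y)/\ud \lambda$. Since $\mu, \nu$ are probability measures and $M(x,y) \ge \Mlo$ pointwise, $p(y), q(y) \ge \Mlo$ almost everywhere, so $(\sqrt{p(y)} + \sqrt{q(y)})^2 \ge 4\Mlo$. This yields
\[
\HD^2(\mu M, \nu M)
= \frac{1}{2}\int \frac{(p(y)-q(y))^2}{(\sqrt{p(y)}+\sqrt{q(y)})^2}\,\lambda(\ud y)
\le \frac{1}{8\Mlo}\int (p(y)-q(y))^2\,\lambda(\ud y).
\]

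Next, I would write $p(y) - q(y) = (\mu - \nu)\bigl(M(\uarg, y)\bigr)$. The function $x \mapsto M(x,y)$ is deterministic (depends only on $y$ and the fixed kernel $M$), and its oscillation is at most $\Mhi - \Mlo \le \Mhi$. Since $(\mu - \nu)$ annihilates constants, we may subtract $\inf_x M(x,y)$ and rescale by $(\Mhi - \Mlo)$, obtaining a specific deterministic $\psi_y$ with $\osc(\psi_y) \le 1$ such that
\[
(p(y) - q(y))^2 \le (\Mhi - \Mlo)^2 \,|(\mu - \nu)(\psi_y)|^2
\le (\Mhi - \Mlo)^2 \sup_{\osc(\phi) \le 1} |(\mu - \nu)(\phi)|^2.
\]
Crucially, the sup on the right is taken over deterministic $\phi$, so after taking expectations and applying Fubini, the pointwise bound gives $\E|(\mu-\nu)(\psi_y)|^2 \le \sup_{\osc(\phi)\le 1} \E|(\mu-\nu)(\phi)|^2$ uniformly in $y$.

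Combining, $\E\HD^2(\mu M, \nu M) \le \frac{(\Mhi-\Mlo)^2 \lambda(\X)}{8\Mlo} \sup_{\osc(\phi)\le 1} \E|(\mu-\nu)(\phi)|^2$, and the bound $\lambda(\X) \le 1/\Mlo$ together with $(\Mhi-\Mlo)^2 \le \Mhi^2$ yields the stated constant $c' = \tfrac{1}{8}(\Mhi/\Mlo)^2$. The main subtlety (not really an obstacle) is making the distinction between $\E\sup_\phi$ and $\sup_\phi \E$: one must notice that the test function $\psi_y$ is deterministic in $y$ alone, so swapping integration over $y$ with the expectation over the random measures is routine, and the sup over deterministic $\phi$ can be taken outside the expectation at the pointwise bound level.
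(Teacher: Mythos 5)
Your proof is correct. Note that the paper does not prove this lemma itself --- it is imported verbatim as Lemma~29 of \citep{karjalainen-lee-singh-vihola} --- and your argument (lower-bounding $\sqrt{p}+\sqrt{q}$ by $2\sqrt{\Mlo}$ using the pointwise lower bound on the transition density, recentring $x\mapsto M(x,y)$ so that it becomes a deterministic test function of oscillation at most one, and then invoking $\lambda(\X)\le 1/\Mlo$) is exactly the standard route taken there, with the correct handling of the $\sup_\phi\E$ versus $\E\sup_\phi$ distinction and the right constant. A cosmetic simplification: taking $c_y=\Mlo$ rather than $\inf_x M(x,y)$ avoids any measurability quibble about the recentring constant, though as you note this does not affect the Fubini step, which is applied to $(p-q)^2$ directly.
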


  We need the following operators associated with the Feynman--Kac model, which map probability measures $\mu$ to probability measures:
  \begin{align*}
    \Psi_t(\mu)(\ud x) &= \frac{G_t(x) \mu(\ud x)}{\mu(G_t)}, & t&\in\{1{:}T\}, \\
    \Phi_{t}(\mu) &= \Psi_{t-1}(\mu)M_{t}, & t&\in\{2{:}T\}, 
  \end{align*}
  and the compositions of updates from time $t$ to time $u$ as:
  \begin{align*}
    \Phi_{t,u} &= \Phi_u \circ \cdots \circ \Phi_{t+1}, & 1 &\leq t < u\le T, 
  \end{align*}
  and $\Phi_{t,t}$ stands for the identity operator.
  In the engineering and statistics literature, $\eta_t = \Phi_{1,t}(M_1)$ is known as the (ideal) `predictor'.  The following result, which is a restatement of \citep[Proposition 4.3.6]{del2004feynman}, shows that the ideal predictor forgets its initial distribution exponentially fast.
  
  \begin{lemma}
    \label{lem:fkcontract}
    Assume \textup{\ref{a:strong-mixing}}. For all probability measures $\mu$ and $\nu$ on $\X$, $1 \le t \le T$ and $0 \le k\le T-t$:
    \[
     \sup_{\mu, \nu} \| \Phi_{t,t+k}(\mu) - \Phi_{t,t+k}(\nu) \|_{\mathrm{TV}} \leq \beta^{k},
     \]
  where $\beta = 1-(\underline M / \bar M)^2$.
  \end{lemma}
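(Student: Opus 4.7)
The plan is to establish a one-step contraction $\| \Phi_{t+1}(\mu) - \Phi_{t+1}(\nu) \|_{\mathrm{TV}} \le \beta \| \mu - \nu \|_{\mathrm{TV}}$ with $\beta = 1 - (\Mlo/\Mhi)^2$, and then iterate $k$ times.

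For the one-step contraction, set $\epsilon = \Mlo/\Mhi$. The key structural fact supplied by \ref{a:strong-mixing}(M) is the two-sided density bound $\Mlo \le M_{t+1}(x,y) \le \Mhi$ valid for all $x, x', y \in \X$, which yields both the pointwise ratio bound $M_{t+1}(x, y)/M_{t+1}(x', y) \in [\epsilon, \epsilon^{-1}]$ and, by integration against an arbitrary probability measure $\eta$, a minorization of the form $M_{t+1}(x, \cdot) \ge \epsilon \cdot (\eta M_{t+1})(\cdot)$ in the density sense. Writing the Feynman--Kac flow as
\[
\Phi_{t+1}(\mu)(f) = \frac{\mu(G_t \cdot M_{t+1} f)}{\mu(G_t)},
\]
the difference $\Phi_{t+1}(\mu)(f) - \Phi_{t+1}(\nu)(f)$ for test functions $f$ with $\osc(f) \le 1$ can be rewritten over the common denominator $\mu(G_t)\nu(G_t)$ as an integral of uniformly bounded ratios against the signed measure $\mu - \nu$. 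The two-sided density bound on $M_{t+1}$ controls these ratios and produces the rate $\beta = 1 - \epsilon^2$.

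Iterating the one-step contraction then gives
\[
\| \Phi_{t,t+k}(\mu) - \Phi_{t,t+k}(\nu) \|_{\mathrm{TV}} \le \beta^k \| \mu - \nu \|_{\mathrm{TV}} \le \beta^k,
\]
using $\| \mu - \nu \|_{\mathrm{TV}} \le 1$, which is the claim.

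The main technical hurdle is obtaining the $\epsilon^2$ factor rather than just $\epsilon$; this is the classical Dobrushin improvement and requires exploiting the upper and lower density bounds in \ref{a:strong-mixing}(M) simultaneously, with the mixing entering once through the $M_{t+1}$-smoothing of the test function $f$ and once through the comparison of the normalising constants $\mu(G_t)$ and $\nu(G_t)$. Since the result is explicitly stated as a restatement of \citep[Proposition 4.3.6]{del2004feynman}, the argument can equivalently be invoked directly from that reference rather than rederived in full.
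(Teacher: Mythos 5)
The paper offers no proof of this lemma at all: it is stated verbatim as a restatement of \citet[Proposition 4.3.6]{del2004feynman}, so your closing remark that the result ``can equivalently be invoked directly from that reference'' is exactly what the authors do. The problem is with the derivation you sketch. The one-step Lipschitz contraction
$\| \Phi_{t+1}(\mu) - \Phi_{t+1}(\nu) \|_{\mathrm{TV}} \le \beta \, \| \mu - \nu \|_{\mathrm{TV}}$
with $\beta = 1-(\Mlo/\Mhi)^2$ is \emph{false}, so the plan ``prove one step, then iterate'' cannot work. The Boltzmann--Gibbs step $\Psi_t$ can inflate total variation by a factor as large as $\Ghi/\Glo$ (this is the content of Lemma \ref{lem:bgtransformlp}), while the subsequent Markov step only recovers a factor $1-\Mlo/\Mhi$; since your $\beta$ contains no $G$-constants, no such uniform Lipschitz bound can hold. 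Concretely, take $\X=\{1,2\}$, $M(1,\cdot)=(\tfrac12,\tfrac12)$, $M(2,\cdot)=(0.1,0.9)$, $G(1)=g$, $G(2)=1$, $\mu=\delta_1$ and $\nu=(1-\delta)\delta_1+\delta\,\delta_2$. Then $\|\mu-\nu\|_{\mathrm{TV}}=\delta$ while
$\|\Phi(\mu)-\Phi(\nu)\|_{\mathrm{TV}} = \tfrac{\delta}{(1-\delta)g+\delta}\,\|M(1,\cdot)-M(2,\cdot)\|_{\mathrm{TV}}$,
which for $g=\delta^2$ tends to $0.4$ as $\delta\to 0$; the ratio to $\|\mu-\nu\|_{\mathrm{TV}}$ blows up. Note this does not contradict the lemma, because the lemma asserts an \emph{absolute} bound $\sup_{\mu,\nu}\|\Phi_{t,t+k}(\mu)-\Phi_{t,t+k}(\nu)\|_{\mathrm{TV}}\le\beta^k$, not a contraction relative to the input distance.

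The correct mechanism, and the one in Del Moral's proof, is different. Writing $\Phi_{t,t+k}(\mu)=\mu Q_{t,t+k}/\mu Q_{t,t+k}(1)$, one observes that $\Phi_{t,t+k}(\mu)$ is a convex combination of the measures $\Phi_{t,t+k}(\delta_x)=\bar P_{t,t+k}(x,\cdot)$, where $\bar P_{t,t+k}(x,\ud y)=Q_{t,t+k}(x,\ud y)/Q_{t,t+k}(1)(x)$ is a genuine (linear) Markov kernel; hence the supremum over all $\mu,\nu$ equals the Dobrushin coefficient of $\bar P_{t,t+k}$. That kernel factorises into $h$-transformed kernels $S_u(x,\ud y)\propto M_u(x,\ud y)\,h_u(y)$ with $h_u=Q_{u,t+k}(1)$, and \ref{a:strong-mixing}(M) gives each $S_u$ a minorisation with constant $(\Mlo/\Mhi)^2$ --- one factor $\Mlo/\Mhi$ from minorising $M_u(x,\cdot)$ and one from bounding the oscillation of $x\mapsto M_u(h_u)(x)$. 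Submultiplicativity of Dobrushin coefficients then yields $\beta^k$. In particular, your attribution of the second $\epsilon$ factor to ``the comparison of the normalising constants $\mu(G_t)$ and $\nu(G_t)$'' is off: that comparison would bring in $\Glo/\Ghi$, whereas the stated rate is independent of $G$. If you want to avoid reproducing this argument, simply citing \citet[Proposition 4.3.6]{del2004feynman} is the right (and the paper's) move; but the per-step contraction you propose to prove is not a valid stepping stone.
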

  
  We also need the following well-known fact \citep[e.g.][Lemma 8]{karjalainen-lee-singh-vihola}, that the weighting operator can only inflate the $L^p$ errors by a constant factor.
  
  \begin{lemma}[\cite{karjalainen-lee-singh-vihola}, Lemma 8]
  \label{lem:bgtransformlp}
  Let $\mu$ and $\nu$ be random probability measures and $p\geq 1$. If \textup{\ref{a:strong-mixing} (G)} holds, then for all $1\le t\le T$:
  \[
  \sup_{\osc(\phi) \leq 1} \|\Psi_t(\mu)(\phi) - \Psi_t(\nu)(\phi) \|_p \leq c \sup_{\osc(\phi) \leq 1} \|\mu(\phi) - \nu(\phi) \|_p,
  \]
  where $c= \Ghi / \Glo$.
  \end{lemma}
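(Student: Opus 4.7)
The plan is to prove the lemma by direct algebraic manipulation of the normalized ratio $\Psi_t(\mu)(\phi)=\mu(G_t\phi)/\mu(G_t)$, followed by an appeal to \ref{a:strong-mixing}~$(G)$ to control the denominator and to the oscillation of the relevant deterministic test functions to match the supremum on the right-hand side. The argument is pathwise in $\omega$ and lifts to $L^p$ via Minkowski's inequality.

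Since $\Psi_t(\mu)(\phi+c)=\Psi_t(\mu)(\phi)+c$ for any constant $c$, the difference $\Psi_t(\mu)(\phi)-\Psi_t(\nu)(\phi)$ is translation invariant in $\phi$, so I may assume $\phi\in[0,1]$ without loss of generality. The key identity is the ``common denominator'' decomposition
\[
\Psi_t(\mu)(\phi)-\Psi_t(\nu)(\phi)
=\frac{\mu(G_t\phi)-\nu(G_t\phi)}{\mu(G_t)}
-\Psi_t(\nu)(\phi)\,\frac{\mu(G_t)-\nu(G_t)}{\mu(G_t)},
\]
obtained from $A/a-B/b=(A-B)/a-(B/b)(a-b)/a$. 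Under \ref{a:strong-mixing}~$(G)$, $\mu(G_t)\ge\Glo$ pathwise, and $\Psi_t(\nu)(\phi)\in[0,1]$ since $\phi\in[0,1]$. The triangle inequality therefore yields the pathwise bound
\[
|\Psi_t(\mu)(\phi)-\Psi_t(\nu)(\phi)|\le\frac{1}{\Glo}\bigl(|\mu(G_t\phi)-\nu(G_t\phi)|+|\mu(G_t)-\nu(G_t)|\bigr).
\]

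The essential observation is that the test functions $G_t\phi$ and $G_t$ are \emph{deterministic} and take values in $[0,\Ghi]$, hence have oscillation at most $\Ghi$. Dividing each by $\Ghi$ produces admissible test functions for the right-hand side of the claim, so applying Minkowski's inequality and then the supremum over $\phi$ converts the pathwise inequality into the desired $L^p$ bound, with the right-hand side multiplied by a constant of the form $c\cdot\Ghi/\Glo$ for some numerical $c$.

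The main obstacle is to sharpen this constant to exactly $\Ghi/\Glo$, as the naive two-term split above yields a larger factor. For this I would use the alternative \emph{single-term} decomposition: setting $\alpha:=\Psi_t(\nu)(\phi)$, the defining property $\nu(G_t(\phi-\alpha))=0$ gives
\[
\Psi_t(\mu)(\phi)-\Psi_t(\nu)(\phi)=\frac{(\mu-\nu)\bigl(G_t(\phi-\alpha)\bigr)}{\mu(G_t)},
\]
and the function $\psi^\star:=G_t(\phi-\alpha)/\Ghi$ satisfies $\osc(\psi^\star)\le\osc(\phi)\le 1$ almost surely (because $\alpha\in[\inf\phi,\sup\phi]$ so $\phi-\alpha$ changes sign and multiplication by $G_t\in[\Glo,\Ghi]$ scales its oscillation by at most $\Ghi$). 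This immediately gives the pathwise inequality $|\Psi_t(\mu)(\phi)-\Psi_t(\nu)(\phi)|\le(\Ghi/\Glo)\,|(\mu-\nu)(\psi^\star)|$. The delicate step, which I expect to be the most technical, is that $\psi^\star$ is \emph{random} (it depends on $\nu$ through $\alpha$), so converting this pathwise bound into an inequality with $\sup_{\osc(\psi)\le 1}\|\mu(\psi)-\nu(\psi)\|_p$ on the right (a supremum of $L^p$-norms over deterministic $\psi$, rather than an $L^p$-norm of a supremum) requires a measurable-selection or separability argument to swap the two operations.
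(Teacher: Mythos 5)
First, note that the paper gives no proof of this lemma at all: it is imported verbatim as Lemma~8 of \citep{karjalainen-lee-singh-vihola}, so there is no in-paper argument to compare yours against. Your derivation is the standard one and is correct up to its final step. The reduction to $\phi\in[0,1]$ by translation invariance, the identity $\Psi_t(\mu)(\phi)-\Psi_t(\nu)(\phi)=\mu(G_t)^{-1}(\mu-\nu)\big(G_t(\phi-\alpha)\big)$ with $\alpha=\Psi_t(\nu)(\phi)$ (which holds because $\nu\big(G_t(\phi-\alpha)\big)=0$), and the oscillation bound $\osc\big(G_t(\phi-\alpha)\big)\le\Ghi\,\osc(\phi)$ (valid precisely because $\alpha$ lies between $\inf\phi$ and $\sup\phi$, so $\phi-\alpha$ takes both signs and its image under multiplication by $G_t\in[\Glo,\Ghi]$ stays in $[\Ghi\inf(\phi-\alpha),\Ghi\sup(\phi-\alpha)]$) are all sound, and they give the sharp pathwise inequality $|\Psi_t(\mu)(\phi)-\Psi_t(\nu)(\phi)|\le(\Ghi/\Glo)\sup_{\osc(\psi)\le1}|(\mu-\nu)(\psi)|$.

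The gap is the final interchange, and the fix you propose would not close it. The difficulty is not one of measurability or separability: the right-hand side is a supremum of $L^p$-norms over \emph{deterministic} test functions, and such a supremum does not in general dominate $\|(\mu-\nu)(\psi^\star)\|_p$ for a \emph{random} $\psi^\star$ of oscillation at most one, because the random selection may correlate with the fluctuations of $\mu-\nu$ in a way no fixed $\psi$ reproduces. A measurable-selection argument would let you realise a near-maximiser as a random function --- the opposite of what is needed here. What does rescue the argument, at the cost of a constant, is the observation that the randomness of $\psi^\star$ enters only through the scalar $\alpha\in[\inf\phi,\sup\phi]$ and that $\lambda\mapsto(\mu-\nu)\big(G_t(\phi-\lambda)\big)$ is affine; hence $|(\mu-\nu)(\psi^\star)|\le\max\big\{|(\mu-\nu)(\psi_{\inf\phi})|,\,|(\mu-\nu)(\psi_{\sup\phi})|\big\}$ with both endpoint functions deterministic and of oscillation at most one, which after taking $L^p$-norms yields the claim with constant $2^{1/p}\,\Ghi/\Glo$. (When $\nu$ is deterministic the sharp constant is recovered by writing $\alpha=\int\phi\,\ud\Psi_t(\nu)$ and applying Jensen's inequality inside the integral.) It is worth adding that every application of this lemma in the present paper is pathwise or conditional --- absolute values rather than genuine $L^p$-norms of random measures --- and there your pathwise inequality already suffices with the stated constant $\Ghi/\Glo$.
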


  The following `perturbed' analogues of the above are associated with the CPF:
  \begin{align*}
    \Psi_t^{x_{t}^*} (\mu) &= \Psi_t \left( \frac{1}{N+1} \delta_{x_{t}^*} + \frac{N}{N+1}\mu \right),\\
    \Phi_t^{x_{t-1}^*} (\mu) &= \Phi_t \left( \frac{1}{N+1} \delta_{x_{t-1}^*} + \frac{N}{N+1}\mu \right). 
  \end{align*}
  Note that the one-step predictive distributions defined in \eqref{eq:filter-state-predictive} may be written as $\fsp_t^N = \Phi_t^{x_{t-1}^*}(\eta_{t-1}^N)$ and $\tilde{\fsp}_t^N = \Phi_t^{\tilde{x}_{t-1}^*}(\tilde{\eta}_{t-1}^N)$, where $\eta_{t-1}^N$ and $\tilde{\eta}_{t-1}^N$ stand for the empirical measures of $X_{t-1}^{1:N}$ and $\tilde{X}_{t-1}^{1:N}$, respectively. 
  
  The following restatement of \citep[Theorem 21]{karjalainen-lee-singh-vihola} indicates that the perturbed and weighted empirical measures $\Psi_t^{x_t^*}(\eta_t^N)$ from the CPF approximate the `ideal filter' $\Psi_t(\eta_t)$ in $L^p$ sense, uniformly in time. 
  \begin{theorem}
  \label{thm:cpfstability}
  Assume \textup{\ref{a:strong-mixing}}. For every $p \geq 1$, there exists a constant $c=c(p)$ such that for all $\phi$ with $\osc(\phi) \leq 1$, all $1\le t \le T$, $N\ge 1$ and all references $x_{1:T}^*\in\X^T$:
  \begin{align*}
  \left\|\Psi_t^{x_{t}^*} (\eta_{t}^N) (\phi) - \Psi_t(\eta_{t})(\phi)\right\|_p \leq  &\, \frac{c}{\sqrt{N}}.
  \end{align*}
  In particular, one may choose
  \begin{equation}
  \label{eq:cpfconstant}
  c = \frac{\Ghi}{ 2 \Glo} + \left( \frac{\Ghi}{\Glo} \right)^2 \left( \frac{\Mhi}{\Mlo} \right)^3 \left( d(p)^{1/p} + \frac{1}{2} \right),
  \end{equation}
  where the function $d(p)$ has been defined in \citep{del2004feynman}, and in particular, $d(2) = 1$.
  \end{theorem}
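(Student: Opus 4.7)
The result is a restatement of Theorem~21 of \citep{karjalainen-lee-singh-vihola}, so one option is to defer to that reference; I sketch here the Del Moral--style argument one would use to reproduce it. The plan is to decompose the error at time $t$ into a sum of contributions from each past time $s\le t$, show that each contribution is $O(1/\sqrt{N})$ with a prefactor $\beta^{t-s}$ coming from the exponential forgetting in Lemma \ref{lem:fkcontract}, and sum the geometric series to obtain a constant uniform in $t$.

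First I would split off the reference-particle perturbation at the terminal time:
$$\Psi_t^{x_t^*}(\eta_t^N)(\phi) - \Psi_t(\eta_t)(\phi) = \bigl[\Psi_t^{x_t^*}(\eta_t^N) - \Psi_t(\eta_t^N)\bigr](\phi) + \bigl[\Psi_t(\eta_t^N) - \Psi_t(\eta_t)\bigr](\phi).$$
For the first bracket, the mass $1/(N+1)$ that $\delta_{x_t^*}$ contributes inside $\Psi_t^{x_t^*}$ together with $\Glo\le G_t\le \Ghi$ and $\osc(\phi)\le 1$ gives an $O(1/N)$ bound, which is absorbed as the summand $\Ghi/(2\Glo)$ of \eqref{eq:cpfconstant} after dividing by $1/\sqrt{N}$. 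For the second bracket, Lemma \ref{lem:bgtransformlp} reduces matters to bounding $\|\eta_t^N(\phi') - \eta_t(\phi')\|_p$ for $\osc(\phi')\le 1$, where $\eta_t = \Phi_{1,t}(M_1)$. I would then use the telescoping decomposition
$$\eta_t^N - \eta_t = \sum_{s=1}^t \bigl[\Phi_{s,t}(\eta_s^N) - \Phi_{s-1,t}(\eta_{s-1}^N)\bigr], \qquad \eta_0^N := M_1,$$
and rewrite each summand as the sum of a one-step sampling error at time $s$ (conditional on the history, the $X_s^{1:N}$ are iid from $\zeta_s^N=\Phi_s^{x_{s-1}^*}(\eta_{s-1}^N)$, so Rosenthal/Marcinkiewicz--Zygmund gives $\|\eta_s^N(\psi)-\zeta_s^N(\psi)\|_p\le d(p)^{1/p}/\sqrt{N}$ for $\osc(\psi)\le 1$) plus a reference-perturbation at time $s$ (which is $O(1/N)$, as in Step~1). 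Each piece is then contracted by $\beta^{t-s}$ via Lemma \ref{lem:fkcontract}, and with $\beta=1-(\Mlo/\Mhi)^2$ the geometric sum $\sum_s \beta^{t-s}$ contributes the factor $(\Mhi/\Mlo)^2$.

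The main obstacle is converting the total-variation forgetting of Lemma \ref{lem:fkcontract} into an $L^p$ bound on differences of test-function integrals with constants tight enough to recover the explicit form \eqref{eq:cpfconstant}. This is where Lemma \ref{lem:h2bound} (which passes variance bounds through Hellinger to total variation) is used, introducing an extra $(\Mhi/\Mlo)$ factor that, combined with the $(\Mhi/\Mlo)^2$ from the geometric sum, yields the $(\Mhi/\Mlo)^3$ of \eqref{eq:cpfconstant}; similarly two applications of Lemma \ref{lem:bgtransformlp} (one to pass from $\Psi_t(\eta_t^N)$ to $\eta_t^N$, and one implicit in the Hellinger-to-$L^2$ conversion) account for the $(\Ghi/\Glo)^2$ factor. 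Assembling these with the Rosenthal constant $d(p)^{1/p}$, the $1/2$ from the reference-perturbation term, and the leading $\Ghi/(2\Glo)$ isolated in Step~1, produces the explicit constant in \eqref{eq:cpfconstant}.
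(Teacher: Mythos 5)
The paper gives no proof of this statement: it is imported verbatim as Theorem~21 of \citep{karjalainen-lee-singh-vihola}, so your first option --- deferring to that reference --- is exactly what the paper does. Your supplementary sketch follows the standard Del Moral telescoping/local-error/contraction template underlying the cited proof (splitting off the $O(1/N)$ reference perturbation, bounding the conditionally iid sampling error by $d(p)^{1/p}/\sqrt{N}$, and summing the geometric series driven by Lemma \ref{lem:fkcontract}), and it is consistent with the explicit constant \eqref{eq:cpfconstant}, though the key step of converting the total-variation forgetting into an $L^p$ bound on test-function differences with exactly the $(\Mhi/\Mlo)^3(\Ghi/\Glo)^2$ prefactor is asserted rather than carried out.
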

  
  We may write the Markov transitions corresponding to one step of the CPF forward process (line \ref{line:mixture-transition} of Algorithm \ref{alg:cbpf}):
  \[
  \M_t^{x_{t-1}^*}(x^{1:N}, \uarg) 
  = \left(  \Phi_t^{x_{t-1}^*}\left( \frac{1}{N} \sum_{i=1}^N \delta_{x^i} \right) \right)^{\otimes N}.
  \]
  We denote the compositions of the above as
  $$
  \M_{t,t+j}^{x_{t:t+j-1}^*} = 
  \M_{t+1}^{x_{t}^*} \M_{t+2}^{x_{t+1}^*} \ldots \M_{t+j}^{x_{t+j-1}^*}.
  $$
  
  Our first result on the coupled CPF (Algorithm \ref{alg:mc_cbpf}) is an upper bound of the total variation between products of predictive distributions, which are coupled in \textsc{FwdCoupling}, when all previous particles are coupled:
  \begin{lemma}
    \label{lem:fullcoupling-tv}
  Under \textup{\ref{a:strong-mixing}}, whenever $X_{t-1}^{1:N} = \tilde{X}_{t-1}^{1:N}$, we have:
    $$
    \tv{(\fsp_t^N)^{\otimes N} - (\tilde{\fsp}_t^N)^{\otimes N}} \le \epsilon_N^\mathrm{uc}, 
    \quad\text{where}\quad   \epsilon_N^\mathrm{uc} = \frac{\Mhi\Ghi}{2\Mlo\Glo} \frac{1}{\sqrt{N+1}}.
    $$
  \end{lemma}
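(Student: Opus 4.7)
The plan is to bound the total variation of the $N$-fold product by a chain of three standard inequalities---product-to-marginal Hellinger (Le Cam), filter-update Hellinger to pre-weighting $L^2$-error (Lemma~\ref{lem:h2bound}), and $\Psi$-contraction to the raw measure difference (Lemma~\ref{lem:bgtransformlp})---exploiting the algebraic cancellation $\mu_0 - \tilde{\mu}_0 = (N+1)^{-1}(\delta_{x_{t-1}^*} - \delta_{\tilde{x}_{t-1}^*})$ that arises when the non-reference particles coincide. Concretely, I first apply the deterministic step underlying Lemma~\ref{lem:dimfree-new} (Le Cam's inequality combined with tensorisation of Hellinger squared): $\tv{(\fsp_t^N)^{\otimes N} - (\tilde{\fsp}_t^N)^{\otimes N}}^2 \le 1 - (1 - \HD^2(\fsp_t^N, \tilde{\fsp}_t^N))^{2N} \le 2N\,\HD^2(\fsp_t^N,\tilde{\fsp}_t^N)$, where the last step is Bernoulli's inequality. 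This reduces the task to obtaining an $O(1/N^2)$ Hellinger-squared bound on the one-step predictives.

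Next I write $\fsp_t^N = \Psi_{t-1}(\mu_0)M_t$ and $\tilde{\fsp}_t^N = \Psi_{t-1}(\tilde{\mu}_0)M_t$, with $\mu_0 = \frac{1}{N+1}\delta_{x_{t-1}^*} + \frac{N}{N+1}\eta_{t-1}^N$ and $\tilde{\mu}_0$ defined analogously; the empirical measure $\eta_{t-1}^N$ of the non-reference particles is shared between the two CPFs under the hypothesis $X_{t-1}^{1:N} = \tilde{X}_{t-1}^{1:N}$. Applying Lemma~\ref{lem:h2bound} to the kernel $M_t$ yields $\HD^2(\fsp_t^N, \tilde{\fsp}_t^N) \le \tfrac{1}{8}(\Mhi/\Mlo)^2 \sup_{\osc(\phi)\le 1}|\Psi_{t-1}(\mu_0)(\phi) - \Psi_{t-1}(\tilde{\mu}_0)(\phi)|^2$, and then Lemma~\ref{lem:bgtransformlp} (applied pointwise, since the two measures are deterministic on the event considered) further bounds the inner supremum by $(\Ghi/\Glo)\sup_{\osc(\phi)\le 1}|\mu_0(\phi) - \tilde{\mu}_0(\phi)|$.

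For the final step, the algebraic identity $\mu_0 - \tilde{\mu}_0 = (N+1)^{-1}(\delta_{x_{t-1}^*} - \delta_{\tilde{x}_{t-1}^*})$ immediately gives $|\mu_0(\phi) - \tilde{\mu}_0(\phi)| = (N+1)^{-1}|\phi(x_{t-1}^*) - \phi(\tilde{x}_{t-1}^*)| \le (N+1)^{-1}\osc(\phi) \le (N+1)^{-1}$. Chaining the three estimates and using $\sqrt{N}/(N+1) \le 1/\sqrt{N+1}$ produces
\[
\tv{(\fsp_t^N)^{\otimes N} - (\tilde{\fsp}_t^N)^{\otimes N}} \le \sqrt{2N \cdot \tfrac{(\Mhi/\Mlo)^2}{8} \cdot \tfrac{(\Ghi/\Glo)^2}{(N+1)^2}} \le \frac{\Mhi\Ghi}{2\Mlo\Glo\sqrt{N+1}},
\]
which matches the stated constant exactly. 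No individual step is difficult; the main point is that the structural cancellation of the shared empirical measure yields a dimension-free $(N+1)^{-1}$ factor in the pre-weighting error, which together with the (already sharp) constants in Lemmas~\ref{lem:h2bound} and~\ref{lem:bgtransformlp} propagates cleanly to the exact target.
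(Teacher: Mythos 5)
Your proof is correct and follows essentially the same route as the paper's: Le Cam's inequality (Lemma~\ref{lem:dimfree-new}), then Lemma~\ref{lem:h2bound}, then Lemma~\ref{lem:bgtransformlp} with the cancellation of the shared empirical measure, yielding exactly the stated constant. The only (cosmetic) difference is that you apply Bernoulli's inequality to $(1-\HD^2)^{2N}$ before substituting the bound on $\HD^2$, which is always valid since $\HD^2\le 1$ and thereby avoids the paper's separate case $(N+1)^2 < D$.
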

  From the definition of $\zeta_{t}^{N}$and $\tilde{\zeta}_{t}^{N}$
  in \eqref{eq:filter-state-predictive}, when $X_{t-1}^{1:N}=\text{\ensuremath{\tilde{X}_{t-1}^{1:N}}}$,
  these two empirical measures can only differ in particle zero, in both
  its value and un-normalised weight. The stated result holds irrespective
  of $(X_{t-1}^{0},\tilde{X}_{t-1}^{0})=(x_{t-1}^{\ast},\tilde{x}_{t-1}^{\ast})$.
  \begin{proof}[Proof of Lemma \ref{lem:fullcoupling-tv}]
  By (a deterministic version of) Lemma \ref{lem:dimfree-new}, it holds for arbitrary $X_{t-1}^{1:N}$, $\tilde{X}_{t-1}^{1:N}$ that
  $$
  \tv{(\fsp_t^N)^{\otimes N} - (\tilde{\fsp}_t^N)^{\otimes N}} \le
  \sqrt{1-(1-\HD^2(\fsp_t^N, \tilde{\fsp}_t^N))^{2N}},
  $$
  and by Lemma \ref{lem:h2bound}, 
  $$
  \HD^2(\fsp_t^N, \tilde{\fsp}_t^N)\leq \frac{1}{8}(\Mhi/\Mlo)^2 \sup_{{\rm osc}(\phi)\leq 1}|\xi(\phi) - \tilde{\xi}(\phi)|^2,
  $$
  where $\xi = {\Psi}_{t-1}^{x_{t-1}^{*}}(\eta_{t-1}^N)$ and 
  $\tilde{\xi}={\Psi}_{t-1}^{\tilde{x}_{t-1}^*}(\tilde{\eta}_{t-1}^N)$. 
  By Lemma \ref{lem:bgtransformlp}, 
  \begin{align*}
  |\xi(\phi) - \tilde{\xi}(\phi)| &\leq \frac{\Ghi}{\Glo} \frac{1}{N+1} \sup_{\osc(\phi) \leq 1} \left|\left( \sum_{i=1}^N \delta_{X_{t-1}^i} + \delta_{x_{t-1}^*} \right)\phi - \left( \sum_{i=1}^N \delta_{\tilde X_{t-1}^i} + \delta_{\tilde x_{t-1}^*} \right)\phi \right| \\
  &= \frac{\Ghi}{\Glo} \frac{1}{N+1},
  \end{align*}
  where the last equality follows because of our assumption $X_{t-1}^{1:N}=\tilde{X}_{t-1}^{1:N}$.
  
  Combining these results gives the upper bound
  \begin{align*}
    \tv{(\fsp_t^N)^{\otimes N} - (\tilde{\fsp}_t^N)^{\otimes N}} 
  & \leq 
  \sqrt{1-\left(1-\frac{1}{8} \frac{\Mhi^2}{\Mlo^2}\frac{\Ghi^2}{\Glo^2} \frac{1}{(N+1)^2}\right)^{2N}}.
  \end{align*}  
  Denote $D=(1/8)( \Mhi/\Mlo)^2 (\Ghi/\Glo)^2$. Then Bernoulli's inequality $(1+x)^r \geq 1+rx$ for all $x\geq -1, r \geq 1$, yields, whenever $(N+1)^2 \ge D$
  \[
  \left(1-\frac{D}{(N+1)^2}\right)^{2N} \geq 1-\frac{2ND}{(N+1)^2},
  \]
  and so
  \[
  \sqrt{1-\left( 1-\frac{D}{(N+1)^2}\right)^{2N}} \leq \sqrt{\frac{2ND}{(N+1)^2}} \leq \sqrt{\frac{2D}{N+1}}.
  \]
  If $(N+1)^2 < D$, then $\epsilon_N^\mathrm{uc} \ge 1$ and the claim holds trivially. 
  \end{proof}
  
  The following result 
  is slightly stronger than the forgetting result for the CPF established \citep[Lemma 19]{karjalainen-lee-singh-vihola}, but the proof remains similar.
  
  \begin{lemma}
    \label{lem:tvlemma_new}
    There exist $c$ and $c'$, only depending on the constants in \textup{\ref{a:strong-mixing}}, such that for all 
    $N \geq c'$, $j \geq c \log(N)$, $(x^{1:N}, \tilde x^{1:N})$, $(x_{1:T}^\ast, \tilde x_{1:T}^\ast)$ and $j+1 \le t \le T-1$,
    \[
      \sup_{\gamma\in\Gamma_{t-j,t}(x^{1:N}, \tilde{x}^{1:N})}
    \E_\gamma \tv{\M_{t+1}^{x_{t}^*}(X^{1:N}_{t}, \uarg)-\M_{t+1}^{\tilde x_{t}^*}(\tilde{X}^{1:N}_{t}, \uarg) } \leq (1-\varepsilon^2)^{1/2},
    \]
    where $\varepsilon = (1-c'/N)^N$ and where $\Gamma_{t-j,t}(x^{1:N}, \tilde{x}^{1:N})$ stands for all couplings of 
    $(X_t^{1:N}, \tilde{X}_t^{1:N})$ which have marginal laws $X_t^{1:N}\sim \M_{t-j,t}^{x_{t-j:t-1}^*}(x^{1:N}, \uarg)$ and $\tilde{X}_{t}^{1:N} \sim \M_{t-j,t}^{\tilde{x}_{t-j:t-1}^*}(\tilde{x}^{1:N}, \uarg)$.
  \end{lemma}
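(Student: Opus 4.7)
The plan is to reduce the expected total variation to an expected squared Hellinger distance between the one-step predictive measures via Lemma \ref{lem:dimfree-new}, then control the Hellinger distance using a CPF stability estimate (Theorem \ref{thm:cpfstability}) together with exponential forgetting of the Feynman--Kac flow (Lemma \ref{lem:fkcontract}). Writing $\M_{t+1}^{x_t^*}(X_t^{1:N},\uarg) = (\fsp_{t+1}^N)^{\otimes N}$ with $\fsp_{t+1}^N = \Phi_{t+1}^{x_t^*}(\eta_t^N)$ and $\eta_t^N := \frac{1}{N}\sum_{i=1}^N \delta_{X_t^i}$ (and analogously for the tilde version), Lemma \ref{lem:dimfree-new} gives
\[
\E_\gamma \tv{\M_{t+1}^{x_t^*}(X_t^{1:N},\uarg) - \M_{t+1}^{\tilde{x}_t^*}(\tilde{X}_t^{1:N},\uarg)} \le \sqrt{1-\bigl(1-\E_\gamma \HD^2(\fsp_{t+1}^N, \tilde{\fsp}_{t+1}^N)\bigr)^{2N}},
\]
so it suffices to show $\E_\gamma \HD^2(\fsp_{t+1}^N, \tilde{\fsp}_{t+1}^N) \le c'/N$ for a suitable constant $c'$ depending only on \ref{a:strong-mixing}.

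By Lemma \ref{lem:h2bound} with $M = M_{t+1}$, the squared Hellinger distance is controlled (up to a constant depending only on \ref{a:strong-mixing}) by $\sup_{\osc(\phi)\le 1} \E_\gamma|\Psi_t^{x_t^*}(\eta_t^N)(\phi) - \Psi_t^{\tilde{x}_t^*}(\tilde{\eta}_t^N)(\phi)|^2$. I would introduce the ideal predictors $\eta_t := \Phi_{t-j,t}\bigl(\tfrac{1}{N}\sum_i \delta_{x^i}\bigr)$ and $\tilde{\eta}_t := \Phi_{t-j,t}\bigl(\tfrac{1}{N}\sum_i \delta_{\tilde{x}^i}\bigr)$ as pivots and apply Minkowski's inequality in $L^2$ to split the $L^2$ distance into three terms. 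The two CPF-vs-ideal terms $\|\Psi_t^{x_t^*}(\eta_t^N)(\phi) - \Psi_t(\eta_t)(\phi)\|_2$ and its tilde counterpart depend only on the respective marginal laws of $X_t^{1:N}$ and $\tilde{X}_t^{1:N}$ (so the supremum over $\gamma$ is trivial), and are each bounded by $c/\sqrt{N}$ by a mildly shifted version of Theorem \ref{thm:cpfstability} (treating time $t-j$ as the initial time and the empirical measure of $x^{1:N}$ as the initial distribution). The middle ideal-vs-ideal term is deterministic and, by Lemma \ref{lem:bgtransformlp} followed by Lemma \ref{lem:fkcontract}, is at most $(\Ghi/\Glo)\beta^j$.

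Taking $c := 1/\log(1/\beta)$ makes $\beta^j \le N^{-1/2}$ whenever $j \ge c\log N$ (for $N$ large), so the combined $L^2$ bound is $O(1/\sqrt{N})$; squaring and tracing the constants back through Lemmas \ref{lem:h2bound} and \ref{lem:dimfree-new} then yields $\E_\gamma \HD^2(\fsp_{t+1}^N, \tilde{\fsp}_{t+1}^N) \le c'/N$ for a sufficiently large $c'$ depending only on \ref{a:strong-mixing}. This gives $(1-\E_\gamma \HD^2)^{2N} \ge (1-c'/N)^{2N} = \varepsilon^2$ (valid when $N \ge c'$), which is the claim. The main obstacle is tracking the constants carefully so that a single $c'$ can play the roles both in the Hellinger bound and in the definition of $\varepsilon$; a minor technical point is that Theorem \ref{thm:cpfstability} is stated with random initialization from $M_1$, whereas here the CPF starts at time $t-j$ from the deterministic particles $x^{1:N}$, but that result's proof adapts without change since it does not use randomness in the initial distribution.
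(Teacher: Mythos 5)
Your proposal follows essentially the same route as the paper's proof: the Le Cam/Hellinger reduction via Lemma \ref{lem:dimfree-new}, Lemma \ref{lem:h2bound}, a three-term $L^2$ triangle inequality around ideal predictor pivots, Theorem \ref{thm:cpfstability} for the two particle-vs-ideal terms (which depend only on the marginals, making the supremum over $\gamma$ harmless), Lemmas \ref{lem:bgtransformlp} and \ref{lem:fkcontract} for the deterministic ideal-vs-ideal term, and the choice $j\ge c\log N$ to absorb the forgetting factor $\beta^j$ into $N^{-1/2}$. The only (immaterial) difference is the pivot: the paper starts the shifted CPF at time $t-j+1$ from the genuinely i.i.d.\ sample of $\Phi_{t-j+1}^{x_{t-j}^*}\big(N^{-1}\sum_{i}\delta_{x^i}\big)$, so that Theorem \ref{thm:cpfstability} applies verbatim and the deterministic-initialization caveat you flag never arises.
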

  
  \begin{proof}
    Let $\gamma \in \Gamma_{t-j,t}(x^{1:N}, \tilde{x}^{1:N})$ be arbitrary and $(X_t^{1:N}, \tilde{X}_t^{1:N}) \sim \gamma$.
    Define $ \xi_t^N =  N^{-1} \sum_{i=1}^N \delta_{X_t^i}$ and $\tilde \xi_t^N =  N^{-1} \sum_{i=1}^N \delta_{\tilde X_t^i}$
    and let
    \[\mu_{t+1}={\Psi}_{t}^{x_{t}^\ast }(\xi_t^N) M_{t+1}, \qquad \tilde{\mu}_{t+1}={\Psi}_{t}^{\tilde{x}_{t}^\ast }(\tilde{\xi}_t^N)M_{t+1}.
    \]
    Then, $\mu_{t+1}^{\otimes N} = \M_{t+1}^{x_{t}^*}(X^{1:N}_{t},\uarg)$ and 
    $\tilde{\mu}_{t+1}^{\otimes N} = \M_{t+1}^{\tilde{x}_{t}^*}(\tilde{X}^{1:N}_{t},\uarg)$,
    and by Lemma \ref{lem:dimfree-new},
    \begin{align*}
    \E \big\Vert  \mu_{t+1}^{\otimes N} - \tilde \mu_{t+1}^{\otimes N} \big\Vert _{\mathrm{TV}} 
    \leq \sqrt{1-\left(1-\E \HD^{2}(\mu_{t+1}, \tilde \mu_{t+1}) \right)^{2N}}.
    \end{align*}
    Denote $\HD^2_{t+1} = \HD^{2}(\mu_{t+1}, \tilde \mu_{t+1})$. Applying Lemma \ref{lem:h2bound} to $\mu_{t+1}$ and $\tilde{\mu}_{t+1}$  yields
    \[
    \E \HD^2_{t+1} \leq C \sup_{\osc(\phi) \leq 1} \E \big[| {\Psi}_{t}^{x_{t}^\ast }(\xi_t^N)(\phi) - {\Psi}_{t}^{\tilde{x}_{t}^\ast }(\tilde \xi_t^N)(\phi)|^2\big],
    \]
    where $C = (1/8) (\Mhi/ \Mlo)^2$. For $j\geq 1$, define
    \[
    \xi_{t-j,t} = \Phi_{t-j+1,t}\left ( \Phi_{t-j+1}^{x_{t-j}^*}\bigg(\frac{1}{N} \sum_{i=1}^N \delta_{x^i} \bigg) \right ), \; 
    \tilde \xi_{t-j,t} = \Phi_{t-j+1,t}\left ( \Phi_{t-j+1}^{\tilde{x}_{t-j}^*}\bigg(\frac{1}{N} \sum_{i=1}^N \delta_{\tilde{x}^i} \bigg) \right ).
    \] 
    Then,
    \begin{align}
    \|  \Psi_{t}^{x_{t}^\ast } (\xi_t^N)(\phi) - \Psi_{t}^{\tilde x_{t}^\ast } (\tilde  \xi_t^N)(\phi) \|_2 &\leq \| \Psi_{t}^{x_{t}^\ast } ( \xi_t^N)(\phi) - \Psi_{t}(\xi_{t-j,t}) (\phi) \|_2 \label{eq:filter-triangle} \\
    & \,+ |\Psi_{t}(\xi_{t-j,t}) (\phi) - \Psi_{t}(\tilde \xi_{t-j,t}) (\phi) | \nonumber\\
    & \,+ \| \Psi_{t}^{\tilde x_{t}^\ast }(\tilde{\xi}_t^N)(\phi) - \Psi_{t}(\tilde \xi_{t-j,t})(\phi) \|_2.\nonumber
    \end{align}
    Note that $X_t^{1:N}$ has the law of the particles in Algorithm \ref{alg:cbpf} using the initial distribution $ \Phi_{t-j+1}^{x_{t-j}^*}\big(N^{-1} \sum_{i=1}^N \delta_{x^i} \big)$, potential functions $G_{t-j+1:t-1}$, references $x^*_{t-j+1:t-1}$, and Markov kernels $M_{t-j+2:t}$. Thus, Theorem \ref{thm:cpfstability} gives the bound
    \[
      \| \Psi_{t}^{x_{t}^\ast } ( \xi_t^N)(\phi) - \Psi_{t}(\xi_{t-j,t}) (\phi) \|_2 
      \leq \frac{C_2}{\sqrt N},
    \]
    and by the same argument for $\tilde \xi_t^N$, the third term in \eqref{eq:filter-triangle} can be upper bounded by $C_2/\sqrt{N}$.
    Lemma \ref{lem:bgtransformlp} (with $p=1$) and Theorem \ref{lem:fkcontract} together give
    \[
      |\Psi_{t}(\xi_{t-j,t}) (\phi) - \Psi_{t}(\tilde \xi_{t-j,t}) (\phi) | \leq C_3 \beta^{j}
    \]
    with $C_3 = \Ghi / \Glo$ and $\beta = 1-(\Mlo /\Mhi)^2$. 
    If $ j \geq c \log(N)$ with 
    \[
    c = \frac{1}{2 \log (\beta^{-1})},
    \]
    then $\beta^j \leq N^{-1/2}$ for all $N\geq 2$.   
    Since $C_2 \geq C_3$ by \eqref{eq:cpfconstant}, it follows that 
    \[
    \|  \Psi_{t}^{x_{t}^\ast } (\xi_t^N)(\phi) - \Psi_{t}^{\tilde x_{t}^\ast } (\tilde  \xi_t^N)(\phi) \|_2
    \leq 3\frac{C_2}{\sqrt N},
    \] 
    and we conclude that $\E \HD^2_{t+1}  \leq c' N^{-1}$ with $c' = 9C C_2^2$.
\end{proof}

\section{Contraction for the number of uncoupled states}
\label{app:contraction}

This appendix is devoted to the proof of Theorem \ref{thm:main-contraction}, which states that the number of uncoupled elements in the reference paths is shrunk by a factor $\lambda_N<1$ for sufficiently large $N\ge 1$ under \ref{a:strong-mixing} and \ref{a:joint-maximal-coupling}.
We study the random variables generated within Algorithm \ref{alg:mc_cbpf}, for arbitrary references $x_{1:T}^*$ and $\tilde{x}_{1:T}^*$, which are regarded as fixed for the rest of this section.
In what follows, we denote $[T] = \{1,\ldots,T\}$.

Let us first define the distance (in time) from each time index $t\in[T]$ to the nearest uncoupled state:
$$
   d_t = \inf\{ | t -u | \given u\in [T]
   ,\, x_u^*\neq \tilde{x}_u^*\}.
$$
\begin{figure}
  \begin{center}
    \includegraphics[scale=0.9]{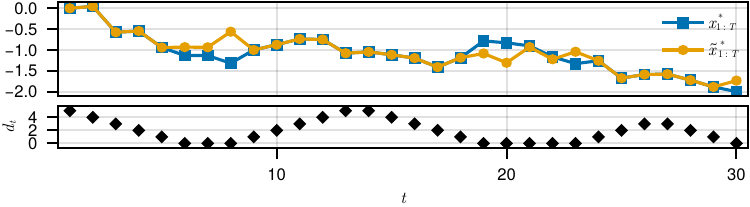}
  \end{center}
  \caption{Partially coupled reference trajectories and distances $d_t$ to nearest uncoupled states.}
  \label{fig:reference-distances}
\end{figure}
Then, let $H^d\subset [T]$ stand for the time indices with distance $d$ to the nearest uncoupled site (see Figure \ref{fig:reference-distances}):
$$
   H^d = \{t \in [T]\given d_t = d \}.
$$
This means that $H^0 = \{t \in [T]\given x_t^* \neq \tilde{x}_t^*\}$ are the uncoupled states (`holes' in coupling), and for $d\ge 1$, any time index $t\in H^d$ has the nearest hole either at $t-d$ or $t+d$.

Before proceeding with the proof, we describe informally the phenomenon being analysed. By keeping track of the coupled and uncoupled states when simulating several iterations of the coupled CBPF under \ref{a:joint-maximal-coupling}, one can see that sets of coupled states do not only `grow' from the left as with the different forward coupling (see Section \ref{sec:ind-ix-coupling}) studied in \citep{lee-singh-vihola} but can appear spontaneously in isolation and tend to grow as iterations progress. This behaviour was also observed for the much simpler coupled Gibbs sampler for an AR(1) process by \cite{wilkinson-discussion}, and is essential to understanding the $O(\log T)$ mixing time. In order to capture this phenomenon, we will first show that the forward coupling process ensures that particle filter states have some probability of being coupled at each time, and this increases for time indices far from a hole. This ultimately relies on the time-uniform stability of the particle filter and properties of maximal couplings. Using these results, we will show that the output states after backward sampling are also likely to be coupled, especially for time indices far from a hole. This then leads to the conclusion that, for large enough $N$, output states can be spontaneously coupled and progressively consolidated as iterations progress, which is rigorously captured by the multiplicative reduction in the expected number of holes in each iteration.

The proof of Theorem \ref{thm:main-contraction} will be based on the following easy observation:
\begin{lemma}
  \label{lem:number-to-individual-sites}
  In the setting of Theorem \ref{thm:main-contraction},
  $$
  \E[B] \le b^* \bigg( \sup_{t\in H^0} \P(X_t \neq \tilde{X}_t) + 2 \sum_{d=1}^\infty 
  \sup_{t \in H^d} \P(X_t \neq \tilde{X}_t) \bigg).
  $$
\end{lemma}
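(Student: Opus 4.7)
The plan is to reduce the bound to a simple counting argument over the partition $[T]=\bigsqcup_{d\ge 0}H^d$. Since $B=\sum_{t=1}^T\I(X_t\neq\tilde X_t)$, linearity of expectation gives
\[
\E[B]=\sum_{d=0}^\infty \sum_{t\in H^d}\P(X_t\neq\tilde X_t)\le \sum_{d=0}^\infty |H^d|\,\sup_{t\in H^d}\P(X_t\neq\tilde X_t),
\]
so it suffices to control the cardinalities $|H^d|$ in terms of $b^*$.

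For $d=0$, the definition of $H^0$ coincides with the set of uncoupled reference indices, and hence $|H^0|=\sum_{t=1}^T\I(x_t^*\neq\tilde x_t^*)=b^*$. For $d\ge 1$, any $t\in H^d$ satisfies $d_t=d$ by definition, meaning the infimum in the definition of $d_t$ is attained at some $u\in H^0$ with $|t-u|=d$; thus $t\in\{u-d,u+d\}$ for some $u\in H^0$. This exhibits an injection from $H^d$ into $H^0\times\{-1,+1\}$, giving $|H^d|\le 2|H^0|=2b^*$ (with the convention that $H^d=\emptyset$ whenever no such $u$ exists, in which case the bound is trivial).

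Substituting these cardinality bounds into the display above gives
\[
\E[B]\le b^*\sup_{t\in H^0}\P(X_t\neq\tilde X_t)+\sum_{d=1}^\infty 2b^*\sup_{t\in H^d}\P(X_t\neq\tilde X_t),
\]
which after factoring out $b^*$ is exactly the claimed inequality. There is no real technical obstacle here; the only subtleties are confirming that $H^0$ equals the hole set so that $|H^0|=b^*$, and that in the $d\ge 1$ case each element of $H^d$ has to land at $u\pm d$ for some hole $u$, which yields the factor of $2$.
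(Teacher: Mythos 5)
Your argument is essentially identical to the paper's proof: partition $[T]$ into the sets $H^d$, bound $\E[B]$ by $\sum_d |H^d| \sup_{t\in H^d}\P(X_t\neq\tilde X_t)$, and use $|H^0|=b^*$ together with $|H^d|\le 2b^*$ for $d\ge 1$ via the observation that each $t\in H^d$ sits at distance exactly $d$ from some hole. The one point you pass over is the degenerate case $b^*=0$: there $H^0=\emptyset$, every $d_t=\infty$, so the sets $H^d$ do \emph{not} partition $[T]$ and your opening identity $\E[B]=\sum_d\sum_{t\in H^d}\P(X_t\neq\tilde X_t)$ fails; the paper handles this separately by noting that under \ref{a:joint-maximal-coupling} equal references force all particles (and hence the outputs) to couple, so $B=0$ almost surely and the claimed bound holds trivially.
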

\begin{proof}
If $b^*=\sum_{t=1}^T \I(x_t^*\neq \tilde{x}_t^*)=0$, then it is easy to see that all couplings in Algorithm \ref{alg:mc_cbpf} happen with probability one, so $X_t^i=\tilde{X}_t^i$ and $B=\sum_{t=1}^T \I(X_t\neq \tilde{X}_t)=0$. Otherwise, $\{H^d\}_{d\ge 0}$ form a partition of $[T]$, and we may write
\begin{align*}
       \E[B]
       & = \E\bigg[\sum_{d\ge 0} \sum_{t\in H^d} \I(X_t \neq \tilde{X}_t)\bigg] \\
       & \le |H^0| \sup_{t\in H^0} \P(X_t \neq \tilde{X}_t) + \sum_{d=1}^\infty |H^d|  \sup_{t\in H^d} \P(X_t \neq \tilde{X}_t),
\end{align*}
from which the claim follows, because 
$b^*= |H^0|$ and $|H^d| \le 2 b^*$ for $d\ge 1$.
\end{proof}
A few comments on the rationale of the chosen upper bound
in the main statement of Lemma \ref{lem:number-to-individual-sites} are in order. Firstly, the largest $d$
for which $H^{d}$ can be non-empty is $T-1$; for example, when $x_{1}^{\ast}\neq\tilde{x}_{1}^{\ast}$
but $x_{2:T}^{\ast}=\tilde{x}_{2:T}^{\ast}$. Summing over larger
values of $d$ is geared towards a result that holds for any $T$.
Secondly, the bound (before application of $\sup_t$) can be accurate when the set of times $t$ such that $x_{t}^{\ast}\neq\tilde{x}_{t}^{\ast}$
is sparse. For example, in the case of a single unequal pair
of reference states at the trajectory mid-point, 
the chosen bound is appropriate.

In order to control the probabilities in the upper bound of Lemma \ref{lem:number-to-individual-sites}, we need to analyse both the behaviour of the forward coupling process (lines \ref{line:forward-start}--\ref{line:forward-end} of Algorithm \ref{alg:mc_cbpf}) and the backward index coupling (lines \ref{line:backward-start}--\ref{line:backward-end} of Algorithm \ref{alg:mc_cbpf}). 

\subsection{Forward coupling process}

The first part of our analysis concerns the behaviour of the forward pass of Algorithm \ref{alg:mc_cbpf}. To that end, denote the event of `full coupling' by $F_t = \{X_t^{1:N} = \tilde{X}_t^{1:N}\}$, that is, all (non-reference) particles are equal at $t$. This means that \textsc{FwdCouple} successfully coupled the one-step particle state predictive distributions. Likewise, denote by $E_t$ the complement of $F_t$. We use the shorthand 
$F_{t:u} = \{ F_t, F_{t+1},\ldots, F_u\}$ for $1 \le u \le t \le T$, and similarly for intersections of $E_t$.

There are two main probabilities we strive to characterise in this
section. In particular, we will derive the bounds $\mathbb{P}(E_{t})
=O(N^{-1/2})$ for $t\in H^{0}$ and $\mathbb{P}(E_{t}) = O(N^{-1/2}) \exp\big(-\Omega(\frac{d}{\log N})\big)$ for $t\in H^{d}$ with $d\ge 1$ (Lemma \ref{lem:fwdcoupling-in-Hd}). 
These are time uniform bounds that decay with more particles $N$, and the latter bound also vanishes quickly in the `depth' $d$. That is, for the time points $t\in H^d$ that are deep within a block of contiguous coupled reference states, the probability $\P(E_t)$ is negligible.

  Our first lemma states that if we have full coupling at time $t$, we are likely to preserve it at $t+1$. 
\begin{lemma}
  \label{lem:uncoupling}
  $\P(F_1)=1$ and for $t=1,\ldots,T-1$, whenever $F_{t}$ holds:
  \begin{enumerate}[(i)~]
    \item \label{item:unlikely-uncoupling} 
    $\P(E_{t+1} \mid X_t^{1:N},\tilde{X}_t^{1:N}) \le \epsilon_N^\mathrm{uc} \wedge 1$, and 
    \item \label{item:impossible-uncoupling} $\P(E_{t+1} \mid X_t^{1:N},\tilde{X}_t^{1:N}) = 0$ if also $x_t^{*} = \tilde{x}_t^{*}$,
  \end{enumerate}
  where $\epsilon_N^\mathrm{uc}$ is given in Lemma \ref{lem:fullcoupling-tv} in Appendix \ref{app:cpf-theory}. 
\end{lemma}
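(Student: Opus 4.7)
The plan is to reduce each item to properties of maximal couplings together with Lemma \ref{lem:fullcoupling-tv}.

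For $\P(F_1)=1$, I would simply observe line \ref{line:initial-coupling} of Algorithm \ref{alg:mc_cbpf}: it sets $X_1^i = \tilde{X}_1^i \sim M_1(\uarg)$ deterministically for each $i\in\{1{:}N\}$, so $F_1$ holds almost surely.

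For \eqref{item:unlikely-uncoupling}, I would condition on the $\sigma$-algebra $\F_t$ generated by all variables up to the end of iteration $t$ of the forward pass (which in particular determines $X_t^{0:N},W_t^{0:N},\tilde{X}_t^{0:N},\tilde{W}_t^{0:N}$ and hence $\fsp_{t+1}^N$ and $\tilde{\fsp}_{t+1}^N$). Under Assumption \ref{a:joint-maximal-coupling}, line \ref{line:maximalcoupling} draws $(X_{t+1}^{1:N},\tilde{X}_{t+1}^{1:N})$ from a maximal coupling of $(\fsp_{t+1}^N)^{\otimes N}$ and $(\tilde{\fsp}_{t+1}^N)^{\otimes N}$, so by the defining property of maximal couplings \citep{thorisson},
\[
\P(E_{t+1}\mid \F_t)
= \tv{(\fsp_{t+1}^N)^{\otimes N}-(\tilde{\fsp}_{t+1}^N)^{\otimes N}}.
\]
On the event $F_t$, i.e.\ $X_t^{1:N}=\tilde{X}_t^{1:N}$, Lemma \ref{lem:fullcoupling-tv} bounds the right-hand side by $\epsilon_N^\mathrm{uc}$. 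Combining this with the trivial bound $\P(\,\cdot\,)\le 1$ yields \eqref{item:unlikely-uncoupling}. The conditioning on $(X_t^{1:N},\tilde{X}_t^{1:N})$ in the statement follows by a further application of the tower property, since the bound depends on $\F_t$ only through the event $F_t$.

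For \eqref{item:impossible-uncoupling}, the additional hypothesis $x_t^*=\tilde{x}_t^*$ together with $F_t$ gives the full equality $X_t^{0:N}=\tilde{X}_t^{0:N}$ (including the reference particle at index $0$). Since $W_t^i=G_t(X_t^i)=G_t(\tilde{X}_t^i)=\tilde{W}_t^i$ for all $i$, the definitions in \eqref{eq:filter-state-predictive} give $\fsp_{t+1}^N=\tilde{\fsp}_{t+1}^N$, so their $N$-fold products coincide and $\tv{(\fsp_{t+1}^N)^{\otimes N}-(\tilde{\fsp}_{t+1}^N)^{\otimes N}}=0$. The chain of identities from \eqref{item:unlikely-uncoupling} then forces $\P(E_{t+1}\mid \F_t)=0$.

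No step here looks to be a real obstacle: the only non-trivial ingredient is Lemma \ref{lem:fullcoupling-tv}, which is already established. The proof is essentially bookkeeping of the maximal coupling identity plus an observation that equal references propagate to equal weights and equal predictive mixtures.
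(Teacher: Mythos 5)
Your proof is correct and follows essentially the same route as the paper's: $\P(F_1)=1$ by line \ref{line:initial-coupling}, the identity $\P(E_{t+1}\mid\cdot)=\tv{(\fsp_{t+1}^N)^{\otimes N}-(\tilde{\fsp}_{t+1}^N)^{\otimes N}}$ from the maximal coupling under \ref{a:joint-maximal-coupling}, and Lemma \ref{lem:fullcoupling-tv} on the event $F_t$ for \eqref{item:unlikely-uncoupling}, with the observation that equal references plus $F_t$ force $\fsp_{t+1}^N=\tilde{\fsp}_{t+1}^N$ for \eqref{item:impossible-uncoupling}. The extra care you take with the conditioning $\sigma$-algebra is harmless and the paper's version is just a terser statement of the same argument.
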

\begin{proof}
   The claim $\P(F_1)=1$ follows by definition; see line \ref{line:initial-coupling} of Algorithm \ref{alg:mc_cbpf}. Claim \eqref{item:impossible-uncoupling} follows from line \ref{line:maximalcoupling} because if $x_t^* = \tilde{x}_t^*$ and $F_t$ happens, the particles $X_{t+1}^{1:N}$ and $\tilde X_{t+1}^{1:N}$ are sampled from the same distribution, and we use maximal coupling \ref{a:joint-maximal-coupling}. For \eqref{item:unlikely-uncoupling}, we may write
   \begin{align*}
   \I(F_{t})\P(E_{t+1} \mid X_{t}^{1:N}, \tilde{X}_{t}^{1:N}) &= \I(F_{t})\big\| (\fsp_{t+1}^N)^{\otimes N} - (\tilde{\fsp}_{t+1}^N)^{\otimes N}\big\|_\mathrm{TV},
   \end{align*} 
   where $\fsp_{t+1}^N$ and $\tilde{\fsp}_{t+1}^N$ are defined in \eqref{eq:filter-state-predictive}. The upper bound follows by Lemma \ref{lem:fullcoupling-tv} in Appendix \ref{app:cpf-theory}.
\end{proof}

\begin{lemma}
\label{lem:tvuniform_ccpf}
There exist $N_{\rm min} \in[2,\infty)$, $c\in(1,\infty)$,  and $\alpha \in [0,1)$, only depending on the constants in \textup{\ref{a:strong-mixing}}, such that for all 
$t \geq 2$, $N \geq N_{\rm min}$, and $T-t \geq j \geq c \log(N)$, it holds that
\[
\P(E_{t+j} \mid X_{t-1}^{1:N},\tilde{X}_{t-1}^{1:N} ) \leq \alpha,\qquad \rm{almost~surely}.
\]
\end{lemma}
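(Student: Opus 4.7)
The plan is to reduce this lemma to a direct application of Lemma~\ref{lem:tvlemma_new}. First, by conditioning and the tower rule,
$$
\P(E_{t+j} \mid X_{t-1}^{1:N},\tilde{X}_{t-1}^{1:N})
= \E\big[\P(E_{t+j} \mid X_{t+j-1}^{1:N},\tilde{X}_{t+j-1}^{1:N})\,\big|\, X_{t-1}^{1:N},\tilde{X}_{t-1}^{1:N}\big].
$$
Because \ref{a:joint-maximal-coupling} implements the step from $t+j-1$ to $t+j$ as a maximal coupling of $(\zeta_{t+j}^N)^{\otimes N} = \M_{t+j}^{x_{t+j-1}^*}(X_{t+j-1}^{1:N},\uarg)$ and $(\tilde\zeta_{t+j}^N)^{\otimes N} = \M_{t+j}^{\tilde x_{t+j-1}^*}(\tilde X_{t+j-1}^{1:N},\uarg)$, the inner probability equals
$$
\tv{\M_{t+j}^{x_{t+j-1}^*}(X_{t+j-1}^{1:N},\uarg) - \M_{t+j}^{\tilde x_{t+j-1}^*}(\tilde X_{t+j-1}^{1:N},\uarg)}.
$$

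Next, I would identify the conditional law of $(X_{t+j-1}^{1:N},\tilde X_{t+j-1}^{1:N})$ given $(X_{t-1}^{1:N},\tilde X_{t-1}^{1:N})$ as a coupling in $\Gamma_{t-1,t+j-1}(X_{t-1}^{1:N},\tilde X_{t-1}^{1:N})$. Indeed, the marginal laws in \textsc{CoupledCBPF}'s forward pass coincide with those in the standalone CBPF's forward pass, so the marginals are exactly $\M_{t-1,t+j-1}^{x_{t-1:t+j-2}^*}(X_{t-1}^{1:N},\uarg)$ and $\M_{t-1,t+j-1}^{\tilde x_{t-1:t+j-2}^*}(\tilde X_{t-1}^{1:N},\uarg)$. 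Applying Lemma~\ref{lem:tvlemma_new} with its ``$t$'' chosen as $t+j-1$ and its ``$j$'' chosen as our $j$, the hypotheses $j+1\le t+j-1 \le T-1$ translate exactly to our assumptions $t\ge 2$ and $t+j\le T$, and the lemma yields almost surely
$$
\E\big[\tv{\M_{t+j}^{x_{t+j-1}^*}(X_{t+j-1}^{1:N},\uarg) - \M_{t+j}^{\tilde x_{t+j-1}^*}(\tilde X_{t+j-1}^{1:N},\uarg)} \,\big|\, X_{t-1}^{1:N},\tilde X_{t-1}^{1:N}\big]
\le (1-\varepsilon^2)^{1/2},
$$
with $\varepsilon = (1-c'/N)^N$ and $c$, $c'$ the constants from Lemma~\ref{lem:tvlemma_new}.

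Finally, I would choose $N_{\rm min}$ large enough (depending only on \ref{a:strong-mixing} through $c'$) to ensure $N_{\rm min}\ge \max(2,c')$ and $\varepsilon \ge \tfrac{1}{2}e^{-c'}$ for all $N\ge N_{\rm min}$; this is possible since $(1-c'/N)^N \to e^{-c'} > 0$ as $N\to\infty$. Setting $\alpha = (1 - \tfrac{1}{4}e^{-2c'})^{1/2} \in [0,1)$ then gives the claim. There is really no obstacle here — the lemma is a bookkeeping exercise that reindexes Lemma~\ref{lem:tvlemma_new} and combines it with the maximal coupling identity; the only thing to watch is that the constants $c$, $c'$, $\alpha$, $N_{\rm min}$ depend only on the strong mixing constants in \ref{a:strong-mixing}, which is inherited from Lemma~\ref{lem:tvlemma_new}.
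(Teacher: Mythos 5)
Your proposal is correct and follows essentially the same route as the paper: the tower property plus the maximal-coupling identity reduces the claim to the expected total variation bound of Lemma~\ref{lem:tvlemma_new}, applied after the reindexing you describe (which you verify correctly), and then $N_{\rm min}$ and $\alpha$ are fixed from the constants $c,c'$. The only cosmetic difference is at the end: the paper pins down $\alpha$ by noting that $N\mapsto(1-(1-c'/N)^{2N})^{1/2}$ is decreasing for $N>c'$ and evaluating at $N_{\rm min}=\lfloor c'\rfloor+1$, whereas you use the limit $e^{-c'}$ and a slightly larger $N_{\rm min}$; both are valid.
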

\begin{proof}
By the Markov property and the assumption that at time step $t+j$, $X_{t+j}^{1:N}$ and $X_{t+j}^{1:N}$ are generated using a maximal coupling, we have
\begin{align*}
\P &(E_{t+j}\mid X_{t-1}^{1:N}, \tilde{X}_{t-1}^{1:N}) \\
   &= \E\big[ \P(E_{t+j}\mid X_{t+j-1}^{1:N}, \tilde{X}_{t+j-1}^{1:N} ) \;\big|\; X_{t-1}^{1:N}, \tilde{X}_{t-1}^{1:N}\big] \\
&= \E\big[ \big\| \M_{t+j}^{x_{t+j-1}^*}(X^{1:N}_{t+j-1}, \uarg)-\M_{t+j}^{\tilde x_{t+j-1}^*}(\tilde{X}^{1:N}_{t+j-1}, \uarg) \big\|_\mathrm{TV} \big| \; X_{t-1}^{1:N}, \tilde{X}_{t-1}^{1:N}\big].
\end{align*}
By Lemma \ref{lem:tvlemma_new} in Appendix \ref{app:cpf-theory}, there exist $c$ and $c'$ only depending on the constants in \ref{a:strong-mixing} such that for all $N \geq c'$ and $j \geq c \log N$, 
\begin{align}
\P (E_{t+j}\mid X_{t-1}^{1:N}, \tilde{X}_{t-1}^{1:N}) \leq (1-\varepsilon^2)^{1/2},
\label{eq:alpha-N-expression}
\end{align}
where $\varepsilon = (1-c'/N)^N$. The function $N \mapsto (1-(1-c'/N)^{2N})^{1/2}$ is decreasing for $N > c'$. Thus, the claim follows by choosing $N_{\rm min} = \lfloor c' \rfloor +1$, and
\[
\alpha = \big(1-(1-c'/N_{\rm min})^{2 N_{\rm min}}\big)^{1/2}. \qedhere
\]
\end{proof}

\begin{remark}
  \label{rem:alpha}
  In the proof of Lemma \ref{lem:tvuniform_ccpf}, $\alpha$ depends on the specific choice of $N_{\rm min}$ 
  and can be made arbitrarily close to $\sqrt{1-\exp(-2c')}$ by increasing $N_{\rm min}$. 
We have opted to avoid using the bound exactly as in \eqref{eq:alpha-N-expression}, which depends on $N$ in a complicated way, to simplify the presentation.
\end{remark}

Our next result formalises the consequence of Lemma \ref{lem:tvuniform_ccpf} that full couplings should occur every $O(\log N)$ steps.
\begin{lemma}
  \label{lem:remain-uncoupled}
  Let $(N_{\rm min}, c, \alpha)$ be as in Lemma \ref{lem:tvuniform_ccpf}. For all $t\ge 2$, $N \geq N_{\rm min}$, and $1\le j\le T-t$, it holds that with $s_N = \lceil c\log N \rceil + 1$:
  $$
     \P(E_{t},E_{t+1},\ldots, E_{t+j}\mid X_{t-1}^{1:N},\tilde{X}_{t-1}^{1:N}) \le \alpha^{\lfloor \frac{j}{s_N}\rfloor},\qquad \rm{almost~surely}.
  $$
\end{lemma}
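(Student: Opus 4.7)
The plan is to thin the family $\{E_t, E_{t+1}, \ldots, E_{t+j}\}$ down to a geometrically-spaced subsequence of events separated by $s_N$ steps, and then apply Lemma~\ref{lem:tvuniform_ccpf} once per block via the Markov property, picking up a factor of $\alpha$ for each block.

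Concretely, set $m = \lfloor j/s_N\rfloor$ and define $\tau_k = t + k s_N$ for $k=1,\ldots,m$. Since $\tau_m \le t+j \le T$, each $\tau_k$ lies in the range $[t,t+j]$, so the monotonicity of probability gives
\[
\P(E_t,\ldots,E_{t+j} \mid X_{t-1}^{1:N}, \tilde X_{t-1}^{1:N})
\le \P(E_{\tau_1},\ldots,E_{\tau_m} \mid X_{t-1}^{1:N}, \tilde X_{t-1}^{1:N}).
\]
Let $\mathcal{G}_s = \sigma(X_{1:s}^{1:N}, \tilde X_{1:s}^{1:N})$. The forward process in Algorithm~\ref{alg:mc_cbpf} is Markov with respect to $(\mathcal{G}_s)$, so for each $k$,
$\P(E_{\tau_k}\mid \mathcal{G}_{\tau_{k-1}}) = \P(E_{\tau_k} \mid X_{\tau_{k-1}}^{1:N}, \tilde X_{\tau_{k-1}}^{1:N})$ almost surely, where by convention $\tau_0 = t-1$.

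Now I apply Lemma~\ref{lem:tvuniform_ccpf} with $t$ replaced by $\tau_{k-1}+1$ and $j$ replaced by $\tau_k - \tau_{k-1} - 1 = s_N - 1$. The choice $s_N = \lceil c\log N\rceil + 1$ guarantees $s_N - 1 \ge c\log N$, so the lemma yields $\P(E_{\tau_k} \mid X_{\tau_{k-1}}^{1:N}, \tilde X_{\tau_{k-1}}^{1:N}) \le \alpha$ almost surely, uniformly over the conditioning variables. (The requirement $\tau_k \le T$ needed by Lemma~\ref{lem:tvuniform_ccpf} is satisfied since $\tau_m \le t+j \le T$.)

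The conclusion then follows by peeling off one block at a time via the tower property:
\[
\P(E_{\tau_1},\ldots,E_{\tau_m}\mid \mathcal{G}_{t-1})
= \E\bigl[\I(E_{\tau_1},\ldots,E_{\tau_{m-1}})\,\P(E_{\tau_m}\mid \mathcal{G}_{\tau_{m-1}})\,\big|\,\mathcal{G}_{t-1}\bigr]
\le \alpha\,\P(E_{\tau_1},\ldots,E_{\tau_{m-1}}\mid \mathcal{G}_{t-1}),
\]
and iterating $m$ times gives the bound $\alpha^m = \alpha^{\lfloor j/s_N\rfloor}$. There is no serious obstacle: the only bookkeeping point is verifying that consecutive block endpoints are separated by enough steps for Lemma~\ref{lem:tvuniform_ccpf} to apply, which is exactly why $s_N$ is defined to exceed $c\log N$ by at least one.
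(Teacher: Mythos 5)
Your proof is correct and follows essentially the same route as the paper: thin the run of events to a subsequence spaced $s_N$ apart, then peel off one factor of $\alpha$ per block via the tower property and Lemma~\ref{lem:tvuniform_ccpf}. The only (immaterial) difference is that the paper uses the grid $t-1+ks_N$ while you use $t+ks_N$; both satisfy the spacing and range requirements of Lemma~\ref{lem:tvuniform_ccpf}.
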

\begin{proof}
Let $c$ be as in Lemma \ref{lem:tvuniform_ccpf}. To avoid double subscripts in the proof, we denote $s = s_N$ 
and let $d = \lfloor j/s \rfloor$. By considering only the events which are $s$ steps apart, we obtain the trivial bound
\[
 \P(E_{t},E_{t+1},\ldots, E_{t+j}\mid X_{t-1}^{1:N},\tilde{X}_{t-1}^{1:N}) 
 \leq  \P\bigg(\bigcap_{i=1}^d E_{t-1+is} \;\bigg|\; X_{t-1}^{1:N},\tilde{X}_{t-1}^{1:N}\bigg).
\]
The claim is trivial for $d=0$, and for $d\ge 1$, the right-hand side can be written as
\begin{align*}
&\E\bigg[ \P\bigg(\bigcap_{i=1}^d E_{t-1+is} \;\bigg|\; X_{t-1:t-1+(d-1)s}^{1:N},\tilde{X}_{t-1:t-1+(d-1)s}^{1:N} \bigg) \;\bigg|\; X_{t-1}^{1:N},\tilde{X}_{t-1}^{1:N} \bigg] \\
&= \E\bigg[ \I\bigg(\bigcap_{i=1}^{d-1} E_{t-1+is} \bigg) \P\big( E_{t-1+ds} \;\big|\; X_{t-1+(d-1)s}^{1:N},\tilde{X}_{t-1+(d-1)s}^{1:N}\big) \;\bigg|\; X_{t-1}^{1:N},\tilde{X}_{t-1}^{1:N} \bigg].
\end{align*}
By Lemma \ref{lem:tvuniform_ccpf}, $\P\big( E_{t-1+ds} \;\big|\; X_{t-1+(d-1)s}^{1:N},\tilde{X}_{t-1+(d-1)s}^{1:N}\big) \le \alpha$, so we obtain
 \begin{align*}
  \P(E_{t},E_{t+1},\ldots, E_{t+j}\mid X_{t-1}^{1:N},\tilde{X}_{t-1}^{1:N})  \le
\alpha \P \bigg( \bigcap_{i=1}^{d-1} E_{t-1+is} \;\bigg|\; X_{t-1}^{1:N},\tilde{X}_{t-1}^{1:N} \bigg).
\end{align*}
Iterating the same argument $d-1$ times gives the claim.
\end{proof}

The three lemmas above are already sufficient to conclude that forward coupling will become very likely for large $N$, uniformly in time.

\begin{lemma}
  \label{lem:empty-unlikely}
  Let $(N_{\rm min}, c, \alpha)$ be as in Lemma \ref{lem:tvuniform_ccpf}.
  The probability of failing to couple can be upper bounded for all $t\in[T]$ and $N\ge N_{\rm min}$ as follows:
  $$
     \P(E_t) \le \epsilon_N^{E,0},\qquad\text{where}\qquad \epsilon_N^{E,0} = \epsilon_N^\mathrm{uc} \Big( 1 + \frac{ s_N }{1-\alpha} \Big),
  $$
  where $s_N = \lceil c \log(N)\rceil + 1$ and $\epsilon_N^\mathrm{uc}$ is as in Lemma \ref{lem:fullcoupling-tv} in Appendix \ref{app:cpf-theory}.
\end{lemma}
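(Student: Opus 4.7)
The plan is to decompose $E_t$ according to the most recent time $\sigma^{\star}=\max\{s\le t : F_s\}$ at which the non-reference particle systems were fully coupled; this is well-defined since $\P(F_1)=1$ by Lemma \ref{lem:uncoupling}. On $E_t$ we have $\sigma^{\star}\le t-1$, and the events $\{\sigma^\star = \sigma\} = F_\sigma\cap E_{\sigma+1}\cap\cdots\cap E_t$ for $\sigma\in\{1,\ldots,t-1\}$ partition $E_t$, yielding
$$
\P(E_t) = \sum_{\sigma=1}^{t-1} \P(F_\sigma, E_{\sigma+1}, \ldots, E_t).
$$

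The heart of the argument is to bound each summand by an `uncoupling' factor $\epsilon_N^\mathrm{uc}$ times an `unable to recouple' factor that decays in $t-\sigma$. Fix $\sigma \in \{1,\ldots,t-2\}$. Since $F_\sigma \in \sigma(X_\sigma^{1:N},\tilde X_\sigma^{1:N})$, the tower property gives
$$
\P(F_\sigma, E_{\sigma+1}, \ldots, E_t)
= \E\Big[\I(F_\sigma)\,\E\big[\I(E_{\sigma+1})\,\P(E_{\sigma+2},\ldots,E_t\mid X_{\sigma+1}^{1:N},\tilde X_{\sigma+1}^{1:N})\,\big|\,X_\sigma^{1:N},\tilde X_\sigma^{1:N}\big]\Big].
$$
The innermost probability is bounded almost surely (trivially by $1$ when $\sigma=t-2$, and otherwise by Lemma \ref{lem:remain-uncoupled} with its `$t$' $=\sigma+2$ and `$j$' $=t-\sigma-2$) by $\alpha^{\lfloor (t-\sigma-2)/s_N\rfloor}$. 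Meanwhile Lemma \ref{lem:uncoupling}(i) gives $\I(F_\sigma)\,\P(E_{\sigma+1}\mid X_\sigma^{1:N},\tilde X_\sigma^{1:N}) \le \I(F_\sigma)\,\epsilon_N^\mathrm{uc}$. Combining,
$$
\P(F_\sigma, E_{\sigma+1}, \ldots, E_t) \le \epsilon_N^\mathrm{uc}\, \alpha^{\lfloor (t-\sigma-2)/s_N\rfloor}.
$$
The boundary case $\sigma = t-1$ is bounded by $\epsilon_N^\mathrm{uc}$ using only Lemma \ref{lem:uncoupling}(i).

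Substituting $m = t - \sigma - 2$ and majorising by an infinite sum:
$$
\P(E_t) \le \epsilon_N^\mathrm{uc} + \epsilon_N^\mathrm{uc}\sum_{m=0}^{t-3}\alpha^{\lfloor m/s_N\rfloor} \le \epsilon_N^\mathrm{uc}\bigg(1 + \sum_{m=0}^{\infty}\alpha^{\lfloor m/s_N\rfloor}\bigg) = \epsilon_N^\mathrm{uc}\bigg(1+\frac{s_N}{1-\alpha}\bigg),
$$
where the closed form follows by grouping consecutive blocks of $s_N$ integers (each contributing $s_N\alpha^k$ for $k=0,1,\ldots$) and summing the geometric series. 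This is exactly $\epsilon_N^{E,0}$.

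I do not foresee a serious obstacle: the two substantive ingredients (spontaneous coupling controlled by $\epsilon_N^\mathrm{uc}$, and time-uniform recoupling within $O(\log N)$ steps encoded by $\alpha$ and $s_N$) are already delivered by Lemmas \ref{lem:uncoupling} and \ref{lem:remain-uncoupled}. The only points requiring care are the nested conditioning needed to chain these almost-sure bounds, and the handling of the boundary indices $\sigma\in\{t-2,t-1\}$ where Lemma \ref{lem:remain-uncoupled}'s hypothesis $j\ge 1$ is not met but the required bound is trivial.
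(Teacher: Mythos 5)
Your proposal is correct and follows essentially the same route as the paper's own proof: both partition $E_t$ according to the last time $\sigma$ at which full coupling $F_\sigma$ held, bound the factor $\P(E_{\sigma+1}\mid\cdot)$ on $F_\sigma$ by $\epsilon_N^{\mathrm{uc}}$ via Lemma \ref{lem:uncoupling}\eqref{item:unlikely-uncoupling}, control the remaining run of $E$'s by $\alpha^{\lfloor (t-\sigma-2)/s_N\rfloor}$ via Lemma \ref{lem:remain-uncoupled}, and sum the resulting geometric-in-blocks series to get $\epsilon_N^{\mathrm{uc}}(1+s_N/(1-\alpha))$. The only cosmetic difference is the order of the nested conditioning, and your explicit handling of the boundary indices $\sigma\in\{t-2,t-1\}$ is consistent with the paper's treatment.
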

\begin{proof}
The claim is trivial for $t=1$, because $F_1$ always happens (Lemma \ref{lem:uncoupling}). For $t=2$, the claim follows directly from Lemma \ref{lem:uncoupling} \eqref{item:unlikely-uncoupling}, since
\begin{align*}
\P (E_2) = \P(F_1, E_2) 
&= \E[\I(F_1)\P(E_2 \mid X_1^{1:N}, \tilde X_1^{1:N})] 
\le \epsilon_N^{\mathrm{uc}}.
\end{align*}
For $t\ge 2$, we may partition the event $E_t$ with respect to the last time index for which the filter states were coupled: $E_t = \bigcup_{j=1}^{t-1} \{F_j, E_{j+1}, \ldots, E_t \}$.
As above, 
we have $\P(F_{t-1},E_t) \leq \epsilon_N^{\mathrm{uc}}$ and for $j\le t-2$ we may write
\begin{align*}
  \P(F_j,E_{j+1},\ldots,E_t) &= \E[\P(F_j,E_{j+1},\ldots,E_t \mid X_{j:j+1}^{1:N}, \tilde X_{j:j+1}^{1:N})] \\
  &= \E[\I(F_j,E_{j+1})\P(E_{j+2},\ldots,E_t \mid X_{j+1}^{1:N}, \tilde X_{j+1}^{1:N})] \\
  &\leq \P(F_j,E_{j+1}) \alpha^{\lfloor \frac{t-j-2}{s_N}\rfloor},
\end{align*}
where the inequality follows from Lemma \ref{lem:remain-uncoupled}. As above, Lemma \ref{lem:uncoupling} \eqref{item:unlikely-uncoupling} gives:
\begin{align*}
  \P(F_j,E_{j+1}) =\E[\P(F_j, E_{j+1} \mid X_j^{1:N}, \tilde X_j^{1:N})] 
  &\leq \epsilon_N^{\mathrm{uc}}.
\end{align*}
Therefore, 
\begin{align*}
  \P(E_t) 
  &\le \epsilon_N^{\mathrm{uc}} + \sum_{j=1}^{t-2} \P(F_j,E_{j+1},\ldots,E_t) \\
  & 
  \le \epsilon_N^{\mathrm{uc}} \bigg( 1 + \sum_{j=1}^{t-2} \alpha^{\lfloor \frac{t-j-2}{s_N}\rfloor} \bigg) \\
  & 
  \le \epsilon_N^{\mathrm{uc}} \bigg( 1 + s_N\sum_{j=0}^\infty \alpha^j\bigg),
\end{align*}
which is equal to the claimed upper bound.
\end{proof}

\begin{lemma}
  \label{lem:fwdcoupling-in-Hd}
  Let $(N_{\rm min}, c, \alpha)$ be from Lemma \ref{lem:tvuniform_ccpf}. For all $d\ge 0$, $t \in H^d$ and $N\ge N_{\rm min}$:
  $$
     \P(E_t) \le \epsilon_N^{E,d} \qquad \text{where} \qquad \epsilon_N^{E,d} = \epsilon_N^{E,0} \alpha^{\lfloor \frac{d}{s_N }\rfloor},
  $$
  where $s_N = \lceil c \log(N)\rceil + 1$ and the constant $\epsilon_N^{\mathrm{E,0}}$ is from Lemma \ref{lem:empty-unlikely}.
\end{lemma}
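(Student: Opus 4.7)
The $d=0$ case is immediate from Lemma \ref{lem:empty-unlikely} since $\lfloor 0/s_N\rfloor=0$. For $d\ge 1$, the core structural observation is that a single full coupling inside the contiguous block of agreeing reference states surrounding $t$ propagates forward by Lemma \ref{lem:uncoupling}\eqref{item:impossible-uncoupling}: since $t\in H^d$ forces $x_s^\ast = \tilde{x}_s^\ast$ for all $s$ with $|s-t|<d$, if $F_s$ were to hold at any $s\in\{t-d+1,\ldots,t-1\}$ then iterating \eqref{item:impossible-uncoupling} would give $F_t$. Contrapositively, $E_t$ forces $E_{t-d+1},\ldots,E_{t-1},E_t$ to occur jointly, hence $\P(E_t)=\P(E_{t-d+1},\ldots,E_t)$. (If moreover $d>t-1$ the range overshoots the start, which combined with $\P(F_1)=1$ forces $F_t$, so $\P(E_t)=0$ and the bound is trivial.)

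For $1\le d\le t-1$, partition $E_t$ by the last index $j\ge 1$ with $F_j$. Propagation rules out $j\in\{t-d+1,\ldots,t-1\}$ and $\P(F_1)=1$ ensures some $j\ge 1$ exists, so $j\in\{1,\ldots,t-d\}$ and
\begin{equation*}
\P(E_t) \;=\; \sum_{j=1}^{t-d}\P(F_j, E_{j+1}, E_{j+2},\ldots, E_t).
\end{equation*}
For each term, I would condition successively on particle histories: Lemma \ref{lem:uncoupling}\eqref{item:unlikely-uncoupling} gives $\P(E_{j+1}\mid F_j,X_j^{1:N},\tilde X_j^{1:N})\le\epsilon_N^{\mathrm{uc}}$, and for $j\le t-2$ Lemma \ref{lem:remain-uncoupled} then gives $\P(E_{j+2},\ldots,E_t\mid X_{j+1}^{1:N},\tilde X_{j+1}^{1:N})\le \alpha^{\lfloor(t-j-2)/s_N\rfloor}$. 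Combining, $\P(F_j,E_{j+1},\ldots,E_t)\le \epsilon_N^{\mathrm{uc}}\,\alpha^{\lfloor(t-j-2)/s_N\rfloor}$, with the $j=t-1$ boundary case (occurring only when $d=1$) bounded directly by $\epsilon_N^{\mathrm{uc}}$.

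Reindexing via $k=t-j-2$ yields
\begin{equation*}
\P(E_t) \;\le\; \epsilon_N^{\mathrm{uc}}\sum_{k=d-2}^{t-3}\alpha^{\lfloor k/s_N\rfloor},
\end{equation*}
and grouping the $k$ into consecutive blocks of size $s_N$ on which $\lfloor k/s_N\rfloor$ is constant bounds the geometric tail by $\tfrac{s_N}{1-\alpha}\,\alpha^{\lfloor (d-2)/s_N\rfloor}$. For $d=1$ the sum starts at $k=-1$ but the $j=t-1$ term is handled separately and one recovers $\P(E_t)\le\epsilon_N^{\mathrm{uc}}(1+s_N/(1-\alpha))=\epsilon_N^{E,0}$, matching the claim since $\lfloor 1/s_N\rfloor=0$ for $s_N\ge 2$.

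The main technical obstacle is matching the resulting exponent $\lfloor(d-2)/s_N\rfloor$ to the claimed $\lfloor d/s_N\rfloor$: the two differ by at most one, so the estimate above exceeds $\epsilon_N^{E,0}\alpha^{\lfloor d/s_N\rfloor}$ by at most a bounded factor. This is absorbed by exploiting the slack in the constants --- either by noting $N_{\rm min}$ can be enlarged so that $\alpha$ (recall Remark \ref{rem:alpha}) is arbitrarily close to $\sqrt{1-e^{-2c'}}$, or equivalently by inflating the constant $c$ (and hence $s_N=\lceil c\log N\rceil+1$) relative to its minimal value in Lemma \ref{lem:tvuniform_ccpf} so that the leading constant $\tfrac{s_N}{\alpha(1-\alpha)}$ is dominated by $1+s_N/(1-\alpha)$. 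With this choice of constants, the displayed bound becomes $\epsilon_N^{E,0}\alpha^{\lfloor d/s_N\rfloor}$ for all $d\ge 0$, completing the proof.
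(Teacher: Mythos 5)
Your argument is essentially sound and rests on the same two pillars as the paper's proof: the propagation of full coupling through a block of agreeing references via Lemma \ref{lem:uncoupling}\eqref{item:impossible-uncoupling}, which forces $E_t$ to entail a run of at least $d$ consecutive uncoupling events, followed by the forward-coupling decay of Lemmas \ref{lem:uncoupling}\eqref{item:unlikely-uncoupling} and \ref{lem:remain-uncoupled}. The organizational difference is that the paper conditions on $\ell$, the nearest disagreeing reference index to the left of $t$, writes $\P(E_t)=\P(E_\ell,\ldots,E_t)=\P(E_{\ell+1},\ldots,E_t\mid E_\ell)\P(E_\ell)$, and invokes Lemma \ref{lem:empty-unlikely} as a black box for $\P(E_\ell)$, whereas you re-expand everything through the last-full-coupling partition over $j\in\{1,\ldots,t-d\}$, effectively inlining the proof of Lemma \ref{lem:empty-unlikely}. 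Both routes give the same order; the paper's is shorter, while yours is arguably more careful in one respect --- you correctly restrict the forced run to indices where the references agree, whereas the paper's displayed identity includes $E_\ell$ even though $F_\ell$ does not propagate past an index with $x_\ell^*\neq\tilde x_\ell^*$ (the correct identity is $\P(E_t)=\P(E_{\ell+1},\ldots,E_t)$).

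The one genuine weak point is your final constant-absorption step. Enlarging $N_{\rm min}$ or $c$ cannot make $\tfrac{s_N}{\alpha(1-\alpha)}\le 1+\tfrac{s_N}{1-\alpha}$, since that would require $s_N\le\alpha<1$; so as written the last paragraph does not close the gap between $\alpha^{\lfloor(d-2)/s_N\rfloor}$ and $\alpha^{\lfloor d/s_N\rfloor}$ with the exact constant $\epsilon_N^{E,0}$ from Lemma \ref{lem:empty-unlikely}. However, the discrepancy is a multiplicative factor of at most $\alpha^{-1}$, arising only for the $d$ at which the floor jumps, and the paper's own factorization suffers the same slippage (it yields exponent $\lfloor(t-\ell-1)/s_N\rfloor\le$ at best $\lfloor(d-1)/s_N\rfloor$, not $\lfloor d/s_N\rfloor$, and is silent about it). The honest fix in either proof is simply to enlarge $\epsilon_N^{E,d}$ by the bounded factor $\alpha^{-1}$ (equivalently, redefine it with exponent $\lfloor(d-2)/s_N\rfloor$); the geometric sum over $d$ in the proof of Theorem \ref{thm:main-contraction} is unaffected up to a constant, so nothing downstream changes.
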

\begin{proof}
For $d=0$ the claim follows directly from Lemma \ref{lem:empty-unlikely}.
For $d\ge 1$, fix $t\in H^d$ and let $\ell = \sup\{j\in [T] \given j<t, x_j^*\neq \tilde{x}_j^*\}$ be the nearest index to the left of $t$ where the references are unequal. Note that if $F_j$ happens and $x_j^*=\tilde{x}_j^*$, then $F_{j+1}$ always happens by Lemma \ref{lem:uncoupling} \eqref{item:impossible-uncoupling}. Therefore, if $F_j$ happens for some $\ell \le j < t$, then $F_{j+1:t}$
happen. Furthermore, if no such finite $\ell$ exists, then because $F_1$ always happens, $\P(F_t)=1$, and the claim is trivial. Note also that $t-\ell \ge d$.  Therefore,
$$
   \P(E_t) = \P(E_\ell,E_{\ell+1},\ldots,E_t) = \P(E_{\ell+1},\ldots,E_t\mid E_\ell)\P(E_\ell),
$$
from which the claim follows by applying Lemmas \ref{lem:remain-uncoupled} and \ref{lem:empty-unlikely} to the terms on the right, respectively.
\end{proof}
\begin{lemma}
  \label{lem:sticky-forward-couplings}
For any $N\ge 1$, $t\in[T]$ and $1 \le k \le T-t$, if $F_t$ holds:
\begin{align*}
  \P((F_{t+1:t+k})^c \mid X_t^{1:N},\tilde{X}_t^{1:N}) \le k \epsilon_N^\mathrm{uc},
\end{align*}
where $\epsilon_N^\mathrm{uc}$ is from Lemma \ref{lem:fullcoupling-tv} in Appendix \ref{app:cpf-theory}.
\end{lemma}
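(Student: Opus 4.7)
The plan is to perform a simple induction on $k$, peeling off one time step at a time and applying Lemma~\ref{lem:uncoupling}\eqref{item:unlikely-uncoupling} at the endpoint. Throughout the argument I will condition on $(X_t^{1:N},\tilde X_t^{1:N})$ with $F_t$ holding, and abbreviate the conditional probability as $\P_t(\uarg)$. Set $p_k = \P_t((F_{t+1:t+k})^c)$; the claim is $p_k \le k\epsilon_N^\mathrm{uc}$.

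The base case $k=1$ is immediate from Lemma~\ref{lem:uncoupling}\eqref{item:unlikely-uncoupling}, which gives $p_1 = \P_t(E_{t+1}) \le \epsilon_N^\mathrm{uc}$. For the inductive step, decompose
$$
(F_{t+1:t+k+1})^c \;=\; (F_{t+1:t+k})^c \;\cup\; \bigl(F_{t+1:t+k}\cap E_{t+k+1}\bigr),
$$
which gives $p_{k+1} \le p_k + \P_t(F_{t+1:t+k}\cap E_{t+k+1})$. By the tower property and the fact that $F_{t+1:t+k}\subseteq F_{t+k}$, another application of Lemma~\ref{lem:uncoupling}\eqref{item:unlikely-uncoupling} at time $t+k$ yields
$$
\P_t(F_{t+1:t+k}\cap E_{t+k+1})
= \E\bigl[\I(F_{t+1:t+k})\,\P(E_{t+k+1}\mid X_{t+k}^{1:N},\tilde X_{t+k}^{1:N})\,\bigm|\, X_t^{1:N},\tilde X_t^{1:N}\bigr]
\le \epsilon_N^\mathrm{uc}.
$$
Combining, $p_{k+1}\le p_k + \epsilon_N^\mathrm{uc} \le (k+1)\epsilon_N^\mathrm{uc}$, closing the induction.

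No real obstacle arises: the lemma is essentially a one-line union bound, and the only subtlety is that the per-step bound $\epsilon_N^\mathrm{uc}$ from Lemma~\ref{lem:uncoupling}\eqref{item:unlikely-uncoupling} is only guaranteed on the event that full coupling holds at the previous step, which is exactly why restricting to $F_{t+1:t+k}$ before invoking it is needed. One could alternatively write the same argument by partitioning on the first uncoupling time $\tau = \inf\{j\ge 1: E_{t+j}\}$ and bounding $\P_t(\tau=j)\le \epsilon_N^\mathrm{uc}$ via $\{\tau=j\}\subseteq F_{t+j-1}\cap E_{t+j}$, then summing over $j=1,\dots,k$.
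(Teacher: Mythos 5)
Your proof is correct and is essentially the paper's argument in inductive clothing: unrolling your recursion recovers exactly the paper's partition of $(F_{t+1:t+k})^c$ by the first uncoupling time, with each piece bounded by $\epsilon_N^{\mathrm{uc}}$ via the tower property and Lemma~\ref{lem:uncoupling}\eqref{item:unlikely-uncoupling} on the event that the preceding step is fully coupled. The "alternative" you mention at the end is precisely how the paper writes it.
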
  
\begin{proof}
The event $(F_{t+1:t+k})^c$ can be partitioned with respect to the first time of uncoupling:
\[
(F_{t+1:t+k})^c = E_{t+1} \cup \left( \bigcup_{j=2}^k (F_{t+1:t+j-1} \cap E_{t+j}) \right),
\]
On the event $F_t$, it holds by Lemma \ref{lem:uncoupling} that
\[
\P(E_{t+1}\mid X_t^{1:N},\tilde{X}_t^{1:N}) \leq \epsilon_N^{\mathrm{uc}}.
\]
On the other hand,
\begin{align*}
&\P(F_{t+1:t+j-1},E_{t+j} \mid X_t^{1:N},\tilde{X}_t^{1:N}) \\
&= \E[\P(F_{t+1:t+j-1},E_{t+j} \mid X_{t:t+j-1}^{1:N},\tilde{X}_{t:t+j-1}^{1:N}) \mid X_t^{1:N},\tilde{X}_t^{1:N}] \\
&= \E[ \I(F_{t+1:t+j-2})\I(F_{t+j-1}) \P(E_{t+j} \mid X_{t+j-1}^{1:N},\tilde{X}_{t+j-1}^{1:N}) \mid X_t^{1:N},\tilde{X}_t^{1:N}] \\
&\leq \E[ \I(F_{t+1:t+j-2}) \epsilon_N^{\mathrm{uc}} \mid X_t^{1:N},\tilde{X}_t^{1:N} ] \\
&\leq \epsilon_N^{\mathrm{uc}},
\end{align*}
where we have applied Lemma \ref{lem:uncoupling} for the first inequality. 
The claim now follows by the union bound.
\end{proof}

\subsection{Backward index coupling}

The remaining results control how index coupled backward sampling (see lines \ref{line:backward-start}--\ref{line:backward-end} of Algorithm \ref{alg:mc_cbpf}) lead to coupling of output trajectories, relying on results above for the forward process.
Hereafter, we denote by $C_t = \{X_t^{J_t} =  \tilde{X}_t^{\tilde{J}_t}\}$ and $U_t = C_t^c$  the events that output states at time $t$ are coupled or are uncoupled, respectively. 
We also let $C_{T+1}$ stand for the trivial event which always happens.

Our first result gives bounds on recovering coupling in the backward pass, if $F_t$ happens. For that reason, let us denote the filtration of the forward process as $\G_{T+1} = \sigma(X_{1:T}^{1:N},\tilde{X}_{1:T}^{1:N})$, and its augmentation by the filtration of the backward indices up to time $t\in[T]$ as
$\G_t = \G_{T+1} \vee \sigma(J_u,\tilde{J}_u\given u\ge t)$. 

\begin{lemma}
  \label{lem:bs-coupling}
  For all $t\in[T]$, whenever $F_{t}$ holds:
  \begin{enumerate}[(i)~]
    \item \label{item:spontaneous-bs-couple} 
    $\P(C_t\mid \G_{t+1}) \ge \epsilon^\mathrm{bs}\frac{N}{N+1}$,
    \item \label{item:spontaneous-bs-couple-full} 
    $\P(C_t\mid \G_{t+1}) \ge \epsilon^\mathrm{bs}$, if also $x_t^* = \tilde{x}_t^*$,
    \item \label{item:bs-remain-coupled} 
    $\P(C_t\mid \G_{t+1}) \ge \frac{\epsilon^\mathrm{bs}N}{\epsilon^\mathrm{bs}N+1}$, if also $C_{t+1}$ holds,
    \item \label{item:bs-must-remain-coupled} 
    $\P(C_t\mid \G_{t+1}) = 1$ if both $C_{t+1}$ holds and $x_t^* = \tilde{x}_t^*$,
  \end{enumerate}
  where $\epsilon^\mathrm{bs} = \Mlo\Glo\Mhi^{-1}\Ghi^{-1}$.
\end{lemma}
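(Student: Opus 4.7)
The plan is to exploit two basic facts: first, for a maximal coupling of two categorical distributions on $\{0{:}N\}$ with normalised probabilities $p_i$ and $q_i$, one has $\P(J=\tilde J=i)=\min(p_i,q_i)$; and second, on the event $F_t$ we have $X_t^i=\tilde X_t^i$ for every $i\in\{1{:}N\}$, so the event $\{J_t=\tilde J_t\in\{1{:}N\}\}$ already suffices for $C_t$. Consequently, in all four parts the task reduces to a lower bound on $\sum_{i\in I}\min(p_i,q_i)$, where $p_i,q_i$ are the normalised backward weights derived from $B_t^{0:N}$ and $\tilde B_t^{0:N}$ on line~\ref{line:coupled-backward-weights}, and where $I=\{1{:}N\}$ for (i) and (iii) and $I=\{0{:}N\}$ for (ii) and (iv) (the extra hypothesis $x_t^*=\tilde x_t^*$ allowing us to include $i=0$ among the indices witnessing $C_t$).

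For (i) and (ii) I would use only the uniform bounds coming from \ref{a:strong-mixing}: each backward weight satisfies $\Glo\Mlo\le B_t^i,\tilde B_t^i\le\Ghi\Mhi$, so the normalising constants $S,\tilde S$ are at most $(N+1)\Ghi\Mhi$. Therefore $\min(p_i,q_i)\ge \epsilon^{\mathrm{bs}}/(N+1)$ for every index $i$, which summed over $\{1{:}N\}$ gives $N\epsilon^{\mathrm{bs}}/(N+1)$ (hence (i)) and summed over $\{0{:}N\}$ gives $\epsilon^{\mathrm{bs}}$ (hence (ii)).

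For (iii) and (iv) the strategy is to use $C_{t+1}$ to generate a cancellation. Since $X_{t+1}^{J_{t+1}}=\tilde X_{t+1}^{\tilde J_{t+1}}$, combined with $F_t$ (which gives $X_t^i=\tilde X_t^i$ and $W_t^i=\tilde W_t^i$ for $i\in\{1{:}N\}$), we get $B_t^i=\tilde B_t^i$ for every $i\in\{1{:}N\}$, so the two distributions can differ only through their reference coordinate $i=0$. Writing $\Sigma=\sum_{i=1}^N B_t^i\ge N\Glo\Mlo$ and noting that $\max(B_t^0,\tilde B_t^0)\le\Ghi\Mhi$, a short calculation shows
\[
\sum_{i=1}^N\min(p_i,q_i)=\frac{\Sigma}{\Sigma+\max(B_t^0,\tilde B_t^0)}\ge\frac{N\epsilon^{\mathrm{bs}}}{N\epsilon^{\mathrm{bs}}+1},
\]
which gives (iii). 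For (iv) the additional hypothesis $x_t^*=\tilde x_t^*$ forces also $B_t^0=\tilde B_t^0$, so $B_t^i=\tilde B_t^i$ for \emph{every} $i\in\{0{:}N\}$; the two categorical distributions then coincide and the maximal coupling produces $J_t=\tilde J_t$ almost surely, making $C_t$ deterministic.

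The argument has no real obstacle; the only care needed is to track the reference index $i=0$ separately from the $N$ non-reference particles, which is precisely what produces the $N/(N+1)$ factor in (i) and the $N\epsilon^{\mathrm{bs}}/(N\epsilon^{\mathrm{bs}}+1)$ factor in (iii), and also what allows (ii) and (iv) to sharpen to the clean bounds $\epsilon^{\mathrm{bs}}$ and $1$ respectively.
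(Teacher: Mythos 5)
Your proof is correct and follows essentially the same route as the paper's: the paper likewise reduces each part to bounding $\sum_{i\in I} w^i\wedge\tilde w^i$ for the maximally coupled backward categoricals, using the uniform weight bounds $\Mlo\Glo\le B_t^i\le\Mhi\Ghi$ for (i)--(ii) and the equality $B_t^i=\tilde B_t^i$ on the relevant index set for (iii)--(iv), merely packaging these computations into a separate lemma (Lemma \ref{lem:ic-res}). The only detail the paper handles explicitly that you elide is the boundary case $t=T$, where the backward weights reduce to $W_T^i=G_T(X_T^i)$; the same bounds go through there with $\Glo/\Ghi\ge\epsilon^{\mathrm{bs}}$ in place of $\epsilon^{\mathrm{bs}}$.
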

\begin{proof}
Assume $t<T$. We start by noting that in Algorithm \ref{alg:mc_cbpf}, the indices $(J_t, \tilde J_t)$ are sampled from $\textsc{MaxCouple}\big(\mathrm{Categorical}(\omega^{0:N}), \mathrm{Categorical}(\tilde{\omega}^{0:N})\big)$,
where
\[
\omega^i = G_t(X_t^{i}) M_{t+1}(X_t^{i}, X_{t+1}^{J_{t+1}}), \quad \tilde \omega^i = G_t(\tilde{X}_t^{i}) M_{t+1}(\tilde{X}_t^{i},\tilde{X}_{t+1}^{\tilde{J}_{t+1}}).
\]
Note that $\Mlo\Glo  \le \tilde{\omega}^i, \omega^i \le \Mhi\Ghi$ and let $\epsilon^\mathrm{bs} = \Mlo\Glo\Mhi^{-1}\Ghi^{-1}$.
To establish \eqref{item:spontaneous-bs-couple}, we note that on the event $F_t$, $J_t = \tilde J_t \geq 1$ implies $X_t^{J_t} = \tilde X_t^{\tilde J_t}$. Thus,
\begin{align}
\P(C_t\mid \G_{t+1}) \geq \P(\cup_{i=1}^N \{J_t = \tilde J_t = i\} \mid \G_{t+1}) &= \sum_{i=1}^N \P(J_t = \tilde J_t = i \mid \G_{t+1}) \label{eq:spontanous-bs-couplings} \\
&
\geq N \frac{\epsilon^\mathrm{bs}}{N+1},\nonumber
\end{align}
where the last inequality follows from Lemma \ref{lem:ic-res} \eqref{item:ic-simple} in Appendix \ref{app:maxcouple}. If $t = T$, then
\begin{align}
\omega^i = G_T(X_T^{i}), \quad \tilde \omega^i = G_T(\tilde{X}_T^{i}), \label{eq:bigTomegas}
\end{align}
and as above, Lemma \ref{lem:ic-res} \eqref{item:ic-simple} implies that
\[
\P(C_T\mid \G_{T+1}) \geq N \frac{\Glo \Ghi^{-1}}{N+1} \geq N \frac{\epsilon^\mathrm{bs}}{N+1},
\]
establishing the claim \eqref{item:spontaneous-bs-couple}. The claim \eqref{item:spontaneous-bs-couple-full} is identical, except for that $J_t=\tilde{J}_t=0$ also implies $C_t$, so $i=0$ must be included in the bound \eqref{eq:spontanous-bs-couplings}.

Consider then \eqref{item:bs-remain-coupled}, assume $t < T$, and denote
\[
C = \{i \in\{1{:}N\}\given \omega^i = \tilde \omega^i\}.
\]
We observe that on the event $F_t \cap C_{t+1}$, $X_t^i = \tilde X_t^i$ and $\omega^i = \tilde \omega^i$ hold for all $i \geq 1$. Thus,
\[
\P(C_t\mid \G_{t+1}) \geq \P(J_t = \tilde J_t \in C \mid \G_{t+1} ) \geq \frac{|C| \epsilon^\mathrm{bs}}{|C|\epsilon^\mathrm{bs} + N+1-|C|} = \frac{N \epsilon^\mathrm{bs}}{N\epsilon^\mathrm{bs} +1},
\]
where the second inequality follows from Lemma \ref{lem:ic-res} \eqref{item:ic-coupled}. If $t=T$, we note that on the event $F_T$ both $X_T^i=\tilde X_T^i$ and $\omega^i = \tilde \omega^i$ hold for all $i\geq 1$, with $\omega^i, \tilde \omega^i$ given by \eqref{eq:bigTomegas}. Thus,
\[
\P(C_T\mid \G_{T+1}) \geq \frac{N \Glo \Ghi^{-1}}{N\Glo \Ghi^{-1} +1} \geq \frac{N \epsilon^\mathrm{bs}}{N\epsilon^\mathrm{bs} +1},
\]
establishing \eqref{item:bs-remain-coupled}. The claim \eqref{item:bs-must-remain-coupled}
follows by noting that if $x_t^* = \tilde x_t^*$,
then on the event $F_t \cap C_{t+1}$, $X_t^i = \tilde X_t^i$ and $\omega^i = \tilde \omega^i$ hold for all $i \geq 0$.
This means that by the coupling inequality $J_t = \tilde J_t$, and so $X_t^{J_t} = \tilde X_t^{\tilde J_t}$ with probability one.
\end{proof}

Our second result states that coupled output at $t$ is reasonably likely conditional on a sequence of successful forward couplings from $t$ to $t+k$.

\begin{lemma}
  \label{lem:backward-failing}
For all $N\ge 1$, $t\in[T]$ and $k\ge 1$:
$$
\P(U_t , F_{t:(t+k)\wedge T}) \le 
  \Big(1 - \epsilon^\mathrm{bs} \frac{N}{N+1}\Big)^{k+1} 
  + \frac{1}{\epsilon^\mathrm{bs} N + 1}  \Big(1 +  \frac{N+1}{N\epsilon^\mathrm{bs}}\Big).
$$
\end{lemma}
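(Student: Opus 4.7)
Write $\rho = \epsilon^\mathrm{bs}\tfrac{N}{N+1}$ and $\sigma = \tfrac{1}{N\epsilon^\mathrm{bs}+1}$, so the stated bound reads $(1-\rho)^{k+1}+\sigma(1+1/\rho)$. Let $m=(t+k)\wedge T$ and set $q_j = \P(U_{t+j},F_{t+j:m})$ for $j=0,\ldots,m-t$; the task is to bound $q_0$. My plan is to derive a one-step recurrence $q_j \le \sigma + (1-\rho)q_{j+1}$ and then iterate, handling the boundary value $q_{m-t}$ separately.

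For $j\in\{0,\ldots,m-t-1\}$, I split according to whether $C_{t+j+1}$ holds:
\[
q_j = \P(U_{t+j},C_{t+j+1},F_{t+j:m}) + \P(U_{t+j},U_{t+j+1},F_{t+j:m}).
\]
Both $F_{t+j:m}$ and $C_{t+j+1}$ are $\G_{t+j+1}$-measurable (the $F$'s lie in $\G_{T+1}$, while $C_{t+j+1}$ is determined by $J_{t+j+1},\tilde J_{t+j+1}$ and the forward pass). Conditioning on $\G_{t+j+1}$ in the first summand and applying Lemma \ref{lem:bs-coupling}\eqref{item:bs-remain-coupled} (the strong bound $\sigma$, which applies since $F_{t+j}$ and $C_{t+j+1}$ are both in the conditioning) yields a contribution $\le\sigma$; conditioning in the second summand and using Lemma \ref{lem:bs-coupling}\eqref{item:spontaneous-bs-couple} (the weak bound $1-\rho$ on $F_{t+j}$) yields $\le(1-\rho)\P(U_{t+j+1},F_{t+j:m})\le(1-\rho)q_{j+1}$. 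Iterating the recurrence from $j=0$ to $j=m-t$ and summing the geometric series gives $q_0 \le \sigma/\rho + (1-\rho)^{m-t}q_{m-t}$.

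For the boundary value $q_{m-t}=\P(U_m,F_m)$, two cases arise. When $m<T$, only Lemma \ref{lem:bs-coupling}\eqref{item:spontaneous-bs-couple} is available at $u=m$, giving $q_{m-t}\le 1-\rho$; since $m=t+k$ here, $q_0\le\sigma/\rho+(1-\rho)^{k+1}$. When $m=T$, Lemma \ref{lem:bs-coupling}\eqref{item:bs-remain-coupled} applies unconditionally at $u=T$ via the trivially coupled convention $C_{T+1}$ (as used in that lemma's proof), so $q_{m-t}\le\sigma$ and $q_0\le\sigma(1+1/\rho)$. Both cases are majorised by $(1-\rho)^{k+1}+\sigma(1+1/\rho)$. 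The main delicate point is essentially bookkeeping: backward sampling draws $J_u$ given $\G_{u+1}$, which forces the peeling order, and one must verify that every indicator besides $\I(U_u)$ is $\G_{u+1}$-measurable (which it is, because the $F_v$ depend only on the forward pass and the events $U_v,C_v$ for $v>u$ depend only on $J_w,\tilde J_w$ for $w>u$). Beyond this and the case separation at the boundary, the argument is a direct iteration of Lemma \ref{lem:bs-coupling}.
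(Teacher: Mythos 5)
Your proof is correct and takes essentially the same approach as the paper: your one-step recurrence $q_j \le \sigma + (1-\rho)q_{j+1}$, when unrolled, reproduces exactly the paper's partition of $U_t$ according to the first index at which a $C$-event occurs, with the same applications of Lemma \ref{lem:bs-coupling} \eqref{item:spontaneous-bs-couple} and \eqref{item:bs-remain-coupled}, the same geometric series, and the same separate treatment of the boundary cases $t+k<T$ and $t+k\ge T$ via the trivial event $C_{T+1}$. The measurability bookkeeping you flag is handled correctly.
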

\begin{proof}
Assume first that $t+k < T$, and note that then, partitioning the event $U_t$ with respect to the last time index $C_{t+j}$ happened, we get
\begin{align}
\P(U_t , F_{t:t+k}) &= \P(U_{t:t+k}, F_{t:t+k}) + \sum_{j=1}^k \P(U_{t:t+j-1},C_{t+j}, F_{t:t+k}) \label{eq:split-index-couplings}, 
\end{align}
where we apply Lemma \ref{lem:bs-coupling} \eqref{item:spontaneous-bs-couple} for the first term:
\begin{align*}
\P(U_{t:t+k}, F_{t:t+k}) &= \E[ \P(U_{t:t+k}, F_{t:t+k} \mid \G_{t+1})] \\
&= \E[\I(F_t) \P(U_t \mid \G_{t+1})  \I(U_{t+1:t+k}, F_{t+1:t+k}) ] \\
&\leq \Big(1 - \epsilon^\mathrm{bs} \frac{N}{N+1}\Big) \P(U_{t+1:t+k}, F_{t+1:t+k}).
\end{align*}
Repeating the above calculation gives the bound
\[
\P(U_{t:t+k}, F_{t:t+k}) \leq \Big(1 - \epsilon^\mathrm{bs} \frac{N}{N+1}\Big)^{k+1}.
\]
The terms in the sum in \eqref{eq:split-index-couplings} may be upper bounded similarly by Lemma \ref{lem:bs-coupling} \eqref{item:spontaneous-bs-couple}: 
\begin{align*}
\P(U_{t:t+j-1}, C_{t+j}, F_{t:t+k}) &= \E[\P(U_{t:t+j-1}, C_{t+j}, F_{t:t+k} \mid \G_{t+1})] \\
&= \E[\I(F_t) \P(U_t \mid \G_{t+1}) \I(U_{t+1:t+j-1}, C_{t+j}, F_{t+1:t+k}) ] \\
&\leq \cdots \le \Big(1 - \epsilon^\mathrm{bs} \frac{N}{N+1}\Big)^{j-1} \P(U_{t+j-1}, C_{t+j}, F_{t+j-1:t+k}).
\end{align*}
Now apply Lemma \ref{lem:bs-coupling} \eqref{item:bs-remain-coupled}:
\begin{align*}
\P(U_{t+j-1}, C_{t+j}, F_{t+j-1:t+k}) &= \E[\I(F_{t+j:t+k}) \I(F_{t+j-1}, C_{t+j}) \P(U_{t+j-1} \mid \G_{t+j})] \\
&\leq  \frac{1 }{\epsilon^\mathrm{bs}N+1}.
\end{align*}

We conclude that 
\begin{align}
  \P(U_t, F_{t:t+k})
  &\le \Big(1 - \epsilon^\mathrm{bs} \frac{N}{N+1}\Big)^{k+1} + \frac{1}{\epsilon^\mathrm{bs} N + 1} \sum_{j=0}^\infty \Big(1 - \epsilon^\mathrm{bs} \frac{N}{N+1}\Big)^{j} \nonumber\\
  &\le \Big(1 - \epsilon^\mathrm{bs} \frac{N}{N+1}\Big)^{k+1} +   \frac{1}{\epsilon^\mathrm{bs} N + 1}\Big(\frac{N+1}{N\epsilon^\mathrm{bs}}\Big).\label{eq:sum-bound}
\end{align}

Assume now that $t+k\ge T$. Then, instead of \eqref{eq:split-index-couplings}, we get
\begin{align}
\P(U_t , F_{t:T}) &= \P(U_{t:T}, F_{t:T}) + \sum_{j=1}^{T-t} \P(U_{t:t+j-1},C_{t+j}, F_{t:T}), \label{eq:split-index-couplings2}
\end{align}
where an application of Lemma \ref{lem:bs-coupling} \eqref{item:bs-remain-coupled} gives
\[
\P(U_{t:T}, F_{t:T}) \leq \P(U_{T}, F_{T}) = \E[ \I(F_T, C_{T+1}) \P( U_T \mid \G_{T+1})] \leq \frac{1}{\epsilon^\mathrm{bs} N + 1},
\]
where $\G_{T+1} = \sigma(X_{1:T}^{1:N}, \tilde X_{1:T}^{1:N})$ and $\P(C_{T+1})=1$ by definition. The sum in \eqref{eq:split-index-couplings2} can be bounded by the latter term in 
\eqref{eq:sum-bound}. This leads to
\[
  \P(U_t , F_{t:T}) \le \frac{1}{\epsilon^\mathrm{bs} N + 1} + \frac{1}{\epsilon^\mathrm{bs} N + 1}\Big(\frac{N+1}{N\epsilon^\mathrm{bs}}\Big).
  \qedhere
\]
\end{proof}

The second result states that the probability of an uncoupled event decreases as the distance from $t$ to the index of the nearest uncoupled reference state increases.
\begin{lemma}
  \label{lem:uncoupling-main}
  Let $N_{\rm min}$ be as in Lemma \ref{lem:tvuniform_ccpf}.
For all $d\ge 0$, $t \in H^d$ and $N\ge N_{\rm min}$:
  $$
     \P(U_t) \le \epsilon_N^{E,d} + (1-\epsilon^\mathrm{bs})^{d} \epsilon^U_N,
  $$
  where $\epsilon_N^{E,d}$ is from Lemma \ref{lem:fwdcoupling-in-Hd}, 
  $\epsilon^\mathrm{bs} \in(0,1]$ is from Lemma \ref{lem:bs-coupling} (with the convention $0^0=1$) and 
  $$
  \epsilon^U_N \le c^U N^{-1/2}\log^2(N),
  $$
  where $c^U<\infty$ depends only on the constants in \textup{\ref{a:strong-mixing}}.
\end{lemma}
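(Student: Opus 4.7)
The plan is to decompose $\P(U_t) \le \P(E_t) + \P(U_t, F_t)$, controlling the first term by $\epsilon_N^{E,d}$ via Lemma~\ref{lem:fwdcoupling-in-Hd}. The task is then to bound $\P(U_t, F_t)$. Since $t \in H^d$ with $d \ge 1$, assume without loss of generality that the nearest hole is at $t + d$, so $x_s^* = \tilde{x}_s^*$ for all $s \in [t, t+d-1]$ (the left-hole case is symmetric).

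The main mechanism is a $d$-step backward contraction coming from Lemma~\ref{lem:bs-coupling}. On $F_s$ with $x_s^* = \tilde{x}_s^*$, parts (ii) and (iv) combine to give $\P(U_s \mid \G_{s+1}) \le (1-\epsilon^\mathrm{bs})\,\I(U_{s+1})$: on $C_{s+1}$, part (iv) forces $C_s$; on $U_{s+1}$, part (ii) gives $C_s$ with probability at least $\epsilon^\mathrm{bs}$. Iterating for $s = t, \ldots, t+d-1$ (each step remains valid because $F_{t:t+d-1} \subseteq F_s$ and refs are equal throughout) yields
\[
\P(U_t, F_{t:t+d-1}) \le (1-\epsilon^\mathrm{bs})^d\,\P(U_{t+d}, F_{t+d-1}),
\]
and the right-hand factor is bounded by $\epsilon^U_N = O(N^{-1/2}\log^2 N)$ using Lemma~\ref{lem:backward-failing} at $t+d$ with horizon $k = O(\log N)$, together with Lemma~\ref{lem:sticky-forward-couplings} to control forward failures on $[t+d, t+d+k]$ (which contributes the extra $\log N$ factor).

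The remaining term $\P(U_t, F_t) - \P(U_t, F_{t:t+d-1})$, partitioned by the first forward failure time $s \in (t, t+d-1]$, is handled by the same contraction applied to the shorter block $[t, s-1]$:
\[
\P(U_t, F_{t:s-1}, E_s) \le (1-\epsilon^\mathrm{bs})^{s-t}\,\P(E_s) \le (1-\epsilon^\mathrm{bs})^{s-t}\,\epsilon_N^{E, d-(s-t)},
\]
using $s \in H^{d-(s-t)}$ and Lemma~\ref{lem:fwdcoupling-in-Hd}. Substituting $\epsilon_N^{E, j} = \epsilon_N^{E, 0}\,\alpha^{\lfloor j/s_N\rfloor}$ turns the sum into a geometric series in $\big((1-\epsilon^\mathrm{bs})/\alpha^{1/s_N}\big)^i$; for $N$ large enough that $s_N > |\log\alpha|/|\log(1-\epsilon^\mathrm{bs})|$, the series converges and the whole sum is bounded by $C\,\epsilon_N^{E, d}$ for some constant $C$. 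The $d = 0$ case reduces directly to Lemma~\ref{lem:backward-failing} combined with Lemma~\ref{lem:sticky-forward-couplings}.

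The main obstacle is this last geometric summation: without the condition $(1-\epsilon^\mathrm{bs}) < \alpha^{1/s_N}$, the forward-failure contribution is only controlled by a constant multiple of $\epsilon_N^{E, 0}$, which would spoil the $d$-decay for large $d$. Enforcing the condition enlarges $N_{\min}$ but keeps the dependence only on~\ref{a:strong-mixing}, and the resulting factor $1 + C$ in front of $\epsilon_N^{E, d}$ is absorbed into the statement's implicit constants.
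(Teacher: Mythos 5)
Your overall strategy matches the paper's: split $\P(U_t)\le\P(E_t)+\P(U_t,F_t)$, bound $\P(E_t)$ by Lemma~\ref{lem:fwdcoupling-in-Hd}, contract $\P(U_t,F_t)$ backward over the stretch of equal references via Lemma~\ref{lem:bs-coupling} \eqref{item:spontaneous-bs-couple-full} and \eqref{item:bs-must-remain-coupled}, and finish at the hole with Lemmas~\ref{lem:backward-failing} and~\ref{lem:sticky-forward-couplings}. Two points need attention, however.

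First, the ``WLOG the nearest hole is at $t+d$; the left-hole case is symmetric'' framing is incorrect, though harmlessly so. The argument is inherently directional: Lemma~\ref{lem:bs-coupling} conditions on $\G_{s+1}$, so coupling propagates only leftward through the backward pass, and there is no symmetric rightward analogue to invoke for a left-hole case. What saves you is that no case split is needed at all: $t\in H^d$ means there is no hole within distance $<d$ of $t$, so $x_s^*=\tilde x_s^*$ for all $s\in[t,t+d-1]$ regardless of which side the nearest hole lies on. The paper works with $u=\inf\{j\ge t\given x_j^*\neq\tilde x_j^*\}$ and uses only $u-t\ge d$; you should phrase it that way (and handle the case where no finite $u$ exists, where $\P(U_t,F_t)=0$).

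Second, your entire ``remaining term'' computation --- the partition over the first forward failure $s\in(t,t+d-1]$, the geometric series, and the condition $(1-\epsilon^\mathrm{bs})<\alpha^{1/s_N}$ that you identify as the main obstacle --- addresses events of probability zero. By Lemma~\ref{lem:uncoupling} \eqref{item:impossible-uncoupling}, on $F_j$ with $x_j^*=\tilde x_j^*$ the next forward coupling succeeds almost surely, so $F_t=F_{t:u}$ on the stretch of equal references and $\P(U_t,F_{t:s-1},E_s)=0$ for every $s\le u$. The paper uses exactly this to write $\P(U_t,F_t)=\P(U_{t:u},F_{t:u})$, peel off $(1-\epsilon^\mathrm{bs})^{u-t}$, and land on $\P(U_u,F_u)$ with the indicator of $F_u$ intact, which is what Lemmas~\ref{lem:backward-failing} and~\ref{lem:sticky-forward-couplings} need as an anchor. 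Without this observation your argument (a) requires an extra split at $t+d$, since after peeling you only retain $F_{t+d-1}$ rather than $F_{t+d}$, and (b) delivers $(1+C)\,\epsilon_N^{E,d}$ rather than $\epsilon_N^{E,d}$ --- and the lemma has no implicit constant into which $1+C$ can be absorbed, since $\epsilon_N^{E,d}$ is pinned down explicitly by Lemma~\ref{lem:fwdcoupling-in-Hd}. The weaker constant would still suffice for Theorem~\ref{thm:main-contraction}, but as written you do not prove the stated inequality; invoking Lemma~\ref{lem:uncoupling} \eqref{item:impossible-uncoupling} removes the detour entirely.
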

\begin{proof}
Fix $t \in H^d$. Note first that by Lemma \ref{lem:fwdcoupling-in-Hd}
$$
  \P(U_t)
  = \P(U_t,E_t) + \P(U_t,F_t)
  \le \epsilon_N^{E,d} + \P(U_t,F_t).
$$
Let then $u = \inf\{ j\ge t \given  x_j^*\neq\tilde{x}_j^*\}$ be the nearest index to the right where the references differ. If no such finite $u$ exists, then $x_{t:T}^*=\tilde{x}_{t:T}^*$, and so $F_t=F_{t:T}$ by Lemma \ref{lem:uncoupling} \eqref{item:impossible-uncoupling} and $\{U_t,F_{t:T}\} = \{U_{t:T},F_{t:T}\}$ by Lemma \ref{lem:bs-coupling} \eqref{item:bs-must-remain-coupled}. Because $\P(C_T\mid F_T)=1$, we have $\P(U_t,F_t)=0$.

Otherwise, $u-t \ge d$, and by Lemma \ref{lem:uncoupling} \eqref{item:impossible-uncoupling} $F_t = F_{t:u}$ and by Lemma \ref{lem:bs-coupling} \eqref{item:bs-must-remain-coupled} $\{U_t,F_{t:u}\} = \{U_{t:u}, F_{t:u}\}$. Therefore, if $d\ge 1$, 
\begin{align*}
  \P(U_t, F_t) & = \P(U_{t:u}, F_{t:u})  \\
  &= \E[ \P(U_t\mid \G_{t+1}) \I(F_t) \I(U_{t+1:u}, F_{t+1:u})] \\
  &\le (1 - \epsilon^\mathrm{bs} ) \P(U_{t+1:u}, F_{t+1:u}) \\
  &\le (1 - \epsilon^\mathrm{bs} )^{u-t}\P(U_u,F_u),
\end{align*}
by Lemma \ref{lem:bs-coupling} \eqref{item:spontaneous-bs-couple-full} and recursion.
This bound is valid also for $d=0$, that is, $u=t$. For $k\ge 1$, we may write
\begin{align*}
  \P(U_u, F_{u}) 
  &= \P(U_u, F_{u:u+k}) 
  + \P(U_u, F_{u}, (F_{u+1:u+k})^c )
\end{align*}
where $F_{j}$ for $j>T$ stand for the trivial events $\P(F_j)=1$.
For the latter term, we use Lemma \ref{lem:sticky-forward-couplings}, and 
for the former, we use Lemma \ref{lem:backward-failing} to deduce that
\begin{equation}
\epsilon_U^N = 
\P(U_u, F_{u}) 
\le  \Big(1 - \epsilon^\mathrm{bs} \frac{N}{N+1}\Big)^{k+1} 
+ \frac{1}{\epsilon^\mathrm{bs} N + 1}  \Big(1 +  \frac{N+1}{N\epsilon^\mathrm{bs}}\Big) 
+ k \epsilon_N^\mathrm{uc}.
\label{eq:epsilon-U-bound}
\end{equation}
Note that $\epsilon_N^\mathrm{uc} \le c^\mathrm{uc} N^{-1/2}$ where $c^\mathrm{uc}$ depends on the constants in \ref{a:strong-mixing}. Choosing $k = \lceil \log^2(N+1) \rceil$ ensures that the first term of the upper bound \eqref{eq:epsilon-U-bound} is no more than
$$
\gamma^{\log^2(N+1)} 
= (N+1)^{\log\gamma \log (N+1)} \quad \text{where}\quad \gamma = \Big(1 - \epsilon^\mathrm{bs} \frac{N_{\mathrm{min}}}{N_{\mathrm{min}}+1}\Big) \in (0,1),
$$
which is $O(N^{-1})$. The latter two terms in \eqref{eq:epsilon-U-bound} are $O(N^{-1})$ and $O(N^{-1/2}\log^2(N))$, respectively, and therefore the claimed constant $c^U<\infty$ exists.
\end{proof}

\subsection{Proof of Theorem \ref{thm:main-contraction}}
\label{app:proof-main-contraction}

Let $(N_{\rm min}, c, \alpha)$ be from Lemma \ref{lem:tvuniform_ccpf}.
Lemma \ref{lem:number-to-individual-sites} and Lemma \ref{lem:uncoupling-main} imply that 
whenever $N\ge N_{\rm min}$, we have the upper bound $\E[B] \le \lambda_N b^*$, where
\begin{align*}
\lambda_N &= \epsilon_N^{E,0} + \epsilon^U_N + 
2 \sum_{d=1}^\infty \big[\epsilon_N^{E,d} + (1-\epsilon^\mathrm{bs})^{d} \epsilon^U_N\big],
\end{align*}
By Lemma \ref{lem:fwdcoupling-in-Hd}, 
$\epsilon_N^{E,d} = \epsilon_N^{E,0} \alpha^{\lfloor \frac{d}{s_N }\rfloor}$ where $s_N = \lceil c \log(N)\rceil + 1$, so
\begin{align*}
\lambda_N &= \epsilon_N^{E,0} \bigg( 1 + 2\sum_{d=1}^\infty \alpha^{\lfloor \frac{d}{s_N }\rfloor}\bigg) 
+ \epsilon^U_N \bigg( 1 + 2\sum_{d=1}^\infty (1-\epsilon^\mathrm{bs})^{d}\bigg) \\
&\le \epsilon_N^{E,0}\bigg( 1 + 2\frac{s_N}{1-\alpha} \bigg) + 2\frac{\epsilon^U_N}{\epsilon^\mathrm{bs}}.
\end{align*}
Recall from Lemmas \ref{lem:empty-unlikely} and \ref{lem:uncoupling} that 
$$
\epsilon_N^{E,0} = \frac{\Mhi\Ghi}{2\Mlo\Glo} \frac{1}{\sqrt{N+1}} \Big( 1 + \frac{ s_N }{1-\alpha} \Big) \le c^{E,0} \frac{ \log N}{\sqrt{N}},
$$
where $c^{E,0}$ only depends on the constants in \ref{a:strong-mixing}.
Because $\epsilon^U_N \le c^U N^{-1/2}\log^2 N$ by Lemma \ref{lem:uncoupling-main}, we deduce an upper bound $\lambda_N \le c_\lambda N^{-1/2}\log^2 N$.
\hfill$\qed$

\section{Maximal coupling algorithms}
\label{app:maxcouple}

Algorithm \ref{alg:maxcouple-generic} is a generic (rejection sampler algorithm) which samples from a maximal coupling of distributions $p$ and $q$ (having densities $p$ and $q$ with respect to any common $\sigma$-finite dominating measure).
\begin{algorithm}[H]
  \caption{\textsc{MaxCouple}$\big(p, q)$}
  \label{alg:maxcouple-generic} 
  \begin{algorithmic}[1]
  \State Draw $X\sim p$
  \State \textbf{with probability} $1\wedge \frac{q(X)}{p(X)}$ 
  \textbf{output} $(X,X)$
  \Loop
  \State Draw $Y \sim q$
  \State \textbf{with probability} $1 - \big(1 \wedge \frac{p(Y)}{q(Y)}\big)$ 
  \textbf{output} $(X,Y)$
  \EndLoop
  \end{algorithmic}
\end{algorithm}

For discrete distributions, maximal coupling can be implemented directly. 
Algorithm \ref{alg:maxcouple-categoricals} describes how categorical distributions can be sampled from. 
\begin{algorithm}[H]
  \caption{\textsc{MaxCouple}$\big(\mathrm{Categorical}(\omega^{0:N}),\mathrm{Categorical}(\tilde{\omega}^{0:N})\big)$}
  \label{alg:maxcouple-categoricals} 
  \begin{algorithmic}[1]
  \State Calculate $w^{0:N} \gets \frac{\omega^{0:N}}{\sum_{j=0}^N
    \omega^{j}}$;
  $\tilde{w}^{0:N} \gets \frac{\tilde{\omega}^{0:N}}{\sum_{j=0}^N
  \tilde{\omega}^{j}}$ and draw $U \sim U(0,1)$
  \If{$U\le  
  \sum_{j=0}^N w^{j}\wedge \tilde{w}^{j}$} 
  \State 
  $\tilde{I}\gets I \sim \mathrm{Categorical}(w^{0:N}\wedge \tilde{w}^{0:N})$
  \Else
  \State 
  $I \sim \mathrm{Categorical}(w^{0:N} - w^{0:N}\wedge \tilde{w}^{0:N})$; $\tilde{I} \sim \mathrm{Categorical}(\tilde{w}^{0:N} - w^{0:N}\wedge \tilde{w}^{0:N})$
  \EndIf
  \State \textbf{output} ($I,\tilde{I}$)
  \end{algorithmic}
\end{algorithm}

The following result, which is a slight modification of \citep[Lemma 13]{lee-singh-vihola}, gives lower bounds for the coupling probabilities when using Algorithm \ref{alg:maxcouple-categoricals}.

\begin{lemma}
  \label{lem:ic-res} 
  Suppose $0<\omega_*\le \omega^*<\infty$ and
  $\omega^{i},\tilde{\omega}^{i}\in[\omega_*,\omega^*]$
  for $i=0{:}N$. Let
  \[
  \varepsilon = \frac{\omega_*}{\omega^*}
  \qquad\text{and}\qquad
  C \subset \{j\in\{0{:}N\}\given
            \omega^{j}=\tilde{\omega}^{j}\}.
  \]
  Then, 
  $$
  (I,\tilde{I}) \sim \textsc{MaxCouple}\big(\mathrm{Categorical}(\omega^{0:N}), \mathrm{Categorical}(\tilde{\omega}^{0:N})\big)
  $$
  satisfy the following:
  \begin{enumerate}[(i)~]
      \item \label{item:ic-simple}
        $\P(I=\tilde{I}=i) \ge
        \frac{\varepsilon}{N+1}$
        for all $i=0{:}N$,
      \item \label{item:ic-coupled}
        $\P(I=\tilde{I}\in C)
      \ge \frac{|C| \varepsilon}{|C|\varepsilon + N+1-|C|}$.
  \end{enumerate}
\end{lemma}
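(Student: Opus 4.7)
The plan is to work directly from the construction in Algorithm \ref{alg:maxcouple-categoricals}: on the ``coupled'' branch it holds that $\P(I=\tilde I=i) = w^i \wedge \tilde w^i$, where $w^{0:N}$ and $\tilde w^{0:N}$ are the normalised versions of $\omega^{0:N}$ and $\tilde\omega^{0:N}$. So both claims reduce to lower-bounding $w^i\wedge \tilde w^i$ using the two-sided bounds $\omega^i,\tilde\omega^i\in[\omega_*,\omega^*]$.

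For (\ref{item:ic-simple}), I would simply observe that $\sum_{j=0}^N \omega^j \le (N+1)\omega^*$ and $\omega^i \ge \omega_*$, which gives $w^i \ge \omega_*/((N+1)\omega^*) = \varepsilon/(N+1)$, and symmetrically for $\tilde w^i$. Taking the minimum yields the claim. This step is routine.

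For (\ref{item:ic-coupled}), the main structural observation is that on $C$ we have $\omega^i=\tilde\omega^i$, so it is convenient to decompose the sums into a ``common'' part and a ``distinct'' part. Set $A = \sum_{j\in C}\omega^j$ (which equals $\sum_{j\in C}\tilde\omega^j$), $B = \sum_{j\in C^c}\omega^j$, and $\tilde B = \sum_{j\in C^c}\tilde\omega^j$. For $i\in C$,
\[
w^i \wedge \tilde w^i \;=\; \omega^i \min\!\Bigl(\tfrac{1}{A+B},\tfrac{1}{A+\tilde B}\Bigr) \;=\; \tfrac{\omega^i}{A + \max(B,\tilde B)}.
\]
Summing over $i\in C$ gives $\P(I=\tilde I\in C) \ge A/(A + \max(B,\tilde B))$. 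Now I would apply the bounds $A \ge |C|\omega_*$ and $\max(B,\tilde B) \le (N+1-|C|)\omega^*$, and use that the function $a\mapsto a/(a+b)$ is increasing in $a$ and decreasing in $b$, to conclude
\[
\P(I=\tilde I\in C) \;\ge\; \frac{|C|\omega_*}{|C|\omega_* + (N+1-|C|)\omega^*} \;=\; \frac{|C|\varepsilon}{|C|\varepsilon + N+1-|C|},
\]
after dividing through by $\omega^*$.

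There is no real obstacle here: both parts are essentially bookkeeping around the explicit form of $\P(I=\tilde I=i)=w^i\wedge \tilde w^i$ given by Algorithm \ref{alg:maxcouple-categoricals}. The only care needed is in part (\ref{item:ic-coupled}), where one must exploit the equality $\omega^i=\tilde\omega^i$ on $C$ to write the common numerator $A$ and upper-bound the denominator by the larger of the two distinct-index mass totals; this yields the tighter bound $|C|\varepsilon/(|C|\varepsilon + N+1-|C|)$, which dominates $|C|\varepsilon/(N+1)$ that a naive application of (\ref{item:ic-simple}) would give.
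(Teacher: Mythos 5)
Your proposal is correct and follows essentially the same route as the paper: part (i) is the immediate bound $w^i\wedge\tilde w^i\ge \omega_*/((N+1)\omega^*)$, and part (ii) uses exactly the paper's identity $\sum_{j\in C}w^j\wedge\tilde w^j = A/(A+\max(B,\tilde B))$ (exploiting $\omega^j=\tilde\omega^j$ on $C$) followed by the monotonicity of $a\mapsto a/(a+b)$. No gaps.
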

\begin{proof} 
    Note that $\P(I=\tilde{I}=i) = w^{i}\wedge
    \tilde{w}^{i}$, where $w^i = \omega^i / \sum_{j=0}^N \omega_j$ and 
    $\tilde{w}^i = \tilde{\omega}^i / \sum_{j=0}^N \tilde{\omega}_j$ (cf.~Algorithm \ref{alg:maxcouple-categoricals} in Appendix \ref{app:maxcouple}), so the first bound is immediate.
    For the second, let $C^c =
    \{0{:}N\}\setminus C$, and observe that
    \begin{align*}
        \sum_{j\in C} w^{j} \wedge \tilde{w}^{j}
        &= \frac{\sum_{j\in C} \omega^{j}}{\sum_{i\in C} \omega^{i} +
          \big(\sum_{i\in C^c} \omega^{i}\big) \vee
          \big(\sum_{i\in C^c} \tilde{\omega}^{i}\big)}
        \\
        &\ge \frac{ |C| \omega_*
           }{
          |C| \omega_*  + |C^c| \omega^*},
    \end{align*}
    because $x\mapsto x(x+b)^{-1}$ is increasing for $x\ge 0$ for any $b>0$.
    The last bound equals \eqref{item:ic-coupled}.
\end{proof}

\section{Properties of CPF and CBPF transitions}
\label{app:cpf-cbpf}

Algorithm \ref{alg:cpf} summarises the original conditional particle filter introduced in \citep{andrieu-doucet-holenstein}. Note that marginalising $A_{t-1}^i$, lines \ref{line:ancestor-indices}--\ref{line:sampling-particles} of Algorithm \ref{alg:cpf} coincide with sampling from the mixture in line \ref{line:mixture-transition} of Algorithm \ref{alg:cbpf}. The only real difference is line \ref{line:ancestor-trace} of Algorithm \ref{alg:cpf} which implements `ancestor tracing' (AT) instead of backward sampling (BS) implemented in lines \ref{line:cbpf-backward-sample1}--\ref{line:cbpf-backward-sample2} of Algorithm \ref{alg:cbpf}.
\begin{algorithm}
  \caption{\textsc{CPF}($x_{1:T}^*,N$).}
  \label{alg:cpf}
  \begin{algorithmic}[1]
      \State $X_{1:T}^{0} \gets x_{1:T}^*$ 
      \State $X_1^{i}\sim M_1(\uarg)$ \Comment{for $i\in\{1{:}N\}$}
      \State $W_1^{i} \gets G_1(X_1^{i})$ \Comment{for $i\in\{0{:}N\}$}
      \For{$t=2,\ldots,T$}
      \State $A_{t-1}^{i} \sim \mathrm{Categorical}(W_{t-1}^{0:N})$ \label{line:ancestor-indices} \Comment{for $i\in\{1{:}N\}$}
      \State 
      $X_t^{i} \sim M_t(X_{t-1}^{A_{t-1}^i}, \uarg)$ \label{line:sampling-particles} \Comment{for $i\in\{1{:}N\}$}
      \State $W_t^{i} \gets G_t(X_t^{i})$  \Comment{for $i\in \{0{:}N\}$} 
      \EndFor
      \State $J_T^*  \sim \mathrm{Categorical}(W_T^{0:N})$
      \For{$t=T-1,T-2,\ldots,1$} 
      \State $J_t^* = A_t^{J_{t+1}^*} 1(J_{t+1}^* \neq 0)$ \label{line:ancestor-trace}
      \EndFor
      \State \textbf{output}
      $(X_1^{J_1^*},\ldots,X_T^{J_T^*})$
  \end{algorithmic}
\end{algorithm}

Let $P_N^\mathrm{BS}(x_{1:T}^*,\uarg)$ and $P_N^\mathrm{AT}(x_{1:T}^*,\uarg)$ stand for the Markov transition defined by Algorithms \ref{alg:cbpf} and \ref{alg:cpf}, respectively. Both $P_N^\mathrm{BS}$ and $P_N^\mathrm{AT}$ are reversible with respect to $\pi$ \citep[Proposition 9]{chopin-singh}, define positive operators in $L^2$, and are ordered with respect to lag-1 stationary autocorrelations \citep[Theorem 10]{chopin-singh}.

Let then $\check{P}_N^\mathrm{BS}(x_{1:T}^*; j_{1:T}, \uarg )$ and $\check{P}_N^\mathrm{AT}(x_{1:T}^*; j_{1:T}, \uarg )$ stand for the augmentations of $P_N^\mathrm{BS}$ and $P_N^\mathrm{AT}$ which include also the laws of $J_{1:T}$ and $J_{1:T}^*$, respectively; that is, with $\P$ standing for the randomness in Algorithm \ref{alg:cpf},
$$
\check{P}_N^\mathrm{AT}(x_{1:T}^*; j_{1:T}, A) 
= \P\big((X_1^{J_1^*},\ldots,X_T^{J_T^*})\in A, J_{1:T}^*=j_{1:T}\big),
$$
and similarly for $\check{P}_N^\mathrm{BS}$. Clearly, marginalising over $j_{1:T}$ yields the original transitions for any $x_{1:T}^*\in\X^T$:
$$
    \sum_{j_{1:T}\in \{0{:}N\}^T} \check{P}_N^\mathrm{AT}(x_{1:T}^*; j_{1:T}, A) = P_N^\mathrm{AT}(x_{1:T}^*, A),
$$
and similarly for $\check{P}_N^\mathrm{BS}$. 

Let then $F = \big\{j_{1:T}\in \{0{:}N\}^T\given j_1\neq 0,\ldots,j_T\neq 0\}$ stand for the events where none of the reference states is selected. In such an event, the algorithms coincide:
\begin{lemma}
  For all $x_{1:T}^*\in\X^T$, $j_{1:T} \in F$ and measurable $A\subset \X^T$:
\begin{equation}
   \check{P}_N^\mathrm{AT}(x_{1:T}^*; j_{1:T}, A) = 
   \check{P}_N^\mathrm{BS}(x_{1:T}^*; j_{1:T}, A).
   \label{eq:cpf-cbpf-correspondence}
\end{equation}
\end{lemma}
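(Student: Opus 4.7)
The plan is to couple the two algorithms on a common forward system and then compare the conditional laws of the backward indices on the event $F$. First I would observe that Algorithm \ref{alg:cbpf}'s line \ref{line:mixture-transition} is precisely the marginal of Algorithm \ref{alg:cpf}'s lines \ref{line:ancestor-indices}--\ref{line:sampling-particles} after integrating out $A_{t-1}^{1:N}$: if one draws $A_{t-1}^i \sim \mathrm{Categorical}(W_{t-1}^{0:N})$ and then $X_t^i \sim M_t(X_{t-1}^{A_{t-1}^i}, \uarg)$, the marginal law of $X_t^i$ is the mixture in Algorithm \ref{alg:cbpf}. Therefore, after a suitable coupling, the joint law of the forward particle system $\mathcal{F} := \sigma(X_{1:T}^{0:N}, W_{1:T}^{0:N})$ is identical under both algorithms, and it suffices to show that the conditional law of $J_{1:T}^*$ given $\mathcal{F}$ agrees with that of $J_{1:T}$ given $\mathcal{F}$ on the event $F$.

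At time $T$, both $J_T$ and $J_T^*$ are drawn from $\mathrm{Categorical}(W_T^{0:N})$, so the base case is immediate. For $t < T$, I would use a Bayes-style computation on the joint density of $(A_{t-1}^{1:N}, X_t^{1:N})$ given everything at time $t-1$, which factorises as $\prod_{i=1}^N \frac{W_{t-1}^{a^i}}{\sum_k W_{t-1}^k}\, M_t(X_{t-1}^{a^i}, x_t^i)$. Conditioning on $X_t^{1:N}$ (equivalently on $\mathcal{F}$), this shows that the ancestor indices $\{A_t^j : j\in\{1{:}N\},\, t\in\{1{:}T-1\}\}$ are mutually conditionally independent given $\mathcal{F}$, with
\[
   \P(A_t^j = i \mid \mathcal{F}) = \frac{W_t^i M_{t+1}(X_t^i, X_{t+1}^j)}{\sum_k W_t^k M_{t+1}(X_t^k, X_{t+1}^j)}, \qquad i \in \{0{:}N\}.
\]
This is exactly $\mathrm{Categorical}(B_t^{0:N})$ with $B_t^i$ as in line \ref{line:cbpf-backward-sample1} of Algorithm \ref{alg:cbpf}.

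To assemble the full joint identity, I would argue recursively on $F$: on this event line \ref{line:ancestor-trace} gives $J_t^* = A_t^{J_{t+1}^*}$ for every $t < T$, so that given $\mathcal{F}$ and $J_{t+1}^* = j_{t+1}$ with $j_{t+1} \neq 0$, the index $J_t^*$ follows $\mathrm{Categorical}(B_t^{0:N})$, which matches the conditional law of $J_t$ given $J_{t+1} = j_{t+1}$ in Algorithm \ref{alg:cbpf}. Because the ancestors $A_t^{j_{t+1}}$ used at different times $t$ belong to different generations and are therefore conditionally independent given $\mathcal{F}$, the Markov structure of the backward sweep in AT is identical to that of BS on $F$. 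Multiplying the conditional probabilities along the backward chain yields the desired identity of the augmented kernels restricted to $F$.

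The only real subtlety, and the place I expect to be most careful, is to ensure that the conditional independence across $t$ of the used ancestors $A_t^{J_{t+1}^*}$ is correctly invoked; the indices $J_{t+1}^*$ themselves are $\mathcal{F}$-measurable functions of later ancestors, so one must justify that the ancestors used along the traced path do not overlap and that the Bayesian factorisation above can therefore be combined into a product over $t$. All other steps are routine.
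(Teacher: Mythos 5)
Your proposal is correct and rests on the same computation as the paper's proof: the paper writes the joint density of the particles, ancestors and traced indices and marginalises the ancestor variables to exhibit the backward-sampling weights $G_{t-1}(x_{t-1}^{j_{t-1}})M_t(x_{t-1}^{j_{t-1}},x_t^{j_t})/\sum_l G_{t-1}(x_{t-1}^l)M_t(x_{t-1}^l,x_t^{j_t})$, which is exactly your statement that the conditional law of each $A_t^j$ given the particle system is $\mathrm{Categorical}(B_t^{0:N})$. The subtlety you flag resolves as you anticipate, since $\{J_{t+1:T}^*=j_{t+1:T}\}$ is determined by $J_T^*$ and the ancestors $A_s^{j_{s+1}}$ for $s>t$ only, so by the conditional independence of the ancestors given the particles, $A_t^{j_{t+1}}$ retains its categorical law after this conditioning and the backward trace is a Markov chain with the backward-sampling transitions.
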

\begin{proof}
We may write the joint density of $X_{1:T}^{1:N}$ and $A_{1:T-1}^{1:N}$ (the latter wrt. counting measure on $\{0{:}N\}$) as follows:
$$
\mu(x_{1:T}^{1:N}, a_{1:T-1}^{1:N}) = 
M_{1}^{\otimes N}(x_{1}^{1:N})
\prod_{t=2}^{T-1}\left(\prod_{i=1}^{N}\frac{G_{t-1}(x_{t-1}^{a_{t-1}^{i}})}{\sum_{j=0}^{N}G_{t-1}(x_{t-1}^{j})}M_{t}(x_{t-1}^{a_{t-1}^{i}},x_{t}^{i})\right),
$$
and the joint density of $X_{1:T}^{1:N}$, $A_{1:T-1}^{1:N}$ and $J_{1:T}^*$ is
$$
\mu(x_{1:T}^{1:N}, a_{1:T-1}^{1:N}) 
\bigg( \frac{G_{T}(x_{T}^{j_{T}^{\ast}})}{\sum_{j=0}^{N}G_{T}(x_{T}^{j})} \bigg)
\prod_{t=1}^{T-1} 
a_t^{j_{t+1}^*} 1(j_{t+1}^* \neq 0).
$$
If $j_{1:T}^*\in F$, then the indicator is always one, and then marginalising $a_{T-1}^{1:N}$, $a_{T-2}^{1:N}$, \ldots, $a_1^{1:N}$ leads to 
the probability $\check{P}_N^\mathrm{AT}(x_{1:T}^*; j_{1:T}^*, A)$ being:
$$
\int 
\Bigg[
  \prod_{i=1}^{N} 
M_1(x_1^i)
\Bigg]
\Bigg[
  \prod_{t=2}^T 
  C_t(j_{t-1:t}^*, x_{t-1:t}^{1:N})
\Bigg] 
\bigg( \frac{G_{T}(x_{T}^{j_{T}^*})}{\sum_{j=0}^{N}G_{T}(x_{T}^{j})} \bigg) 
1(x_{1:T}^{j_{1:T}^*} \in A),
$$
where the product terms may be written as follows:
\begin{align*}
C_t(j_{t-1:t}^*, x_{t-1:t}^{1:N}) 
&= \frac{G_{t-1}(x_{t-1}^{j_{t-1}^*})}{\sum_{j=0}^{N}G_{t-1}(x_{t-1}^{j})}M_{t}(x_{t-1}^{j_{t-1}^*},x_{t}^{j_{t}^{\ast}})
\prod_{\substack{i=1 \\ i\neq j_{t}^{\ast}}}^{N}\Phi_{t}^{x_{t-1}^{\ast}}\left(\frac{1}{N}\sum_{i=1}^{N}\delta_{x_{t-1}^{i}}\right)(x_{t}^{i}) \\
&=\frac{G_{t-1}(x_{t-1}^{j_{t-1}^*})M_{t}(x_{t-1}^{j_{t-1}^*},x_{t}^{j_{t}^{\ast}})}{\sum_{l=0}^{N}G_{t-1}(x_{t-1}^{l})M_{t}(x_{t-1}^{l},x_{t}^{j_{t}^{\ast}})}\prod_{i=1}^{N}\Phi_{t}^{x_{t-1}^{\ast}}\left(\frac{1}{N}\sum_{i=1}^{N}\delta_{x_{t-1}^{i}}\right)(x_{t}^{i}).
\end{align*}
We conclude that
\begin{align*}
\check{P}_N^\mathrm{AT}(x_{1:T}^*; j_{1:T}^*, A) &= \int 
\Bigg[M_{1}^{\otimes N}(x_{1}^{1:N})\prod_{t=2}^{T-1}\mathbf{M}_{t}^{x_{t-1}^{\ast}}(x_{t-1}^{1:N},x_{t}^{1:N})\Bigg] \times \\
&\Bigg[\prod_{t=2}^{T-1}\frac{G_{t-1}(x_{t-1}^{j_{t-1}^{\ast}})M_{t}(x_{t-1}^{j_{t-1}^{\ast}},x_{t}^{j_{t}^{\ast}})}{\sum_{l=0}^{N}G_{t-1}(x_{t-1}^{l})M_{t}(x_{t-1}^{l},x_{t}^{j_{t}^{\ast}})}\Bigg]
\frac{G_{T}(x_{T}^{j_{T}^{\ast}})}{\sum_{j=0}^{N}G_{T}(x_{T}^{j})}
1(x_{1:T}^{j_{1:T}^*} \in A).
\end{align*}
It is direct to verify that this coincides with $\check{P}_N^\mathrm{BS}(x_{1:T}^*; j_{1:T}^*, A)$.
\end{proof}
Now, \citep[Corollary 14]{andrieu-lee-vihola} implies that (the result is stated for $P_N^\mathrm{AT}$, but proofs of \citep[Propositions 6 \& 13]{andrieu-lee-vihola} reveal that the bound below holds):
\begin{equation}
\sum_{j_{1:T}\in F} \check{P}_N^\mathrm{AT}(x_{1:T}^*; j_{1:T}, A) \ge \epsilon_{N,T} \pi(A),
\label{eq:minorisation}
\end{equation}
where
$$
\epsilon_{N,T} = \bigg( \frac{1 - \frac{1}{N+1}}{1 + \frac{2(\alpha-1)}{N+1} }\bigg)^T = 
\bigg( 1 + \frac{2\alpha -1}{N}\bigg)^{-T} \ge \exp\bigg( -T\frac{2\alpha - 1}{N}\bigg),
$$
where $\alpha\in(0,\infty)$ depends only on \ref{a:strong-mixing} and the last inequality follows by $\log(1+x) \le x$ which holds for all $x\ge 0$.

Thanks to \eqref{eq:cpf-cbpf-correspondence} and \eqref{eq:minorisation}, we have that:
$$
   \tv{ P_N^\mathrm{BS}(x_{1:T}^*, \uarg) - \pi } \le 1 - \epsilon_{N,T} \le 1 - e^{-T C/N},
$$
where $C=2\alpha - 1$.

\section{Coupling CBPFs with pairwise potentials}
\label{app:pairpotcoup}

The CBPF (Algorithm \ref{alg:cbpf}) can, be generalised to pairwise potentials of the form $G_t(x_{t-1},x_t)$ in a number of ways. Two approaches in the literature \citep{lee-singh-vihola,lindsten-jordan-schon} rely on sampling ancestor indices, and their couplings are therefore based on index couplings.  The key property behind the couplings IMC and JMC discussed in this paper is to avoid such index coupling entirely.

It therefore seems desirable to seek for an extension of Algorithm \ref{alg:cbpf} for pairwise potentials, avoiding index coupling. This can be achieved by modifying Algorithm \ref{alg:cbpf} slightly, replacing 
lines \ref{line:cbpf-forward-end} and \ref{line:cbpf-backward-sample1} with the following:
\begin{align}
W_t^{i} &\gets \sum_{k=0}^N \frac{W_{t-1}^k M_t(X_{t-1}^k, X_t^i ) G_t(X_{t-1}^k, X_t^{i})}{\sum_{\ell=0}^N W_{t-1}^\ell M_t(X_{t-1}^\ell, X_t^i )} , & \text{for }i \in \{0{:}N\} \label{eq:marginal-weights}\\
B_t^{i} &\gets W_t^{i} M_{t+1}(X_t^{i}, X_{t+1}^{J_{t+1}})G_{t+1}(X_t^i,X_{t+1}^{J_{t+1}}), 
& \text{for }i \in \{0{:}N\},\label{eq:marginal-backward-weights}
\end{align}
respectively. This algorithm is a conditional version of the marginal particle filter \citep{klaas-defreitas-doucet}, and we call this modification of Algorithm \ref{alg:cbpf} as $\textsc{MarginalCBPF}(x_{1:T}^*,N)$.

Our coupling Algorithm \ref{alg:mc_cbpf} is direct to adapt to this generalisation: we need to simply use \ref{eq:marginal-weights} and \ref{eq:marginal-backward-weights} in place of weight updates in lines \ref{line:coupled-weights} and \ref{line:coupled-backward-weights} of Algorithm \ref{alg:mc_cbpf}, respectively. The computational complexity of \ref{eq:marginal-weights} is $O(N^2)$ similar to IMC and JMC.

$\textsc{MarginalCBPF}$ has not appeared in the literature before, and it might not be clear that it is still defines a valid Markov update like CBPF. We therefore conclude with a proof that it is indeed reversible with respect to the desired target distribution:
\begin{proposition}
  \sloppy
  $\mathbf{S}_k \gets \textsc{MarginalCBPF}(\mathbf{S}_{k-1}, N)$
  with any $N\ge 1$ defines a Markov transition reversible with respect to probability density $\pi(x_{1:T}) = Z^{-1} M_{1}(x_{1})G_{1}(x_{1})\prod_{t=2}^{T}G_{t}(x_{t-1},x_{t})M_{t}(x_{t-1},x_{t})$, where $Z>0$ is a normalising constant.
\end{proposition}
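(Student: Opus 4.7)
The plan is to adapt the extended-target, particle-Gibbs argument of \citep[Proposition~9]{chopin-singh} to the present pairwise-potential, Rao--Blackwellised setting. On $\X^{T(N+1)}\times\{0{:}N\}^T$ I would introduce the extended target
\begin{equation*}
\bar\pi(\mathbf{x}_{1:T}^{0:N}, j_{1:T}) \;\propto\; \frac{\pi(\mathbf{x}_{1:T}^{j_{1:T}})}{(N+1)^T} \prod_{t=1}^T \prod_{\substack{i=0\\ i\neq j_t}}^N \zeta_t^N(\mathbf{x}_t^i),
\end{equation*}
where $\zeta_1^N := M_1$ and, for $t\ge 2$, $\zeta_t^N$ is the one-step mixture predictive density associated with the particle system via the marginal weights \eqref{eq:marginal-weights} (as in \eqref{eq:filter-state-predictive}). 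Because $\zeta_t^N(\uarg)$ is a probability density depending only on $\mathbf{x}_{1:t-1}^{0:N}$, a backward telescoping integration (first over $(\mathbf{x}_T^i)_{i\neq j_T}$, then over $(\mathbf{x}_{T-1}^i)_{i\neq j_{T-1}}$, and so on) shows that the non-reference factor integrates to $1$; hence $\bar\pi$ is a probability density whose $\mathbf{x}_{1:T}^{j_{1:T}}$-marginal equals $\pi$. A further short check, exploiting that $\zeta_t^N$ is symmetric in the particle labels at the preceding time, shows that $\bar\pi(\mathbf{x})$ is invariant under arbitrary relabellings of the particle indices $0{:}N$.

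Next, I would identify one iteration of \textsc{MarginalCBPF} as a deterministic two-block Gibbs scan on $\bar\pi$. Placing the input reference $\mathbf{x}_{1:T}^*$ at label $0$, the forward pass obviously samples $(\mathbf{x}_t^i)_{i\neq 0}$ from $\bar\pi(\,\uarg\mid\mathbf{x}_{1:T}^0, j_{1:T}=\mathbf 0) = \prod_{t,i\neq 0}\zeta_t^N(\mathbf{x}_t^i)$. For the backward block, I would compute the joint probability of $J_{1:T}=j_{1:T}$ produced by the modified weights \eqref{eq:marginal-backward-weights} and exploit the identity
\begin{equation*}
\sum_k W_t^k M_{t+1}(\mathbf{x}_t^k, \mathbf{x}_{t+1}^{j_{t+1}}) G_{t+1}(\mathbf{x}_t^k,\mathbf{x}_{t+1}^{j_{t+1}}) \;=\; W_{t+1}^{j_{t+1}} \Bigl(\sum_\ell W_t^\ell\Bigr) \zeta_{t+1}^N(\mathbf{x}_{t+1}^{j_{t+1}}),
\end{equation*}
which is a direct consequence of the definition \eqref{eq:marginal-weights}. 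Substituting this into the backward-sampling joint probability produces a telescoping cancellation of the factors $W_t^{j_t}/W_{t+1}^{j_{t+1}}$, leaving the $j$-dependent part proportional to $G_1(\mathbf{x}_1^{j_1})\prod_{t=2}^T\{M_t(\mathbf{x}_{t-1}^{j_{t-1}},\mathbf{x}_t^{j_t}) G_t(\mathbf{x}_{t-1}^{j_{t-1}},\mathbf{x}_t^{j_t}) / \zeta_t^N(\mathbf{x}_t^{j_t})\}$, which is exactly the Gibbs conditional $\bar\pi(j_{1:T}\mid\mathbf{x}_{1:T}^{0:N})$.

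Given the two Gibbs identifications, reversibility of the induced marginal kernel follows from the Chopin--Singh symmetry argument: write $\pi(\ud\mathbf{y}^*) P_N(\mathbf{y}^*,\ud\mathbf{y})$ as an integral over the extended space weighted by $\bar\pi(\mathbf{x})\bar\pi(0\mid\mathbf{x})\bar\pi(j\mid\mathbf{x})$, which is symmetric in the pair $(0,j)$; then swap the labels $0$ and $j$ via a permutation change of variable, using the invariance of $\bar\pi(\mathbf{x})$ and the equivariance of $\bar\pi(\,\uarg\mid\mathbf{x})$, to recognise the result as $\pi(\ud\mathbf{y}) P_N(\mathbf{y},\ud\mathbf{y}^*)$. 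I expect the main obstacle to be the backward-block identification: the self-referential dependence of $W_t^i$ on $\mathbf{x}_{1:t}^{0:N}$ makes the bookkeeping more delicate than for standard CBPF, and it is precisely the Rao--Blackwellised form \eqref{eq:marginal-weights} of the weights that makes the above telescoping identity hold.
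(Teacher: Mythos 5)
Your proposal is correct and follows essentially the same route as the paper's proof: the same extended target $\bar\pi$ on paths and indices (your $\zeta_t^N$ factors are exactly the paper's mixture terms, and your key identity $\sum_k W_t^k M_{t+1}G_{t+1} = W_{t+1}^{j_{t+1}}(\sum_\ell W_t^\ell)\zeta_{t+1}^N(\mathbf{x}_{t+1}^{j_{t+1}})$ is precisely the paper's observation that this sum equals $U_{t+1}^{j_{t+1}}$ in its $W_t^i=U_t^i/D_t^i$ bookkeeping), leading to the same telescoping cancellation and the identification of the backward selection as the conditional $\bar\pi(j_{1:T}\mid \mathbf{x}_{1:T}^{0:N})$. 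The only difference is that you spell out the final label-swap reversibility step, which the paper leaves implicit by appeal to the standard particle Gibbs argument.
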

\begin{proof}
Following the approach of \cite{andrieu-doucet-holenstein}, let us write the extended distribution 
$$
  \bar{\pi}(j_{1:T},x_{1:T}^{0:N})=\frac{1}{(N+1)^{T}}\pi(x_{1:T}^{j_{1:T}})\prod_{i\neq j_{1}}M_{1}(x_{1}^{i})\prod_{t=2}^{T}\prod_{\substack{i=0 \\i\neq j_{t}}}^{N}\frac{\sum_{k=1}^{N}W_{t-1}^{k}M_{t}(x_{t-1}^{k},x_{t}^{i})}{\sum_{\ell=0}^{N}W_{t-1}^{\ell}},
$$
which is the density of sampling $J_{1:T}$ uniformly on $\{0{:}N\}^{T}$, setting $X_{1:T}^{J_{1:T}} = X_{1:T}^{*}\sim\pi$ and then running 
lines \ref{line:cbpf-forward-start}--\ref{line:cbpf-forward-end} of Algorithm \ref{alg:cbpf}, with line \ref{line:cbpf-forward-end} replaced by \eqref{eq:marginal-weights}, and with $i\in \{0{:}N\}\setminus j_1$ in line \ref{line:cbpf-forward-start} and $i\in \{0{:}N\}\setminus j_t$ in \eqref{eq:marginal-weights}.

We will show that the selection step of \textsc{MarginalCBPF}, given particles $x_{1:T}^{0:N}$, corresponds to sampling $J_{1:T}\sim\bar{\pi}(\cdot\mid x_{1:T}^{1:N})$. We write $W_t^i = U_{t}^{i}/D_{t}^{i}$ in terms of the numerator and the denominator in \eqref{eq:marginal-weights},
and similarly let $W_{1}^{i}=U_{1}^{i}/D_{1}^{i}$, where $U_{1}^{i}=M_{1}(x_{1}^{i})G_{1}(x_{1}^{i})$ and $D_{1}^{i}=M_{1}(x_{1}^{i})$.
  
We have $B_{T}^{i}=W_{T}^{i}$ and given the selected index $j_{t+1}$,
$$
  B_{t}^{i}=W_{t}^{i}M_{t+1}(x_{t}^{i},x_{t+1}^{j_{t+1}})G_{t+1}(x_{t}^{i},x_{t+1}^{j_{t+1}}),
$$
so that (omitting the bounds of sum $k=0,\ldots,N$ for brevity):
$$
  \frac{B_{t}^{j_{t}}}{\sum_{k}B_{t}^{k}}
  =\frac{W_{t}^{j_{t}}M_{t+1}(x_{t}^{j_{t}},x_{t+1}^{j_{t+1}})G_{t+1}(x_{t}^{j_{t}},x_{t+1}^{j_{t+1}})}{
    \sum_{k}W_{t}^{k}M_{t+1}(x_{t}^{k},x_{t+1}^{j_{t+1}})G_{t+1}(x_{t}^{k},x_{t+1}^{j_{t+1}})}
$$
where we observe that the denominator on the right equals $U_{t+1}^{j_{t+1}}$. Hence, we have
\begin{align*}
  \frac{B_{T}^{j_{T}}}{\sum_{k}B_{T}^{k}}\prod_{t=1}^{T-1}\frac{B_{t}^{j_{t}}}{\sum_{k}B_{t}^{k}} 
  & = \frac{1}{\sum_{k}W_{T}^{k}}\frac{U_{T}^{j_{T}}}{D_{T}^{j_{T}}}
    \prod_{t=1}^{T-1}\frac{U_{t}^{j_{t}}}{D_{t}^{j_{t}}}\frac{M_{t+1}(x_{t}^{j_{t}},x_{t+1}^{j_{t+1}})G_{t+1}(x_{t}^{j_{t}},x_{t+1}^{j_{t+1}})}{U_{t+1}^{j_{t+1}}}.
\end{align*}
The $U^{j_t}_t$ terms cancel, except for $U^{j_1}_1 = M_{1}(x_{1}^{j_{1}})G_{1}(x_{1}^{j_{1}})$, so we conclude that
\begin{align*}
  \frac{B_{T}^{j_{T}}}{\sum_{k}B_{T}^{k}}\prod_{t=1}^{T-1}\frac{B_{t}^{j_{t}}}{\sum_{k}B_{t}^{k}} 
   & =\frac{1}{\sum_{k}W_{T}^{k}}\frac{M_{1}(x_{1}^{j_{1}})G_{1}(x_{1}^{j_{1}})
   \prod_{t=2}^{T}M_{t}(x_{t-1}^{j_{t-1}},x_{t}^{j_{t}})G_{t}(x_{t-1}^{j_{t-1}},x_{t}^{j_{t}})}{\prod_{t=1}^{T}D_{t}^{j_{t}}}.
\end{align*}
Let us denote the selection probability above by $S(j_{1:T}\mid x_{1:T}^{0:N})$, then we find
\begin{align*}
  \frac{\bar{\pi}(j_{1:T},x_{1:T}^{0:N})}{S(j_{1:T}\mid x_{1:T}^{0:N})} 
   & =\frac{\frac{1}{(N+1)^{T}}\frac{1}{Z}\left\{ \prod_{t=1}^{T}\prod_{i\neq j_{t}}\frac{\sum_{k}W_{t-1}^{k}M_{t}(x_{t-1}^{k},x_{t}^{i})}{\sum_{\ell}W_{t-1}^{\ell}}\right\} }{\frac{1}{\sum_{k}W_{T}^{k}}\frac{1}{\prod_{t=1}^{T}D_{t}^{j_{t}}}}\\
   & =\frac{1}{Z}\prod_{t=1}^{T}\left\{ \frac{1}{N}\sum_{k}W_{t}^{k}\right\} \prod_{i}\frac{\sum_{k}W_{t-1}^{k}M_{t}(x_{t-1}^{k},x_{t}^{i})}{\sum_{\ell} W_{t-1}^{\ell}},
  \end{align*}
  where we have used the definition of $D_t^{j_t}$.
  Since the ratio is a constant (i.e.~does not depend on $j_{1:T}$),
  we may conclude that $S(j_{1:T}\mid x_{1:T}^{0:N})=\bar{\pi}(j_{1:T}\mid x_{1:T}^{0:N})$.
  That is, the selection of the indices backwards corresponds to sampling
  from the conditional distribution for $J_{1:T}$ given the particle
  system $X_{1:T}^{1:N}$.  
\end{proof}

\section{Further experimental results}
\label{app:extraexperiments}

Figure \ref{fig:lgIterations} shows the mean number of iterations until coupling for the linear-Gaussian model.
\begin{figure}
  \includegraphics[scale=0.9]{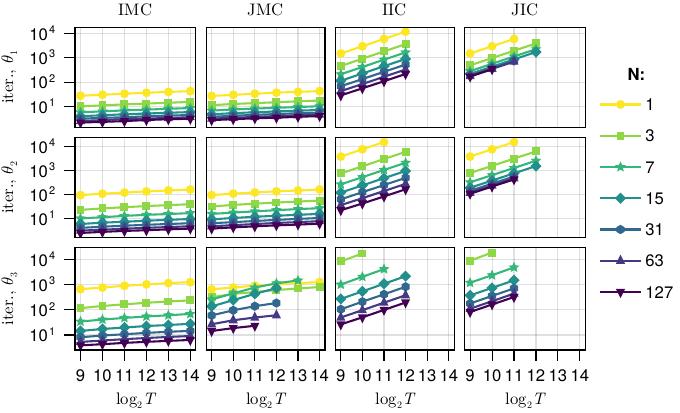}
\caption{Average coupling times in Algorithm \ref{alg:unbiased} (using IMC, JMC, IIC and JIC) for the linear-Gaussian model (Section \ref{sec:lgmodel}) with parameters $\theta_1$, $\theta_2$ and $\theta_3$. The experiments that failed to complete within the time limit of 8 hours are omitted from the graphs.}
  \label{fig:lgIterations}
\end{figure}

\end{document}